\documentclass[journal]{IEEEtran}
\usepackage[dvips]{graphicx}
\usepackage{epsfig}
\usepackage{amsmath}
\usepackage{epsfig}
\usepackage{array}
\usepackage{amssymb}
\usepackage{color}
\newtheorem{Proposition}{\textbf{Proposition}}

\newtheorem{Lemma}{\textbf{Lemma}}

\usepackage{subcaption}
\def\BibTeX{{\rm B\kern-.05em{\sc i\kern-.025em b}\kern-.08em
    T\kern-.1667em\lower.7ex\hbox{E}\kern-.125emX}}
\begin{document}

\title{Investigation of STEEP for Secure Communications Over SIMO and MISO Channels Subject to Full-Duplex Jamming and Eavesdropping}

\author{Md Saydur Rahman,~\IEEEmembership{Member,~IEEE}, and Yingbo Hua,~\IEEEmembership{Life Fellow,~IEEE}\thanks{Department of Electrical and Computer Engineering, University of California at Riverside,
Riverside, CA, 92521, USA. Emails: mrahm054@ucr.edu,
 yhua@ece.ucr.edu (Corresponding Author).  This work was supported in part by the Department of Defense under W911NF-20-2-0267. The views and conclusions contained in this
document are those of the authors and should not be interpreted as representing the official policies, either
expressed or implied, of  the U.S. Government. The U.S. Government is
authorized to reproduce and distribute reprints for Government purposes notwithstanding any copyright
notation herein.}}

\maketitle

\begin{abstract}
Secure communications over SIMO and MISO channels between wireless nodes are commonly encountered in applications and widely considered in the literature. This paper investigates such a problem by considering a newly proposed scheme called secrecy-message transmission by echoing encrypted probes (STEEP). We focus on a type of heavily eavesdropped channels where an aggressive eavesdropper (Eve)  has multiple antennas and is capable of jamming and eavesdropping in the full-duplex mode. Assuming optimal jamming and eavesdropping by Eve, we present the secrecy rates achievable by STEEP in two forms: using SIMO channel in phase 1 and using MISO channel in phase 2, and vice versa. With any given channel condition between every pair of  nodes and any finite jamming power by Eve, the secrecy rate of STEEP in either form is shown to be positive subject to a positive power in phase 1 and a sufficiently large power in phase 2. The connections between STEEP in either form and the conventional SIMO and MISO schemes are provided, and their secrecy rates are compared comprehensively subject to random fading channels and finite transmit powers.
\end{abstract}

\begin{IEEEkeywords}
Secure communications, physical layer security, secret-message transmission, heavily eavesdropped channels.
\end{IEEEkeywords}

\section{Introduction}
In recent years, physical layer security has received tremendous interests in various wireless network settings such as those involving
intelligent reflecting surfaces (IRS) \cite{Thien2022, Zhang2024, Wang2021, Cao2023}, multi-antenna relays \cite{Lv2022, Huang2021, Cao2021, Qian2023}, LEO and GEO satellites \cite{Talgat2024, Jiang2023, ZishengYin2022, LeiXu2024}, and unmanned aerial vehicles (UAVs) \cite{Tingtingli2021,huiciwu2022, YalinZhang2025, huiciwu2020, DanyuDiao2022, GaofengPan2020, ChenxiLiu_2019,chaohan2022,tatarmamaghani2021,HongjiangLei2020,Abdalla2023}.

A primary objective of all of these works is to maintain or maximize a positive secrecy rate from one legitimate node (Alice) to another (Bob). In most of these cases, the eavesdropper (Eve) is assumed to be a single-antenna receiver while one of the legitimate nodes or their helper (such as IRS or relay) may offer a higher spatial diversity. In all of these cases, the overall channel from Alice to Bob is assumed to be stronger than that from Alice to Eve, or otherwise a positive secrecy rate is not achievable by any of those schemes.

However, a recently proposed scheme called secret-message transmission by echoing encrypted probes (STEEP) \cite{Hua_STEEP_2025}, \cite{Hua_Rahman_2024_ICC}, \cite{Hua_Rahman_Swami_2024}, \cite{Rahman_Hua_2024} can achieve a positive secrecy rate from Alice to Bob, or vice versa, without the need for user's receive channels to be stronger than Eve's receive channels. Such a positive secrecy rate is achieved even if the number of antennas on Eve is larger than that on any of the users and even if there is no additional helping node. A comprehensive analysis of STEEP for MIMO channels is available in \cite{Hua_STEEP_2025}, and some of the numerical results on STEEP for MISO channels have been reported in \cite{Hua_Rahman_2024_ICC}, \cite{Hua_Rahman_Swami_2024}, \cite{Rahman_Hua_2024}. As shown in \cite{Hua_STEEP_2025}, STEEP is an important extension from an encryption lemma \cite{Bloch2011}, evolved from \cite{Hua2023}, and is  much more general than a binary signaling (BS) scheme shown in \cite{Hayashi_2020}.
 The BS scheme is the most straightforward application of the encryption lemma, but corresponds to the worst case of a phase-shift-keying (PSK) based STEEP as shown in \cite{Rafique_Hua_2025}. For STEEP over Gaussian channels, Gaussian signaling is known to be optimal at asymmetric large powers \cite{Hua_STEEP_2025}.

The goal of this paper is to present further results on STEEP for the cases where the channel from one user to another is either SIMO or MISO. This is a common setting considered in many of the  prior works such as \cite{Thien2022}-\cite{Abdalla2023}. Indeed, most of the IoT devices are of a single antenna, but larger platforms such as UAVs may have multiple antennas. Specifically, we assume that Alice has multiple antennas while Bob has a single antenna. We also assume that Eve has multiple antennas and could perform jamming and eavesdropping in full-duplex. This is in contrast to most prior works where Eve has a single antenna and is completely passive.  We will compare the performance of STEEP with that of the conventional wiretap channel scheme subject to the same power consumptions from both users. Although the system model considered in this paper is somewhat similar to that in \cite{ChenxiLiu_2019}, the scheme considered here is much more robust against Eve's channel strength. This work is a substantially expanded version of our prior work in \cite{Rahman_Hua_2024}.

Although the theory and methods of STEEP for Gaussian MIMO channels shown in \cite{Hua_STEEP_2025} apply directly to SIMO and MISO channels, that work does not make any explicit use of artificial noise. Furthermore, the optimal jamming from Eve was not considered in \cite{Hua_STEEP_2025}. Consequently, most of the results in this paper cannot be inferred from those in \cite{Hua_STEEP_2025} or other related prior works \cite{Hua_Rahman_2024_ICC}-\cite{Rahman_Hua_2024}. Although a high level conclusion in regard to achieving a positive secrecy rate subject to a large phase-2 power is consistent, a separate analytical treatment has to be done to establish the claim.

More specifically, the novel contributions in this paper include the following:
\begin{enumerate}
  \item Detailed analytical results of STEEP for secure communications over SIMO and MISO channels subject to full-duplex jamming and eavesdropping from a multi-antenna Eve are shown for the first time (see sections \ref{sec:SIMO-MISO-STEEP} and \ref{sec:MISO-SIMO-STEEP});
  \item Optimal jamming from the multi-antenna Eve against both single-antenna user and  multi-antenna user  is incorporated for the first time into the analysis and results (see section \ref{sec:jamming} for jamming from Eve to Alice);
  \item It is shown that under virtually any channel conditions at users and Eve, including jamming from full-duplex Eve, the secrecy rate of either SIMO-MISO-STEEP or MISO-SIMO-STEEP can be made positive by  choosing a positive power in phase 1 and a sufficiently large power in phase 2. This is in contrast to the zero secrecy rate of the conventional SIMO and MISO schemes within a large space of channel conditions at users and Eve.
  \item Extensive simulation results  are for the first time provided to illustrate the secrecy-rate advantages of  the two forms of STEEP over the conventional SIMO and MISO schemes widely adopted in the literature (see section \ref{sec:Simulation}).
\end{enumerate}

In section \ref{sec:system_model}, the system model with a multi-antenna Alice, a single-antenna Bob and a multi-antenna Eve is described. Two forms of STEEP are defined at a high level, which are  SIMO-MISO-STEEP and MISO-SIMO-STEEP.

In section \ref{sec:SIMO_conv}, the conventional SIMO scheme is shown. In section \ref{sec:SIMO-MISO-STEEP}, the SIMO-MISO-STEEP scheme is introduced and analyzed. Together, these two sections provide a strong connection between the conventional SIMO scheme and the SIMO-MISO-STEEP scheme. But the former is meant to transmit a secret message from Bob to Alice while the latter is meant to transmit a secret message from Alice to Bob.

In section \ref{sec:MISO_conv}, the conventional MISO scheme is discussed. In section \ref{sec:MISO-SIMO-STEEP}, the MISO-SIMO-STEEP scheme is then introduced and analyzed. A strong connection between these two schemes becomes obvious there. Again, these two schemes are meant to transmit secret messages in opposite directions.

In section \ref{sec:complexity}, the major steps in the two forms of STEEP are highlighted so that their complexity becomes clear. Other issues of importance are also discussed in that section.

In section \ref{sec:Simulation}, the statistical behaviors of the secrecy rates  of SIMO-MISO-STEEP and MISO-SIMO-STEEP, subject to random fading channels and finite transmit powers,  are illustrated. They are also compared with the secrecy rates of the conventional SIMO and MISO schemes.

Section \ref{sec:conclusion} concludes the paper.

\section{System Model and Preliminaries}\label{sec:system_model}
We assume that Alice has $n_A\geq 1$ antennas, Bob has a single antenna (or $n_B=1$), and Eve has $n_E\geq 1$ antennas used in full-duplex (FD) mode for jamming and eavesdropping.

A practical FD radio suffers from a residual self-interference (SI) with power $p_{\texttt{SI}}$. Despite any methods for SI cancellation, the resulting $p_{\texttt{SI}}$ is proportional to its transmitted jamming power $P_E$, i.e., $p_{\texttt{SI}}=\rho_{\texttt{SI}} P_E$, with $\rho_{\texttt{SI}}$ being typically between $-30$dB and $-120$dB. An effective jamming power received by Alice or Bob is $p_E=\frac{P_E}{D^\alpha_p}$, with $D$ being the corresponding transmission distance and $\alpha_p\geq 2$ being a path loss exponent. For FD Eve to be able to eavesdrop on Alice or Bob, $p_{\texttt{SI}}$ must be limited and hence so must be $P_E$ and $p_E$. On the other hand, the amount of power transmitted by Alice or Bob does not suffer from the limitation of SI. In this paper, we will assume a finite $p_E$ and ignore $p_{\texttt{SI}}$, the latter of which, despite whether $p_{\texttt{SI}}$ is larger  or smaller than the noise variance at Eve, only leads to a conservative (or good) measure of secrecy rates against Eve.  Also note that  all power values to be referred to from now on are normalized by transmission distances.


When Alice transmits, the channel between the users is MISO with its channel vector denoted by $\mathbf{h}_{B,A}\in\mathbb{C}^{n_A\times 1}$, and the channel from Alice to Eve is MIMO with its channel matrix denoted by $\mathbf{H}_{E,A}\in\mathbb{C}^{n_E\times n_A}$. At the same time, the jamming channel from Eve to Bob is also MISO with its channel vector denoted by $\mathbf{h}_{B,E}\in\mathbb{C}^{n_E\times 1}$.

When Bob transmits, the channel between the users is SIMO with the channel vector denoted by $\mathbf{h}_{A,B}\in\mathbb{C}^{n_A\times 1}$, the channel from Bob to Eve is also SIMO with the channel vector denoted by $\mathbf{h}_{E,B}\in\mathbb{C}^{n_E\times 1}$, Furthermore, the jamming channel from Eve to Alice is MIMO with the channel matrix denoted by $\mathbf{H}_{A,E}\in\mathbb{C}^{n_A\times n_E}$.

We will assume that all signals and (additive) noises are circular Gaussian distributed. Without loss of generality, we let the variance of each complex noise component be normalized to one.

As shown in \cite{Hua_STEEP_2025}, STEEP is a round-trip scheme with phases 1 and 2. In phase 1, one of the users sends a sequence of (uncoded) random probing symbols. In phase 2, the other user sends a sequence of  secret-message symbols combined with estimated probing symbols. As discussed later, the two packets transmitted in the two phases can be overlapped in time using frequency division duplex (FDD) to reduce latency. When Alice transmits in phase 1 and Bob transmits in phase 2, we will call it MISO-SIMO-STEEP. If the transmission order is reversed, we will call it SIMO-MISO-STEEP.

It will become clear that MISO-SIMO-STEEP is meant for Bob (with single antenna) to send a secret message to Alice (with multiple antennas) while SIMO-MISO-STEEP is meant for Alice to send a secret message to Bob.

Unless mentioned or clarified otherwise, the transmit powers from Alice, Bob and Eve will be denoted by $p_A$, $p_B$ and $p_E$ respectively. We will also use $p_{E,1}$ and $p_{E,2}$ for the transmit powers by Eve in phases 1 and 2 respectively.

The optimal jamming scheme by Eve against Bob is straightforward due to the MISO jamming channel. But the optimal jamming scheme by Eve against Alice is less obvious due to the MIMO jamming channel, and hence will be treated separately. We also assume that Eve knows all channel parameters including users'.

When Alice transmits a signal, we assume that Alice also transmits an artificial noise. We will point out in section \ref{sec:MISO-SIMO-STEEP} how this signaling is related to that of STEEP for the MISO case in \cite{Hua_Rahman_2024_ICC}  where Alice transmits a sequence of random probing vectors (without explicit artificial noise) and Bob estimates an effective stream of probing symbols.

The following two facts will be used frequently without necessarily being mentioned.

\emph{MMSE (minimum mean squared error) Estimation}: Given zero mean jointly Gaussian random vectors $\mathbf{x}$ and $\mathbf{y}$. The MMSE estimate $\mathbf{\hat x}$ of $\mathbf{x}$ from $\mathbf{y}$ is $\mathbf{\hat x} =\mathbb{E}\{\mathbf{x}|\mathbf{y}\} =\mathbb{E}\{\mathbf{x}\mathbf{y}^H\}(\mathbb{E}\{\mathbf{y}\mathbf{y}^H\})^{-1}\mathbf{y}$. The MSE of $\mathbf{\hat x}$ is $\mathbb{E}\{\|\mathbf{\hat x}-\mathbf{x}\|^2\}
=\mathbb{E}\{\|\mathbf{x}\|^2\}-\mathbb{E}\{\|\mathbf{\hat x}\|^2\}=\mathbb{E}\{\|\mathbf{x}\|^2\}
-\texttt{Tr}\left (\mathbb{E}\{\mathbf{x}\mathbf{y}^H\}(\mathbb{E}\{\mathbf{y}\mathbf{y}^H\})^{-1}
(\mathbb{E}\{\mathbf{x}\mathbf{y}^H\})^H\right )$.

\emph{Matrix Inverse Lemma}: Let $\mathbf{A}$ and $\mathbf{D}$ be invertible. Then
$
  (\mathbf{A}+\mathbf{U}\mathbf{D}\mathbf{V})^{-1}
=\mathbf{A}^{-1}-\mathbf{A}^{-1}\mathbf{U}\mathbf{B}^{-1}\mathbf{V}\mathbf{A}^{-1}
$
with $\mathbf{B}=\mathbf{D}^{-1}+\mathbf{V}\mathbf{A}^{-1}\mathbf{U}$.

We use $\mathbb{E}$ for expectation, $\mathbb{I}$ for mutual information, $\mathbb{H}$ for entropy, $\mathcal{CN}$ for circular Gaussian, $\log=\log_2$, and $x^+\doteq\max(x,0)$.

In the next four sections, we discuss  the conventional SIMO, SIMO-MISO-STEEP, conventional MISO and MISO-SIMO-STEEP schemes in order.

\section{Conventional SIMO scheme}\label{sec:SIMO_conv}
In this section, we present an optimal conventional transmission scheme over a SIMO channel which is jammed at and eavesdropped on by a multi-antenna full-duplex Eve. We assume that Eve knows the channel response between users and applies the optimal jamming. The secrecy rate $R_{\texttt{s,SIMO,conv}}$ of this scheme is derived and analyzed. It will be shown that $R_{\texttt{s,SIMO,conv}}^+$ is zero at any transmission power from user whenever the eavesdropping channel is strong enough.

In the conventional SIMO scheme, Bob sends a sequence of random symbols $\sqrt{p_B}r_B(k)$, $k=1,\cdots,K$, which \emph{is encoded} with a secret message. Then the corresponding signals received by Alice and Eve are
\begin{align}\label{eq:yA1k}
  &\mathbf{y}_{A,1}(k) = \sqrt{p_B}\mathbf{h}_{A,B}r_B(k)+\mathbf{w}_{A,1}(k)\notag\\
  &\,\,+\sqrt{p_{E,1}}
  \mathbf{H}_{A,E}\mathbf{v}_{A,E}n_{E,1}(k),
\end{align}
\begin{equation}\label{eq:yE1k}
  \mathbf{y}_{E,1}(k)=\sqrt{p_B}\mathbf{h}_{E,B}r_B(k) +\mathbf{w}_{E,1}(k),
\end{equation}
where $\sqrt{p_{E,1}}\mathbf{v}_{A,E}n_{E,1}(k)$ is the jamming noise from Eve.

As mentioned earlier, all signals, channel noises and jamming noises are normalized white circular complex Gaussian. Here  $r_B(k)\sim \mathcal{CN}(0,1)$, $\mathbf{w}_{A,1}(k)\sim\mathcal{CN}(0,\mathbf{I})$,  $n_{E,1}(k)\sim \mathcal{CN}(0,1)$, and $\mathbf{w}_{E,1}(k)\sim \mathcal{CN}(0,\mathbf{I})$.

\subsection{MMSE estimation of $r_B(k)$ by Alice}
To receive the message from Bob, Alice needs to compute the MMSE estimate of $r_B(k)$, denoted by $\hat r_{B|A}(k)$, from $\mathbf{y}_{A,1}(k)$ in \eqref{eq:yA1k}. It follows that
\begin{align}\label{eq:hrBAk}
  &\hat r_{B|A}(k) =\sqrt{p_B}\mathbf{h}_{A,B}^H\left (p_B\mathbf{h}_{A,B}\mathbf{h}_{A,B}^H+\mathbf{I}\right .\notag\\
  &\,\,\left .+p_{E,1}\mathbf{H}_{A,E}\mathbf{v}_{A,E}\mathbf{v}_{A,E}^H\mathbf{H}_{A,E}^H\right )^{-1}\mathbf{y}_{A,1}(k),
\end{align}
the MSE of which is
\begin{align}\label{eq:sDrBA1}
  &\sigma_{\Delta r_{B|A}}^2 =1-p_B\mathbf{h}_{A,B}^H\left (p_B\mathbf{h}_{A,B}\mathbf{h}_{A,B}^H+\mathbf{I}\right .\notag\\
  &\,\,\left .+p_{E,1}\mathbf{H}_{A,E}\mathbf{v}_{A,E}\mathbf{v}_{A,E}^H\mathbf{H}_{A,E}^H\right )^{-1}\mathbf{h}_{A,B}.
\end{align}

To reveal a useful structure of $\sigma_{\Delta r_{B|A}}^2$, we now apply the matrix inverse lemma to obtain:
\begin{align}\label{eq:sDrBA2}
  &\sigma_{\Delta r_{B|A}}^2=1-p_B\mathbf{h}_{A,B}^H\left (\mathbf{A}_1^{-1}\right .\notag\\
  &\,\,\left. -p_{E,1}\mathbf{A}_1^{-1}\mathbf{H}_{A,E}\mathbf{v}_{A,E}b_1^{-1}
  \mathbf{v}_{A,E}^H\mathbf{H}_{A,E}^H\mathbf{A}_1^{-1}\right )\mathbf{h}_{A,B},
\end{align}
with $\mathbf{A}_1=p_B\mathbf{h}_{A,B}\mathbf{h}_{A,B}^H+\mathbf{I}$ and $b_1=1+
p_{E,1}\mathbf{v}_{A,E}^H\mathbf{H}_{A,E}^H\mathbf{A}_1^{-1}\mathbf{H}_{A,E}\mathbf{v}_{A,E}$. Since $\mathbf{h}_{A,B}^H\mathbf{A}_1^{-1}=\frac{1}{p_B\|\mathbf{h}_{A,B}\|^2+1}\mathbf{h}_{A,B}^H$ and $\mathbf{A}_1^{-1}=\mathbf{I}-\alpha_B
\mathbf{\bar h}_{A,B}\mathbf{\bar h}_{A,B}^H$ with $\alpha_B=\frac{p_B\|\mathbf{h}_{A,B}\|^2}{p_B\|\mathbf{h}_{A,B}\|^2+1}$ and  $\mathbf{\bar h}_{A,B}\doteq\frac{1}{\|\mathbf{h}_{A,B}\|}\mathbf{h}_{A,B}$, we have
\begin{equation}\label{}
  b_1=1+p_{E,1}\left (\|\mathbf{H}_{A,E}\mathbf{v}_{A,E}\|^2-\alpha_B|\mathbf{\bar h}_{A,B}^H\mathbf{H}_{A,E}\mathbf{v}_{A,E}|^2
  \right ).
\end{equation}
Hence,
\begin{align}\label{eq:sDrBA3}
  &\sigma_{\Delta r_{B|A}}^2=1-p_B\left (\frac{\|\mathbf{h}_{A,B}\|^2}{p_B\|\mathbf{h}_{A,B}\|^2+1}
  -t_1\frac{\|\mathbf{h}_{A,B}\|^2}{(p_B\|\mathbf{h}_{A,B}\|^2+1)^2}\right )\notag\\
  &=\frac{1}{p_B\|\mathbf{h}_{A,B}\|^2+1}\left (1+
  t_1\alpha_B\right ),
\end{align}
 where $t_1=\frac{p_{E,1}|\mathbf{\bar h}_{A,B}^H\mathbf{H}_{A,E}\mathbf{v}_{A,E}|^2}{b_1}$.

  We see that if $p_B>0$, then $\alpha_B>0$ and $\sigma_{\Delta r_{B|A}}^2<1$. In this case, only $t_1$ depends on the jamming power $p_{E,1}$, and $t_1=0$ if and only if $p_{E,1}=0$.

 \subsection{Optimal jamming by Eve}\label{sec:jamming}
 It follows from \eqref{eq:sDrBA3} that
 the optimal $\mathbf{v}_{A,E}$ for Eve is such that $t_1$ is maximized, i.e.,
 \begin{equation}\label{}
   \mathbf{v}_{A,E}=arg \max_{\mathbf{v},\|\mathbf{v}\|=1}t_1'
 \end{equation}
 with $t_1'=\frac{p_{E,1}|\mathbf{\bar h}_{A,B}^H\mathbf{H}_{A,E}\mathbf{v}|^2}
   {1+p_{E,1}\left (\|\mathbf{H}_{A,E}\mathbf{v}\|^2-\alpha_B|\mathbf{\bar h}_{A,B}^H\mathbf{H}_{A,E}\mathbf{v}|^2
  \right )}$. A property of the denominator of $t_1'$ is shown next.

\begin{Proposition}
  $\|\mathbf{H}_{A,E}\mathbf{v}\|^2-\alpha_B|\mathbf{\bar h}_{A,B}^H\mathbf{H}_{A,E}\mathbf{v}|^2>0$.
\end{Proposition}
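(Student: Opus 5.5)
The plan is to write the expression as a quadratic form in the single vector $\mathbf{u}\doteq\mathbf{H}_{A,E}\mathbf{v}\in\mathbb{C}^{n_A\times 1}$ and bound it below using only two facts. The first is the Cauchy--Schwarz inequality with the unit-norm vector $\mathbf{\bar h}_{A,B}$. The second is that $\alpha_B<1$.

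Set $\mathbf{u}=\mathbf{H}_{A,E}\mathbf{v}$. The quantity in the proposition becomes $\|\mathbf{u}\|^2-\alpha_B|\mathbf{\bar h}_{A,B}^H\mathbf{u}|^2$. Since $\|\mathbf{\bar h}_{A,B}\|=1$ by definition, Cauchy--Schwarz gives $|\mathbf{\bar h}_{A,B}^H\mathbf{u}|^2\le\|\mathbf{u}\|^2$. From its definition, $\alpha_B=\frac{p_B\|\mathbf{h}_{A,B}\|^2}{p_B\|\mathbf{h}_{A,B}\|^2+1}$ satisfies $0\le\alpha_B<1$ for any finite $p_B$. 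Combining the two yields
\begin{equation*}
\|\mathbf{u}\|^2-\alpha_B|\mathbf{\bar h}_{A,B}^H\mathbf{u}|^2\ \ge\ (1-\alpha_B)\|\mathbf{u}\|^2\ \ge\ 0 .
\end{equation*}
An equivalent way to see this is that the expression equals $\mathbf{u}^H\left(\mathbf{I}-\alpha_B\mathbf{\bar h}_{A,B}\mathbf{\bar h}_{A,B}^H\right)\mathbf{u}=\mathbf{u}^H\mathbf{A}_1^{-1}\mathbf{u}$. The matrix $\mathbf{A}_1^{-1}$ is positive definite, with eigenvalues $1$ and $1-\alpha_B>0$.

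The only delicate step is strictness, since $(1-\alpha_B)\|\mathbf{u}\|^2>0$ requires $\mathbf{u}\neq\mathbf{0}$, that is, $\mathbf{H}_{A,E}\mathbf{v}\neq\mathbf{0}$. I would handle this by noting that the statement is meant for nonzero effective jamming. If $\mathbf{H}_{A,E}\mathbf{v}=\mathbf{0}$, Eve's jamming does not reach Alice at all. In that case the numerator of $t_1'$ is also zero, so $t_1'=0$ and such a $\mathbf{v}$ is never a maximizer unless $\mathbf{H}_{A,E}=\mathbf{0}$. The optimization can therefore be restricted to $\mathbf{v}\notin\mathcal{N}(\mathbf{H}_{A,E})$, which is nonempty whenever $\mathbf{H}_{A,E}\neq\mathbf{0}$, and on that set the inequality is strict. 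For the use made of the proposition, namely that the denominator of $t_1'$ is at least $1$ and hence positive, the nonnegative version already suffices.
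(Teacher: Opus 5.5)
Your argument is correct, and it is more elementary than the paper's. The paper expands $\mathbf{v}=\sum_{i=1}^r c_i\mathbf{v}_i$ in the right singular vectors of $\mathbf{H}_{A,E}=\sum_i\sigma_i\mathbf{u}_i\mathbf{v}_i^H$. It then bounds $|\mathbf{\bar h}_{A,B}^H\mathbf{H}_{A,E}\mathbf{v}|^2=|\sum_i c_i\sigma_i\mathbf{\bar h}_{A,B}^H\mathbf{u}_i|^2$ by $\max_i|c_i|^2\sigma_i^2$. You instead bound the same quantity by $\|\mathbf{H}_{A,E}\mathbf{v}\|^2$ via Cauchy--Schwarz. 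Equivalently, you recognize the expression as $(\mathbf{H}_{A,E}\mathbf{v})^H\mathbf{A}_1^{-1}(\mathbf{H}_{A,E}\mathbf{v})$, the quadratic form inside $b_1$.

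Your route is also the sound one, because the paper's intermediate bound is false in general. Take $c_1\sigma_1=c_2\sigma_2=1$ and $\mathbf{\bar h}_{A,B}=(\mathbf{u}_1+\mathbf{u}_2)/\sqrt{2}$. The bounded quantity then equals $2$ while $\max_i|c_i|^2\sigma_i^2=1$. The paper's chain only closes once that step is replaced by $|\mathbf{\bar h}_{A,B}^H\mathbf{H}_{A,E}\mathbf{v}|^2\le\sum_i|c_i|^2\sigma_i^2$, which is exactly your Cauchy--Schwarz step.

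What the SVD expansion does encode is a hypothesis the paper leaves implicit. By expanding $\mathbf{v}$ only over singular vectors with $\sigma_i>0$, it tacitly assumes $\mathbf{H}_{A,E}\mathbf{v}\neq\mathbf{0}$. You are right to state this explicitly. For $n_E>n_A$ (for example the simulated $n_E=6,8$ with $n_A=4$) there are unit $\mathbf{v}$ with $\mathbf{H}_{A,E}\mathbf{v}=\mathbf{0}$, and for those the literal statement fails.

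Two minor corrections to your side remarks:

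First, a null-space $\mathbf{v}$ can also maximize $t_1'$ when $\mathbf{H}_{A,E}\neq\mathbf{0}$ but $\mathbf{H}_{A,E}^H\mathbf{\bar h}_{A,B}=\mathbf{0}$, because then $t_1'\equiv 0$. This is harmless, since a maximizer outside $\mathcal{N}(\mathbf{H}_{A,E})$ still exists.

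Second, the nonnegative version does suffice for the denominator of $t_1'$. However, the paper immediately uses this expression as the entire denominator of $t_1''$. There strict positivity, and hence $\mathbf{H}_{A,E}\mathbf{v}\neq\mathbf{0}$, is genuinely needed.
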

  \begin{IEEEproof}
  Let $\mathbf{H}=\mathbf{H}_{A,E}$ and $\mathbf{u}=\mathbf{\bar h}_{A,B}$. Write the SVD of $\mathbf{H}$ as $\mathbf{H}=\sum_{i=1}^r\sigma_i\mathbf{u}_i\mathbf{v}_i^H$ where $\sigma_1>\cdots>\sigma_r>0$. For any $\mathbf{v}=\sum_{i=1}^r c_i\mathbf{v}_i$,  $|\mathbf{u}^H\mathbf{H}\mathbf{v}|^2
  =|\sum_{i=1}^r c_i \sigma_i \mathbf{u}^H\mathbf{u}_i|^2
  <\max_i |c_i|^2\sigma_i^2$, and $\|\mathbf{H}\mathbf{v}\|^2=\|\sum_i c_i\sigma_i\mathbf{u}_i\|^2=\sum_i|c_i|^2\sigma_i^2$. Hence, $\|\mathbf{H}\mathbf{v}\|^2-\alpha_B|\mathbf{u}^H\mathbf{H}_{A,E}\mathbf{v}|^2
  >\sum_i|c_i|^2\sigma_i^2-\alpha_B \max_i |c_i|^2\sigma_i^2>0$ where $\alpha_B<1$.
  \end{IEEEproof}

  Let
  \begin{equation}\label{}
    t_1''\doteq\lim_{p_{E,1}\to\infty}t_1' = \frac{|\mathbf{\bar h}_{A,B}^H\mathbf{H}_{A,E}\mathbf{v}|^2}
   {\|\mathbf{H}_{A,E}\mathbf{v}\|^2-\alpha_B|\mathbf{\bar h}_{A,B}^H\mathbf{H}_{A,E}\mathbf{v}|^2
  }.
  \end{equation}
  Here $t_1''$ is invariant to scaling on $\mathbf{v}$.

  \begin{Proposition}
  Let $\mathbf{\hat v}$ be the optimal $\mathbf{v}$ for large $p_{E,1}$, i.e., $\mathbf{\hat v}=arg\max_{\|\mathbf{v}\|=1} t_1''$. Then the following statements hold:
  \begin{itemize}
    \item  If $n_E\geq n_A$, then $\mathbf{\hat v}=\frac{1}{\|\mathbf{H}_{A,E}^\dag \mathbf{\bar h}_{A,B}\|}\mathbf{H}_{A,E}^\dag \mathbf{\bar h}_{A,B}$.
    \item If $n_A>n_E$, then $\mathbf{\hat v}$ is among the eigenvectors of $\mathbf{G}\doteq
(\mathbf{H}_{A,E}^H\mathbf{H}_{A,E})^{-1}\mathbf{e}\mathbf{e}^H$, which gives the largest $|\mathbf{e}^H\mathbf{v}|^2$. Here $\mathbf{e} = \mathbf{H}_{A,E}^H\mathbf{\bar h}_{A,B}$.
  \end{itemize}
  \end{Proposition}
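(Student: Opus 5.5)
The plan rests on one observation: with $\mathbf{e}=\mathbf{H}_{A,E}^H\mathbf{\bar h}_{A,B}$ and $\mathbf{Q}\doteq\mathbf{H}_{A,E}^H\mathbf{H}_{A,E}$, the objective $t_1''$ is a monotone function of a generalized Rayleigh quotient. Writing $a(\mathbf{v})=|\mathbf{e}^H\mathbf{v}|^2$ and $b(\mathbf{v})=\mathbf{v}^H\mathbf{Q}\mathbf{v}$, we have $t_1''=a/(b-\alpha_B a)=1/(b/a-\alpha_B)$ whenever $a>0$. By the preceding proposition, $b-\alpha_B a>0$, so $b/a>\alpha_B$. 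On this range the map $x\mapsto 1/(x^{-1}-\alpha_B)$ is strictly increasing in $x=a/b$. Hence maximizing $t_1''$ over $\|\mathbf{v}\|=1$ is equivalent to maximizing
\begin{equation*}
q(\mathbf{v})=\frac{|\mathbf{e}^H\mathbf{v}|^2}{\mathbf{v}^H\mathbf{Q}\mathbf{v}}.
\end{equation*}
Both $q$ and $t_1''$ are scale invariant, so the norm constraint only fixes the representative. Vectors with $a=0$ give $t_1''=0$ and can be discarded, since $\mathbf{e}\neq\mathbf{0}$ generically.

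\emph{Case $n_E\geq n_A$.} Generically $\mathbf{H}_{A,E}$ has full column rank, so $\mathbf{Q}\succ 0$. Substituting $\mathbf{z}=\mathbf{Q}^{1/2}\mathbf{v}$ gives $q=|(\mathbf{Q}^{-1/2}\mathbf{e})^H\mathbf{z}|^2/\|\mathbf{z}\|^2$. By Cauchy--Schwarz this is maximized exactly when $\mathbf{z}\propto\mathbf{Q}^{-1/2}\mathbf{e}$, i.e., $\mathbf{v}\propto\mathbf{Q}^{-1}\mathbf{e}=(\mathbf{H}_{A,E}^H\mathbf{H}_{A,E})^{-1}\mathbf{H}_{A,E}^H\mathbf{\bar h}_{A,B}=\mathbf{H}_{A,E}^\dag\mathbf{\bar h}_{A,B}$. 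Normalizing gives the first bullet.

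\emph{Case $n_A>n_E$.} Now $\mathbf{Q}$ is singular. First I will reduce to the row space of $\mathbf{H}_{A,E}$. Any component of $\mathbf{v}$ in the null space of $\mathbf{H}_{A,E}$ changes neither $\mathbf{H}_{A,E}\mathbf{v}$ nor $\mathbf{e}^H\mathbf{v}$, since $\mathbf{e}$ lies in that row space; it only consumes norm, and $t_1''$ is scale invariant anyway. So we may restrict to $\mathbf{v}=\sum_i c_i\mathbf{v}_i$ in the SVD notation of the previous proof. There $\mathbf{Q}$ is invertible, and $(\mathbf{H}_{A,E}^H\mathbf{H}_{A,E})^{-1}$ is read as the inverse on that subspace, i.e., the pseudo-inverse.

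Next I will derive the stationarity condition. Setting the gradient of $q$ with respect to $\mathbf{v}^*$ to zero gives $\mathbf{e}\mathbf{e}^H\mathbf{v}=q\,\mathbf{Q}\mathbf{v}$, i.e., $\mathbf{G}\mathbf{v}=q\,\mathbf{v}$ with $\mathbf{G}=\mathbf{Q}^{-1}\mathbf{e}\mathbf{e}^H$. Every maximizer is therefore an eigenvector of $\mathbf{G}$, with eigenvalue equal to the attained $q$. Among the candidate eigenvectors, normalized to $\|\mathbf{v}\|=1$, one selects the one giving the largest objective, which is the selection rule the statement phrases via $|\mathbf{e}^H\mathbf{v}|^2$. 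Since $\mathbf{G}$ has rank one, its only nonzero-eigenvalue eigenvector is $\mathbf{Q}^{-1}\mathbf{e}$. This also shows consistency with the first case.

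The main obstacle is this second case: justifying the restriction to the row space and giving a precise meaning to $(\mathbf{H}_{A,E}^H\mathbf{H}_{A,E})^{-1}$ when $\mathbf{Q}$ is rank deficient. I would first try to handle it with the explicit SVD parametrization from the previous proposition.
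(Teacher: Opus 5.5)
\textbf{Genuine gap: the shape of $\mathbf{H}_{A,E}$ is transposed.}

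Your reduction of $t_1''$ to the generalized Rayleigh quotient $|\mathbf{e}^H\mathbf{v}|^2/\mathbf{v}^H\mathbf{Q}\mathbf{v}$, via the increasing map $x\mapsto x/(1-\alpha_B x)$, is correct. It even justifies a step the paper leaves implicit: in its second case the paper maximizes $|\mathbf{e}^H\mathbf{v}|^2$ subject to $\|\mathbf{H}_{A,E}\mathbf{v}\|=1$ and silently drops $\alpha_B$.

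The problem is the dimensions. In the paper $\mathbf{H}_{A,E}\in\mathbb{C}^{n_A\times n_E}$ is the Eve-to-Alice jamming channel. So $\mathbf{v}\in\mathbb{C}^{n_E}$, and $\mathbf{Q}=\mathbf{H}_{A,E}^H\mathbf{H}_{A,E}$ is $n_E\times n_E$ with rank at most $n_A$.

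For $n_E\geq n_A$, $\mathbf{H}_{A,E}$ is wide with full \emph{row} rank, so $\mathbf{Q}$ is singular whenever $n_E>n_A$. That is exactly the regime the paper emphasizes (e.g. $n_A=4$, $n_E=6,8$). Your first-case step fails there for three reasons:
\begin{itemize}
\item the claim $\mathbf{Q}\succ 0$ is false, so $\mathbf{Q}^{-1/2}$ and $\mathbf{Q}^{-1}$ do not exist;
\item the identity $(\mathbf{H}^H\mathbf{H})^{-1}\mathbf{H}^H=\mathbf{H}^\dag$ needs full column rank;
\item the maximizer is not unique: adding any vector in $\ker\mathbf{H}_{A,E}$ to $\mathbf{H}_{A,E}^\dag\mathbf{\bar h}_{A,B}$ still attains $t_1''=1/(1-\alpha_B)$.
\end{itemize}
Proposition 1's positivity $b-\alpha_B a>0$ also fails on $\ker\mathbf{H}_{A,E}$, where $t_1''$ is $0/0$, so those $\mathbf{v}$ must be excluded explicitly.

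Conversely, for $n_A>n_E$, $\mathbf{H}_{A,E}$ is tall with generically full column rank. There $\mathbf{Q}$ is invertible, so the ``main obstacle'' you identify does not arise in that case.

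\textbf{Repair: swap your two arguments.}

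For $n_E\geq n_A$, use your null-space reduction. Restrict to $\mathbf{v}\in\texttt{range}(\mathbf{H}_{A,E}^H)$; this subspace contains $\mathbf{e}$, and $\mathbf{Q}$ is positive definite on it. Cauchy--Schwarz there gives
\begin{equation*}
\mathbf{v}\propto\mathbf{Q}^\dag\mathbf{e}=(\mathbf{H}_{A,E}^H\mathbf{H}_{A,E})^\dag\mathbf{H}_{A,E}^H\mathbf{\bar h}_{A,B}=\mathbf{H}_{A,E}^\dag\mathbf{\bar h}_{A,B},
\end{equation*}
using an identity valid for every matrix. The attained quotient is $\|\mathbf{H}_{A,E}\mathbf{H}_{A,E}^\dag\mathbf{\bar h}_{A,B}\|^2=1$, and the statement's $\mathbf{\hat v}$ is the minimum-norm representative. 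This matches the paper's more direct argument: since $\texttt{range}(\mathbf{H}_{A,E})=\mathbb{C}^{n_A}$, $t_1''$ depends on $\mathbf{g}=\mathbf{H}_{A,E}\mathbf{v}$ only through $|\mathbf{\bar h}_{A,B}^H\mathbf{g}|^2/\|\mathbf{g}\|^2\le 1$, which is maximized at $\mathbf{g}=\mathbf{\bar h}_{A,B}$.

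For $n_A>n_E$, your whitening and Cauchy--Schwarz argument applies verbatim because $\mathbf{Q}\succ 0$. It yields more than the paper's Lagrangian: the optimal direction is unique and equals $\mathbf{Q}^{-1}\mathbf{e}=\mathbf{H}_{A,E}^\dag\mathbf{\bar h}_{A,B}$. This is the only eigenvector of the rank-one $\mathbf{G}$ with nonzero eigenvalue, so the selection rule in the second bullet picks it out.
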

  \begin{IEEEproof}
  If $n_E\geq n_A$, $\texttt{range}(\mathbf{H}_{A,E})=\mathcal{C}^{n_A\times 1}$, i.e., for any $\mathbf{g}\in \mathcal{C}^{n_A\times 1}$, there is a $\mathbf{v}$ such that $\mathbf{g}=\mathbf{H}_{A,E}\mathbf{v}$. Since $t_1''$ is also invariant to scaling on $\mathbf{H}_{A,E}\mathbf{v}$, the optimal $\mathbf{g}$ that maximizes $t_1''$ can be found by
  \begin{equation}\label{}
    \max_\mathbf{g} |\mathbf{\bar h}_{A,B}^H\mathbf{g}|^2
  \end{equation}
  subject to $\|\mathbf{g}\|=1$. This solution is $\mathbf{\hat g}=\mathbf{\bar h}_{A,B}$. Hence the corresponding optimal solution for $\mathbf{v}$ without norm constraint is
  from the solution of $\mathbf{\bar h}_{A,B}=\mathbf{H}_{A,E}\mathbf{v}$, which is $\mathbf{H}_{A,E}^\dag \mathbf{\bar h}_{A,B}$. With the unit norm constraint, the optimal $\mathbf{v}$ becomes $\mathbf{\hat v} = \frac{1}{\|\mathbf{H}_{A,E}^\dag \mathbf{\bar h}_{A,B}\|}\mathbf{H}_{A,E}^\dag \mathbf{\bar h}_{A,B}$.

  If $n_A>n_E$, the optimal $\mathbf{v}$ without norm constraint can be found by
    \begin{equation}\label{}
    \max_\mathbf{v} |\mathbf{e}^H\mathbf{v}|^2
  \end{equation}
subject to $\|\mathbf{H}_{A,E}\mathbf{v}\|=1$. The Lagrangian function of the problem is
\begin{equation}\label{}
  L=-|\mathbf{e}^H\mathbf{v}|^2+\eta (\|\mathbf{H}_{A,E}\mathbf{v}\|^2
  -1).
\end{equation}
It follows that
\begin{equation}\label{}
  \frac{1}{2}\frac{\partial L}{\partial \mathbf{v}}
  =-\mathbf{e}\mathbf{e}^H\mathbf{v}+\eta\mathbf{H}_{A,E}^H\mathbf{H}_{A,E}\mathbf{v}.
\end{equation}
So, by solving $\frac{\partial L}{\partial \mathbf{v}}=0$, the optimal $\mathbf{v}$ is a right-side eigenvector of $\mathbf{G}=
(\mathbf{H}_{A,E}^H\mathbf{H}_{A,E})^{-1}\mathbf{e}\mathbf{e}^H$. Among the $n_E$ eigenvectors of $\mathbf{G}$, the optimal $\mathbf{v}$ is the one which gives the largest $|\mathbf{e}^H\mathbf{v}|^2$. The final optimal $\mathbf{v}$ should be normalized to have the unit norm.
  \end{IEEEproof}

\subsection{MMSE estimation of $r_B(k)$ by Eve}

In order to determine the secrecy rate of the conventional SIMO scheme, we must assume that Eve performs the MMSE estimate $\hat r_{B|E}(k)$ of $r_B(k)$  from $\mathbf{y}_{E,1}(k)$ in \eqref{eq:yE1k}. The MSE of this estimate is
\begin{align}\label{eq:sDrBE}
  &\sigma_{\Delta r_{B|E}}^2 = 1-p_B\mathbf{h}_{E,B}^H\left (
  p_B\mathbf{h}_{E,B}\mathbf{h}_{E,B}^H+\mathbf{I}\right )^{-1}\mathbf{h}_{E,B}\notag\\
  &=\frac{1}{p_B\|\mathbf{h}_{E,B}\|^2+1}.
\end{align}

\subsection{Secrecy rate of the conventional SIMO scheme}
It follows that
the secrecy rate of the conventional SIMO scheme from Bob to Alice in bits per channel use is $R_{s,\texttt{SIMO,conv}}^+$ with
\begin{align}\label{eq:R_s_conv_BA}
  &R_{s,\texttt{SIMO,conv}}=\mathbb{I}(r_B(k);\mathbf{y}_{A,1}(k))-
  \mathbb{I}(r_B(k);\mathbf{y}_{E,1}(k))
  \notag\\
  &=\log\frac{1}{\sigma_{\Delta r_{B|A}}^2}-\log\frac{1}{\sigma_{\Delta r_{B|E}}^2}\notag\\
  &=\log
  \frac{1+p_B\|\mathbf{h}_{A,B}\|^2}{(1+p_B\|\mathbf{h}_{E,B}\|^2)(1+
  t_1\alpha_B)},
\end{align}
where we have applied  Lemma 1 in \cite{Hua_STEEP_2025}, \eqref{eq:sDrBA3} and \eqref{eq:sDrBE}.

We see that $R_{s,\texttt{SIMO,conv}}>0$  for some $p_B$ if and only if $\lim_{p_B\to\infty}R_{s,\texttt{SIMO,conv}}>0$, or equivalently if and only if
\begin{equation}\label{eq:space_SIMO}
\frac{\|\mathbf{h}_{A,B}\|^2}{\|\mathbf{h}_{E,B}\|^2(1+
  t_1''')}>1,
\end{equation}
where
\begin{align}\label{}
    &t_1'''\doteq\lim_{p_B\to\infty}t_1\notag\\
    &=\frac{p_{E,1}|\mathbf{\bar h}_{A,B}^H\mathbf{H}_{A,E}\mathbf{v}_{A,E}|^2}
   {1+p_{E,1}\left (\|\mathbf{H}_{A,E}\mathbf{v}_{A,E}\|^2-|\mathbf{\bar h}_{A,B}^H\mathbf{H}_{A,E}\mathbf{v}_{A,E}|^2
  \right )
  }.
  \end{align}
  We also see that the jamming power $p_{E,1}$ from Eve also causes a significant burden (i.e., $t_1'''$) for the conventional SIMO scheme to yield a positive secrecy rate.
  Equivalently, we have the following:
  \begin{Proposition}
    $R^+_{s,\texttt{SIMO,conv}}=0$ at any $p_B$ if and only if the negation of \eqref{eq:space_SIMO} holds.
  \end{Proposition}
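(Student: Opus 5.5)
The plan is to show that $R_{s,\texttt{SIMO,conv}}$ in \eqref{eq:R_s_conv_BA}, viewed as a function of $p_B$ (with $p_{E,1}$, the channels and Eve's optimal $\mathbf{v}_{A,E}$ fixed), satisfies
\[
\sup_{p_B>0}R_{s,\texttt{SIMO,conv}}>0 \iff \lim_{p_B\to\infty}R_{s,\texttt{SIMO,conv}}>0 .
\]
The right-hand condition is exactly \eqref{eq:space_SIMO}. The proposition is the contrapositive of this equivalence. If \eqref{eq:space_SIMO} holds, the limit is positive, so $R_{s,\texttt{SIMO,conv}}>0$ for all large $p_B$ and $R^+_{s,\texttt{SIMO,conv}}\neq0$ for some $p_B$. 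The converse direction, that positivity at some $p_B$ forces positivity of the limit, is the substantive part.

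\textbf{Computing the limit.} As $p_B\to\infty$, $\alpha_B\to1$ and $t_1\to t_1'''$. Dividing numerator and denominator of \eqref{eq:R_s_conv_BA} by $p_B$ gives
\[
\lim_{p_B\to\infty}R_{s,\texttt{SIMO,conv}}=\log\frac{\|\mathbf{h}_{A,B}\|^2}{\|\mathbf{h}_{E,B}\|^2(1+t_1''')}.
\]
This limit is positive if and only if \eqref{eq:space_SIMO} holds. The limit exists and is finite because $t_1'''$ is finite: its denominator is at least $1$, by the first Proposition in this section applied with $\alpha_B\le1$ (or by Cauchy--Schwarz when $\alpha_B=1$).

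\textbf{Monotonicity (main obstacle).} To close the converse direction, I would show that the argument of the logarithm,
\[
f(p_B)=\frac{1+p_B\|\mathbf{h}_{A,B}\|^2}{1+p_B\|\mathbf{h}_{E,B}\|^2}\cdot\frac{1}{1+t_1\alpha_B},
\]
cannot exceed $1$ at a finite $p_B$ while its limit is at most $1$. I would write $a=\|\mathbf{h}_{A,B}\|^2$, $e=\|\mathbf{h}_{E,B}\|^2$, $c=p_{E,1}|\mathbf{\bar h}_{A,B}^H\mathbf{H}_{A,E}\mathbf{v}_{A,E}|^2$ and $d=p_{E,1}\|\mathbf{H}_{A,E}\mathbf{v}_{A,E}\|^2$. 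With $x=p_Ba$ we have $\alpha_B=x/(1+x)$ and
\[
1+t_1\alpha_B=\frac{(1+d)(1+x)}{(1+d)(1+x)-cx}.
\]
Therefore
\[
f=\frac{(1+x)\bigl[(1+d)(1+x)-cx\bigr]}{(1+x)(1+p_Be)(1+d)}=\frac{(1+d)+x(1+d-c)}{(1+d)(1+p_Be)}.
\]
This is a ratio of two affine functions of $p_B$, so it is monotone in $p_B$. Since $f(0)=1$, $f>1$ for some $p_B>0$ if and only if $f$ is increasing, if and only if its limit exceeds $1$. This yields the equivalence.

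The step to check carefully is that Eve's optimal $\mathbf{v}_{A,E}$ depends on $p_B$ through $\alpha_B$. The fixed-$\mathbf{v}$ monotonicity argument above then does not directly apply. I would handle this by noting that the paper's statement of \eqref{eq:space_SIMO} uses $\mathbf{v}_{A,E}$ as fixed in the limit. Failing that, I would argue via a minimax over $\mathbf{v}$: for each fixed $\mathbf{v}$, $f$ is a monotone linear-fractional function, and taking the pointwise minimum over $\mathbf{v}$ preserves the property that $f\le1$ everywhere exactly when every candidate limit is at most $1$.
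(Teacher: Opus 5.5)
The paper gives no argument for the key step. It simply asserts that $R_{s,\texttt{SIMO,conv}}>0$ for some $p_B$ if and only if $\lim_{p_B\to\infty}R_{s,\texttt{SIMO,conv}}>0$, and then reads off the limit. You make the same reduction but actually prove that equivalence, and your algebra is correct. Write $\beta_{\mathbf v}\doteq c/(1+d)$. Your computation then amounts to $1+t_1\alpha_B=1/(1-\alpha_B\beta_{\mathbf v})$, hence $f=\bigl(1+p_B\|\mathbf h_{A,B}\|^2(1-\beta_{\mathbf v})\bigr)/\bigl(1+p_B\|\mathbf h_{E,B}\|^2\bigr)$. So for any $p_B>0$, $f>1$ if and only if $\|\mathbf h_{A,B}\|^2(1-\beta_{\mathbf v})>\|\mathbf h_{E,B}\|^2$. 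This is exactly \eqref{eq:space_SIMO}, because $1/(1+t_1''')=1-\beta_{\mathbf v}$. The result is stronger than the proposition: $R_{s,\texttt{SIMO,conv}}$ is either positive for every $p_B>0$ or nonpositive for every $p_B$. For $d\ge c$ (finiteness of $t_1'''$), cite Cauchy--Schwarz directly; you do not need the first Proposition.

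Your \emph{main obstacle} does not actually arise, and your fallback for it is misstated. For any $\alpha_B\in(0,1]$, $t_1'=\beta_{\mathbf v}/(1-\alpha_B\beta_{\mathbf v})$ is strictly increasing in $\beta_{\mathbf v}=p_{E,1}|\mathbf{\bar h}_{A,B}^H\mathbf H_{A,E}\mathbf v|^2/(1+p_{E,1}\|\mathbf H_{A,E}\mathbf v\|^2)$, which does not involve $p_B$. Hence Eve's optimal $\mathbf v_{A,E}$ is the same for every $p_B>0$, and it coincides with the maximizer of $t_1'''$. Your fixed-$\mathbf v$ argument is therefore already a complete proof, with no need to appeal to how \eqref{eq:space_SIMO} is meant to be read. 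The minimax sentence, by contrast, has the wrong quantifier. With $F=\min_{\mathbf v}f_{\mathbf v}$ and $g_{\mathbf v}=\lim_{p_B\to\infty}f_{\mathbf v}$, $F\le1$ everywhere holds if and only if \emph{some} $g_{\mathbf v}\le 1$ (i.e., $\min_{\mathbf v}g_{\mathbf v}\le1$), not every one. The converse direction of that route would also need the minimum over the unit sphere to be attained.
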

The channel space defined by the negation (or converse) of \eqref{eq:space_SIMO} is clearly significant.

\section{SIMO-MISO-STEEP}\label{sec:SIMO-MISO-STEEP}
In this section, we present the SIMO-MISO-STEEP scheme. We will first make clear a connection between phase 1 of this scheme and the conventional SIMO scheme, and then provide the details of phase 2 of this scheme. We will also analyze the secrecy rate $R_{\texttt{s,S-M-STEEP}}$ of SIMO-MISO-STEEP and show that $R_{\texttt{s,S-M-STEEP}}$ is positive when the phase-2 power of SIMO-MISO-STEEP is large enough, despite Eve's finite jamming power and Eve's eavesdropping channel strength.

\subsection{Phase 1 of SIMO-MISO-STEEP}

In phase 1, Bob transmits a sequence of random probes to Alice and Eve, which is similar to the conventional SIMO scheme except that the transmitted random signal $r_B(k)$ from Bob is now \emph{uncoded} and consisting of completely independent symbols. Neither Alice nor Eve will in general be able to determine exactly the  random probing symbols  due to channel noises and the uncoded symbols.

\subsection{Phase 2 of SIMO-MISO-STEEP}

In phase 2, Alice transmits the following vector sequence:
\begin{align}\label{eq:xA2}
&\mathbf{x}_{A,2}(k)\doteq
  \sqrt{p_A\gamma_A}\mathbf{v}_A(s_A(k)+\hat r_{B|A}(k))\notag\\
  &\,\,+
\sqrt{\frac{p_A(1-\gamma_A)}{n_A-1}}\mathbf{V}_A\mathbf{n}_A(k).
\end{align}
Here $s_A(k)$ is a sequence of secret message symbols encoded with a message $\mathcal{M}_A$ to be reliably detected by Bob, $\hat r_{B|A}(k)$ in \eqref{eq:hrBAk} is used here to ``hide or encrypt'' $s_A(k)$, and $\mathbf{n}_A(k)$ is an artificial noise from Alice to jam Eve. Furthermore, to maximize the SNR at Bob, $\mathbf{v}_A$ is chosen to be $\frac{\mathbf{h}_{B,A}^*}{\|\mathbf{h}_{B,A}\|}
=\mathbf{\bar h}_{B,A}^*$, $\mathbf{V}_A$ is chosen to be orthogonal complement to $\mathbf{v}_A$, and hence $[\mathbf{v}_A,\mathbf{V}_A]$ is  an $n_A\times n_A$ unitary matrix. So, the power of $\mathbf{x}_{A,2}(k)$ is $p_A\gamma_A(1+\sigma_{\hat r_{B|A}}^2)+p_A(1-\gamma_A)=p_A+p_A\gamma_A\sigma_{\hat r_{B|A}}^2<(1+\gamma_A)p_A<2p_A$, where $0<\gamma_A<1$.

Note that if frequency division duplex (FDD) is used between Alice and Bob, $\mathbf{x}_{A,2}(k)$ can be transmitted as soon as after $\hat r_{B|A}(k)$ is obtained from $\mathbf{y}_{A,1}(k)$. In other words, the transmission in phase 2 could start as soon as the first sample in phase 1 has been received and processed.

The corresponding signal received by Bob is
\begin{align}\label{eq:yB2k}
  &y_{B,2}(k) = \sqrt{p_A\gamma_A}\|\mathbf{h}_{B,A}\|(s_A(k)+\hat r_{B|A}(k))
  +w_{B,2}(k) \notag\\
  &\,\,+\sqrt{p_{E,2}}\|\mathbf{h}_{B,E}\|n_{E,2}(k),
\end{align}
where we have assumed that Eve sends the optimized jamming noise $\sqrt{p_{E,2}}\mathbf{\bar h}_{B,E}^*n_{E,2}(k)$ against Bob. And the corresponding signal received by Eve is
\begin{align}\label{}
  &\mathbf{y}_{E,2}(k) = \sqrt{p_A\gamma_A}\mathbf{g}_{E,A}(s_A(k)+\hat r_{B|A}(k))\notag\\
  &\,\,+\sqrt{\frac{p_A(1-\gamma_A)}{n_A-1}}\mathbf{H}_{E,A}'\mathbf{n}_A(k)+\mathbf{w}_{E,2}(k),
\end{align}
with $\mathbf{g}_{E,A}=\mathbf{H}_{E,A}\mathbf{v}_A$ and $\mathbf{H}_{E,A}'=\mathbf{H}_{E,A}\mathbf{V}_A$.

To detect the message $\mathcal{M}_A$ from Alice, Bob can use the sequences $y_{B,2}(k)$ and $r_B(k)$, but Eve can only use $\mathbf{y}_{E,1}(k)$ and $\mathbf{y}_{E,2}(k)$. The effective capacity from Alice to Bob is now
\begin{equation}\label{eq:CAB}
  C_{A|B}=\mathbb{I}(s_A(k);y_{B,2}(k),r_B(k))=-\log\sigma_{\Delta s_{A|B}}^2
\end{equation}
with $\sigma_{\Delta s_{A|B}}^2$ being the MSE of the MMSE estimate of $s_A(k)$ from $y_{B,2}(k)$ and $r_B(k)$. And the effective capacity from Alice to Eve is now
\begin{equation}\label{eq:CAE}
  C_{A|E}=\mathbb{I}(s_A(k);\mathbf{y}_{E,1}(k),\mathbf{y}_{E,2}(k))=-\log\sigma_{\Delta s_{A|E}}^2
\end{equation}
with $\sigma_{\Delta s_{A|E}}^2$ being the MSE of the MMSE estimate of $s_A(k)$ from $\mathbf{y}_{E,1}(k)$ and $\mathbf{y}_{E,2}(k)$.

\subsection{MMSE estimation of $s_A(k)$ by Bob}

The MMSE estimate of $s_A(k)$ from $r_B(k)$ and $y_{B,2}(k)$ can be shown to be
\begin{equation}\label{}
  \hat s_{A|B}(k) =\mathbb{E}\{s_A(k)\Delta y_{B,2}^*(k)\}(\mathbb{E}\{|\Delta y_{B,2}(k)|^2\})^{-1}\Delta y_{B,2}(k)
\end{equation}
where
\begin{align}\label{eq:DyB2k}
  &\Delta y_{B,2}(k) \doteq y_{B,2}(k)-\mathbb{E}\{y_{B,2}(k)|r_B(k)\}\notag\\
  &=y_{B,2}(k) - \sqrt{p_A\gamma_A}\|\mathbf{h}_{B,A}\|\mathbb{E}\{\hat r_{B|A}(k)|r_B(k)\}.
\end{align}
Since $\mathbb{E}\{\hat r_{B|A}(k)|r_B(k)\}=\sigma_{\hat r_{B|A}}^2r_B(k)$ with $\sigma_{\hat r_{B|A}}^2
=1-\sigma_{\Delta r_{B|A}}^2$, then
\begin{align}\label{}
  &\Delta y_{B,2}(k) = \sqrt{p_A\gamma_A}\|\mathbf{h}_{B,A}\|(s_A(k)+\hat r_{B|A}(k)-\sigma_{\hat r_{B|A}}^2r_B(k))\notag\\
  &\,\,
  +w_{B,2}(k) +\sqrt{p_{E,2}}\|\mathbf{h}_{B,E}\|n_{E,2}(k).
\end{align}
It follows that the MSE of $\hat s_{A|B}(k)$ is
\begin{align}\label{eq:sDsAB}
  &\sigma_{\Delta s_{A|B}}^2
  =1-\mathbb{E}\{s_A(k)\Delta y_{B,2}^*(k)\}(\mathbb{E}\{|\Delta y_{B,2}(k)|^2\})^{-1}\notag\\
  &\,\,\cdot
  (\mathbb{E}\{s_A(k)\Delta y_{B,2}^*(k)\})^*\notag\\
  &=\frac{p_A\gamma_A\|\mathbf{h}_{B,A}\|^2
  \sigma_{\Delta r_{B|A}}^2\sigma_{\hat r_{B|A}}^2+1+p_{E,2}\|\mathbf{h}_{B,E}\|^2}
  {p_A\gamma_A\|\mathbf{h}_{B,A}\|^2
  (1+\sigma_{\Delta r_{B|A}}^2\sigma_{\hat r_{B|A}}^2)+1+p_{E,2}\|\mathbf{h}_{B,E}\|^2}.
\end{align}

Note that with $p_B>0$, we have $0<\sigma_{\Delta r_{B|A}}^2<1$, $0<\sigma_{\hat r_{B|A}}^2<1$ and $0<\sigma_{\Delta r_{B|A}}^2\sigma_{\hat r_{B|A}}^2<1$. It follows from \eqref{eq:sDsAB} that for $p_B>0$ (and hence $\sigma_{\Delta r_{B|A}}^2\sigma_{\hat r_{B|A}}^2$ is strictly positive),
\begin{equation}\label{eq:lim_AB}
  \lim_{p_A\to\infty}\sigma_{\Delta s_{A|B}}^2
  =\frac{
  \sigma_{\Delta r_{B|A}}^2\sigma_{\hat r_{B|A}}^2}
  {1+\sigma_{\Delta r_{B|A}}^2\sigma_{\hat r_{B|A}}^2}.
\end{equation}


\subsection{MMSE estimation of $s_A(k)$ at Eve}

Recall that the signals received by Eve in phases 1 and 2 of SIMO-MISO-STEEP are
\begin{equation}\label{}
  \left \{\begin{array}{c}
            \mathbf{y}_{E,1}(k)=\sqrt{p_B}\mathbf{h}_{E,B}r_B(k) +\mathbf{w}_{E,1}(k), \\
            \mathbf{y}_{E,2}(k) = \sqrt{p_A\gamma_A}\mathbf{g}_{E,A}(s_A(k)+\hat r_{B|A}(k))\\
  \,\,+\sqrt{\frac{p_A(1-\gamma_A)}{n_A-1}}\mathbf{H}_{E,A}'\mathbf{n}_A(k)+\mathbf{w}_{E,2}(k) .
          \end{array}
  \right .
\end{equation}
It follows that the MSE of the MMSE estimate of $s_A(k)$ by Eve from $\mathbf{y}_{E,1}(k)$ and $\mathbf{y}_{E,2}(k)$ is
\begin{equation}\label{}
  \sigma_{\Delta s_{A|E}}^2
  =1-\mathbf{r}^H\mathbf{R}^{-1}\mathbf{r}
\end{equation}
where $\mathbf{r}^H = \mathbb{E}\{s_A(k)[\mathbf{y}_{E,1}^H(k),\mathbf{y}_{E,2}^H(k)]\}
  =[0,\sqrt{p_A\gamma_A}\mathbf{g}_{E,A}^H]$ and
\begin{equation}\label{}
  \mathbf{R}=\mathbb{E}\left \{\left [\begin{array}{c}
                                        \mathbf{y}_{E,1} \\
                                        \mathbf{y}_{E,2}
                                      \end{array}
  \right ]
  \left [\begin{array}{c}
                                        \mathbf{y}_{E,1} \\
                                        \mathbf{y}_{E,2}
                                      \end{array}
  \right ]^H\right \}=\left [\begin{array}{cc}
                      \mathbf{R}_{1,1} & \mathbf{R}_{1,2} \\
                      \mathbf{R}_{1,2}^H & \mathbf{R}_{2,2}
                    \end{array}
  \right ],
\end{equation}
with
$
  \mathbf{R}_{1,1}
  =p_B\mathbf{h}_{E,B}\mathbf{h}_{E,B}^H +\mathbf{I}
$, $
  \mathbf{R}_{1,2}
  =\sqrt{p_A\gamma_A p_B}\sigma_{\hat r_{B|A}}^2\mathbf{h}_{E,B}\mathbf{g}_{E,A}^H$, and
\begin{align}\label{}
&\mathbf{R}_{2,2}
=p_A\gamma_A(1+\sigma_{\hat r_{B|A}}^2)\mathbf{g}_{E,A}\mathbf{g}_{E,A}^H\notag\\
&\,\,+\frac{p_A(1-\gamma_A)}{n_A-1}
\mathbf{H}_{E,A}'\mathbf{H}_{E,A}'^H+\mathbf{I}.
\end{align}
Furthermore, one can verify that
 \begin{equation}\label{eq:sDsAE}
  \sigma_{\Delta s_{A|E}}^2
  =1-p_A\gamma_A\mathbf{g}_{E,A}^H(\mathbf{R}_{2,2}-\mathbf{R}_{1,2}^H
                                \mathbf{R}_{1,1}^{-1}\mathbf{R}_{1,2})^{-1}
                                \mathbf{g}_{E,A}.
\end{equation}

\subsection{Properties of $\sigma_{\Delta s_{A|E}}^2$ in \eqref{eq:sDsAE}}
To show an important property of the secrecy rate of the SIMO-MISO-STEEP, we need to reveal some properties of $\sigma_{\Delta s_{A|E}}^2$. We start with the following:
\begin{align}\label{}
  &\mathbf{R}_{1,2}^H\mathbf{R}_{1,1}^{-1}\mathbf{R}_{1,2}\notag\\
  &=p_A\gamma_A p_B\sigma_{\hat r_B}^4\mathbf{g}_{E,A}\mathbf{h}_{E,B}^H
  (p_B\mathbf{h}_{E,B}\mathbf{h}_{E,B}^H +\mathbf{I})^{-1}\mathbf{h}_{E,B}\mathbf{g}_{E,A}^H
  \notag\\
  &=t_2p_A\gamma_A\mathbf{g}_{E,A}\mathbf{g}_{E,A}^H
\end{align}
with $t_2=\frac{ p_B\sigma_{\hat r_B}^4\|\mathbf{h}_{E,B}\|^2}{p_B\|\mathbf{h}_{E,B}\|^2+1}$. Hence
\begin{align}\label{eq:sDsAE_b}
  &\sigma_{\Delta s_{A|E}}^2
  =1-p_A\gamma_A\mathbf{g}_{E,A}^H\left (t_2'p_A\gamma_A\mathbf{g}_{E,A}\mathbf{g}_{E,A}^H+\mathbf{I}\right .\notag\\
  &\,\,\left .+\frac{p_A(1-\gamma_A)}{n_A-1}
\mathbf{H}_{E,A}'\mathbf{H}_{E,A}'^H\right )^{-1}
                                \mathbf{g}_{E,A}
\end{align}
with $t_2'=(1+\sigma_{\hat r_{B|A}}^2)-t_2
=1+\sigma_{\hat r_{B|A}}^2-\frac{ \sigma_{\hat r_{B|A}}^4p_B\|\mathbf{h}_{E,B}\|^2}{p_B\|\mathbf{h}_{E,B}\|^2+1}>1$.

Using the matrix inverse lemma, we have
\begin{align}\label{}
  &\sigma_{\Delta s_{A|E}}^2
  =1-p_A\gamma_A\mathbf{g}_{E,A}^H\left (\mathbf{A}_2\right .\notag\\
  &\,\,\left .+\frac{p_A(1-\gamma_A)}{n_A-1}
\mathbf{H}_{E,A}'\mathbf{H}_{E,A}'^H\right )^{-1}
                                \mathbf{g}_{E,A}\notag\\
&=1-p_A\gamma_A\mathbf{g}_{E,A}^H\left (
\mathbf{A}_2^{-1}\right .\notag\\
&\,\,\left .-\frac{p_A(1-\gamma_A)}{n_A-1}\mathbf{A}_2^{-1}
\mathbf{H}_{E,A}'\mathbf{B}_2^{-1}\mathbf{H}_{E,A}'^H\mathbf{A}_2^{-1}\right )\mathbf{g}_{E,A}
\end{align}
with $\mathbf{A}_2=t_2'p_A\gamma_A\mathbf{g}_{E,A}\mathbf{g}_{E,A}^H+\mathbf{I}_{n_E}$ and $\mathbf{B}_2
=\mathbf{I}_{n_A-1}+\frac{p_A(1-\gamma_A)}{n_A-1}
\mathbf{H}_{E,A}'^H\mathbf{A}_2^{-1}\mathbf{H}_{E,A}'$. It follows that
\begin{align}\label{eq:35}
  &\sigma_{\Delta s_{A|E}}^2
=1-p_A\gamma_A\left (\frac{\|\mathbf{g}_{E,A}\|^2}{t_2'p_A\gamma_A\|\mathbf{g}_{E,A}\|^2+1}
\right .\notag\\
&\,\,\left .-\frac{p_A(1-\gamma_A)}{n_A-1}\frac{\|\mathbf{g}_{E,A}\|^2}
{(t_2'p_A\gamma_A\|\mathbf{g}_{E,A}\|^2+1)^2}
t_2''\right )\notag\\
&=\frac{(t_2'-1)p_A\gamma_A\|\mathbf{g}_{E,A}\|^2+1}{t_2'p_A\gamma_A\|\mathbf{g}_{E,A}\|^2+1}\notag\\
&\,\,
+\frac{p_A^2\gamma_A(1-\gamma_A)}{n_A-1}\frac{\|\mathbf{g}_{E,A}\|^2}
{(t_2'p_A\gamma_A\|\mathbf{g}_{E,A}\|^2+1)^2}
t_2''
\end{align}
with $t_2''=\mathbf{\bar g}_{E,A}^H\mathbf{H}_{E,A}'\mathbf{B}_2^{-1}\mathbf{H}_{E,A}'^H\mathbf{\bar g}_{E,A}
=\mathbf{c}_{E,A}^H\mathbf{B}_2^{-1}\mathbf{c}_{E,A}$ and $\mathbf{c}_{E,A}=\mathbf{H}_{E,A}'^H\mathbf{\bar g}_{E,A}$. The above term associated with $t_2''$ is due to the artificial noise from Alice.

Since $\mathbf{A}_2^{-1}=\mathbf{I}_{n_E}-\frac{t_2'p_A\gamma_A}{1+t_2'p_A\gamma_A\|\mathbf{g}_{E,A}\|^2}
\mathbf{g}_{E,A}\mathbf{g}_{E,A}^H$, we have
\begin{align}\label{eq:B2}
  &\mathbf{B}_2
=\mathbf{I}_{n_A-1}+\frac{p_A(1-\gamma_A)}{n_A-1}
\mathbf{H}_{E,A}'^H\left (\mathbf{I}_{n_E}\right .\notag\\
&\,\,\left .-\frac{t_2'p_A\gamma_A}{1+t_2'p_A\gamma_A\|\mathbf{g}_{E,A}\|^2}
\mathbf{g}_{E,A}\mathbf{g}_{E,A}^H\right )\mathbf{H}_{E,A}'\notag\\
&\approx \mathbf{I}_{n_A-1}+\frac{p_A(1-\gamma_A)}{n_A-1}\mathbf{T}_1
\end{align}
where the approximation is due to $\frac{t_2'p_A\gamma_A}{1+t_2'p_A\gamma_A\|\mathbf{g}_{E,A}\|^2}
\mathbf{g}_{E,A}\mathbf{g}_{E,A}^H\approx \mathbf{\bar g}_{E,A}\mathbf{\bar g}_{E,A}^H$ subject to a large $p_A$, and $\mathbf{T}_1
=\mathbf{H}_{E,A}'^H(\mathbf{I}_{n_E}-\mathbf{\bar g}_{E,A}\mathbf{\bar g}_{E,A}^H)\mathbf{H}_{E,A}'\in\mathcal{C}^{(n_A-1)\times (n_A-1)}$. Also $rank(\mathbf{T}_1)=\min(n_E-1,n_A-1)$ since $\mathbf{I}_{n_E}-\mathbf{\bar g}_{E,A}\mathbf{\bar g}_{E,A}^H$ and $\mathbf{H}_{E,A}'$ have their ranks equal to $n_E-1$ and $n_A-1$ respectively.

\subsubsection{For $n_E\geq n_A$}
In this case, $\mathbf{T}_1$ has the full rank $n_A-1$, or more specifically $\mathbf{T}_1>0$ (positive definite) due to the following lemma.

\begin{Lemma}
   Let $n>m$, $\mathbf{H}\in\mathcal{C}^{n\times m}$, $\mathbf{c}\in\mathcal{C}^{n\times 1}$ and $\|\mathbf{c}\|=1$. Then $\mathbf{D}\doteq\mathbf{H}^H\mathbf{H}-\mathbf{H}^H\mathbf{c}\mathbf{c}^H\mathbf{H}>0$ (positive definite) subject to $\mathbf{c}\notin \texttt{range}(\mathbf{H})$ and $\texttt{rank}(\mathbf{H})=m$.
 \end{Lemma}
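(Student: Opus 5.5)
The plan is to show directly that the quadratic form of $\mathbf{D}$ is strictly positive for every nonzero vector. Observe that $\mathbf{D}=\mathbf{H}^H(\mathbf{I}_n-\mathbf{c}\mathbf{c}^H)\mathbf{H}$. Since $\|\mathbf{c}\|=1$, the matrix $\mathbf{P}\doteq\mathbf{I}_n-\mathbf{c}\mathbf{c}^H$ is the orthogonal projector onto the orthogonal complement of $\mathbf{c}$. It is therefore Hermitian and idempotent, so $\mathbf{P}=\mathbf{P}^H\mathbf{P}$. For any $\mathbf{x}\in\mathcal{C}^{m\times 1}$ we then get
\[
\mathbf{x}^H\mathbf{D}\mathbf{x}=\|\mathbf{P}\mathbf{H}\mathbf{x}\|^2=\|\mathbf{H}\mathbf{x}\|^2-|\mathbf{c}^H\mathbf{H}\mathbf{x}|^2\geq 0 .
\]
This gives positive semidefiniteness at once, and it remains to rule out a zero value for $\mathbf{x}\neq\mathbf{0}$.

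Suppose $\mathbf{x}\neq \mathbf{0}$ and $\mathbf{x}^H\mathbf{D}\mathbf{x}=0$. Then $\mathbf{P}\mathbf{H}\mathbf{x}=\mathbf{0}$, which means $\mathbf{H}\mathbf{x}=(\mathbf{c}^H\mathbf{H}\mathbf{x})\,\mathbf{c}$, i.e., $\mathbf{H}\mathbf{x}$ is a scalar multiple of $\mathbf{c}$. Because $\texttt{rank}(\mathbf{H})=m$, the columns of $\mathbf{H}$ are linearly independent, so $\mathbf{H}\mathbf{x}\neq\mathbf{0}$ and the scalar $\mathbf{c}^H\mathbf{H}\mathbf{x}$ is nonzero. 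Dividing by this scalar gives $\mathbf{c}=\mathbf{H}\mathbf{x}/(\mathbf{c}^H\mathbf{H}\mathbf{x})\in\texttt{range}(\mathbf{H})$, which contradicts the hypothesis. Hence $\mathbf{x}^H\mathbf{D}\mathbf{x}>0$ for all $\mathbf{x}\neq\mathbf{0}$, and $\mathbf{D}>0$.

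An equivalent route is via Cauchy--Schwarz: $|\mathbf{c}^H\mathbf{H}\mathbf{x}|^2\leq\|\mathbf{c}\|^2\|\mathbf{H}\mathbf{x}\|^2$, with equality if and only if $\mathbf{H}\mathbf{x}$ is parallel to $\mathbf{c}$. The only delicate step is this equality case, namely converting ``$\mathbf{H}\mathbf{x}\parallel\mathbf{c}$'' into ``$\mathbf{c}\in\texttt{range}(\mathbf{H})$''. It requires $\mathbf{H}\mathbf{x}\neq \mathbf{0}$, and that is exactly where the full-column-rank assumption is used.

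For the application to $\mathbf{T}_1$ with $\mathbf{H}=\mathbf{H}_{E,A}'$ and $\mathbf{c}=\mathbf{\bar g}_{E,A}$, I would briefly note that the hypotheses hold generically. First, $\mathbf{H}_{E,A}[\mathbf{v}_A,\mathbf{V}_A]$ has full column rank $n_A$ when $n_E\geq n_A$, so $\mathbf{H}_{E,A}'$ has rank $n_A-1$. Second, $\mathbf{g}_{E,A}=\mathbf{H}_{E,A}\mathbf{v}_A$ cannot lie in $\texttt{range}(\mathbf{H}_{E,A}\mathbf{V}_A)$, because that would give a nonzero vector in the null space of $\mathbf{H}_{E,A}$, which contradicts injectivity.
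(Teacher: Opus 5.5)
Your proof is correct, but it takes a different route from the paper.

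\textbf{The paper's route.} The paper takes the thin SVD $\mathbf{H}=\mathbf{U}\boldsymbol{\Sigma}\mathbf{V}^H$ and writes $\mathbf{D}=\mathbf{V}\boldsymbol{\Sigma}\mathbf{Q}\boldsymbol{\Sigma}\mathbf{V}^H$, with $\mathbf{Q}=\mathbf{I}_m-\mathbf{c}'\mathbf{c}'^H$ and $\mathbf{c}'=\mathbf{U}^H\mathbf{c}$. The condition $\mathbf{c}\notin\texttt{range}(\mathbf{H})$ forces $\|\mathbf{c}'\|<1$, so the eigenvalues of $\mathbf{Q}$ are $1-\|\mathbf{c}'\|^2>0$ and ones. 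Invertibility of $\boldsymbol{\Sigma}$ (full rank) then transfers $\mathbf{Q}>0$ to $\mathbf{D}>0$ by congruence.

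\textbf{Your route.} You avoid any factorization. You write $\mathbf{x}^H\mathbf{D}\mathbf{x}=\|(\mathbf{I}_n-\mathbf{c}\mathbf{c}^H)\mathbf{H}\mathbf{x}\|^2$ and exclude the Cauchy--Schwarz equality case by contradiction. This is more elementary, and it shows exactly where each hypothesis enters: $\|\mathbf{c}\|=1$ makes $\mathbf{I}_n-\mathbf{c}\mathbf{c}^H$ a projector, full column rank gives $\mathbf{H}\mathbf{x}\neq\mathbf{0}$, and $\mathbf{c}\notin\texttt{range}(\mathbf{H})$ rules out equality. It also makes evident that the range condition is necessary, since $\mathbf{H}\mathbf{x}=\mathbf{c}$ would give $\mathbf{x}^H\mathbf{D}\mathbf{x}=0$.

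\textbf{What the paper's route buys.} The spectral argument gives a quantitative margin rather than bare positivity:
\[
\lambda_{\min}(\mathbf{D})\geq\sigma_{\min}^2(\mathbf{H})\,(1-\|\mathbf{U}^H\mathbf{c}\|^2),
\]
that is, the squared smallest singular value of $\mathbf{H}$ times the squared sine of the angle between $\mathbf{c}$ and $\texttt{range}(\mathbf{H})$. This is the sort of bound one would need to say how fast $t_2''=\mathbf{c}_{E,A}^H\mathbf{B}_2^{-1}\mathbf{c}_{E,A}$ vanishes as $p_A$ grows.

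\textbf{Your application check.} Your closing argument is also correct: if $\mathbf{H}_{E,A}$ is injective, then $\mathbf{H}_{E,A}'$ has rank $n_A-1$ and $\mathbf{\bar g}_{E,A}\notin\texttt{range}(\mathbf{H}_{E,A}')$. This supplies the range condition that the paper relies on for $\mathbf{T}_1$ without checking it.
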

 \begin{IEEEproof}
 Let the SVD of $\mathbf{H}$ be $\mathbf{H}=\mathbf{U}\boldsymbol{\Sigma}\mathbf{V}^H$ where $\mathbf{U}\in\mathcal{C}^{n\times m}$ and $\mathbf{V}\in\mathbf{C}^{m\times m}$ each has $m$ orthonormal columns, and $\boldsymbol{\Sigma}$ is a $m\times m$ positive diagonal matrix since $\texttt{rank}(\mathbf{H})=m$. Then
 $\mathbf{D}=\mathbf{V}\boldsymbol{\Sigma}\mathbf{Q}\boldsymbol{\Sigma}\mathbf{V}^H$ with
$
   \mathbf{Q}=\mathbf{I}_m-\mathbf{c}'\mathbf{c}'^H
 $
 and $\mathbf{c}'=\mathbf{U}^H\mathbf{c}\in\mathcal{C}^{m\times 1}$. Since $\mathbf{c}\notin \texttt{range}(\mathbf{H})$, $\|\mathbf{c}'\|<\|\mathbf{c}\|=1$ and hence all eigenvalues of $\mathbf{Q}$ are positive (i.e., one of the eigenvalues of $\mathbf{Q}$ is $1-\|\mathbf{c}'\|^2$, and all others of them are ones). Therefore, $\mathbf{Q}>0$ and hence $\mathbf{D}>0$.
 \end{IEEEproof}

 Hence $\lim_{p_A\to\infty}t_2''=\lim_{p_A\to\infty}\mathbf{c}_{E,A}^H\mathbf{B}_2^{-1}\mathbf{c}_{E,A}= 0$. Therefore, it follows from \eqref{eq:35} that for $n_E\geq n_A$,
  \begin{align}\label{eq:lim_AE}
  &\lim_{p_A\to\infty}\sigma_{\Delta s_{A|E}}^2
=\frac{t_2'-1}{t_2'}
=\frac{\sigma_{\hat r_{B|A}}^2-\frac{ \sigma_{\hat r_{B|A}}^4p_B\|\mathbf{h}_{E,B}\|^2}{p_B\|\mathbf{h}_{E,B}\|^2+1}}{1+\sigma_{\hat r_{B|A}}^2-\frac{ \sigma_{\hat r_{B|A}}^4p_B\|\mathbf{h}_{E,B}\|^2}{p_B\|\mathbf{h}_{E,B}\|^2+1}}\notag\\
&>\frac{\sigma_{\hat r_{B|A}}^2-\sigma_{\hat r_{B|A}}^4}
{1+\sigma_{\hat r_{B|A}}^2-\sigma_{\hat r_{B|A}}^4}
=\lim_{p_A\to\infty}\sigma_{\Delta s_{A|B}}^2,
\end{align}
the last expression of which is \eqref{eq:lim_AB} subject to $p_B>0$.

\subsubsection{For $n_E<n_A$}
In this case, $\mathbf{T}_1$ has the rank $n_E-1<n_A-1$. Let $\mathbf{Q}_1\in \mathcal{C}^{(n_A-1)\times (n_E-1)}$ be the matrix consisting of the $n_E-1$ eigenvectors corresponding to the $n_E-1$ nonzero eigenvalues of $\mathbf{T}_1$, and $\mathbf{Q}_2$ be the matrix consisting of the other $n_A-n_E$ eigenvectors of $\mathbf{T}_1$. Then
$\lim_{p_A\to\infty}t_2''=\lim_{p_A\to\infty}\mathbf{c}_{E,A}^H\mathbf{B}_2^{-1}\mathbf{c}_{E,A}= \|\mathbf{Q}_2\mathbf{c}_{E,A}\|^2$. It follows that
\begin{align}\label{eq:38}
  &\lim_{p_A\to\infty}\sigma_{\Delta s_{A|E}}^2
=\frac{t_2'-1}{t_2'}
+\frac{(1-\gamma_A)\|\mathbf{Q}_2\mathbf{c}_{E,A}\|^2}
{(n_A-1)\gamma_At_2'^2\|\mathbf{g}_{E,A}\|^2}.
\end{align}


\subsection{Secrecy rate of SIMO-MISO-STEEP}

With the effective capacities $C_{A|B}$ and $C_{A|E}$ shown earlier, an achievable secrecy rate of the SIMO-MISO-STEEP scheme from Alice to Bob  in bits per round-trip channel use is $R_{s,\texttt{S-M-STEEP}}^+$ with
\begin{align}\label{eq:RsAB}
  &R_{s,\texttt{S-M-STEEP}} =C_{A|B}-C_{A|E}=\log\frac{\sigma_{\Delta s_{A|E}}^2}{\sigma_{\Delta s_{A|B}}^2}.
\end{align}

Then, according to \eqref{eq:lim_AE}, and that $R_{s,\texttt{S-M-STEEP}}$ increases as $n_E$ decreases, we have:
\begin{Proposition}
  For all $n_E\geq 1$ and $p_B>0$,
\begin{equation}\label{eq:limit_SIMO-MISO-STEEP}
  \lim_{p_A\to\infty}R_{s,\texttt{S-M-STEEP}}>0.
\end{equation}
\end{Proposition}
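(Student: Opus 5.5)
Since $R_{s,\texttt{S-M-STEEP}}=\log(\sigma_{\Delta s_{A|E}}^2/\sigma_{\Delta s_{A|B}}^2)$ by \eqref{eq:RsAB}, and both MSEs have finite limits as $p_A\to\infty$ (the denominator limit \eqref{eq:lim_AB} being strictly positive for $p_B>0$), the plan is to show that
\[
\lim_{p_A\to\infty}\sigma_{\Delta s_{A|E}}^2>\lim_{p_A\to\infty}\sigma_{\Delta s_{A|B}}^2 .
\]
Continuity of $\log$ then gives \eqref{eq:limit_SIMO-MISO-STEEP}. I will split into the two cases already treated above, $n_E\geq n_A$ and $n_E<n_A$.

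\textbf{Case $n_E\geq n_A$.} This is essentially \eqref{eq:lim_AE}. I will verify the strict inequality there. Write $x=\sigma_{\hat r_{B|A}}^2\in(0,1)$ for $p_B>0$, and $\beta=\frac{p_B\|\mathbf{h}_{E,B}\|^2}{p_B\|\mathbf{h}_{E,B}\|^2+1}\in[0,1)$. The function $f(u)=u/(1+u)$ is strictly increasing, and $x-\beta x^2>x-x^2>0$. Hence
\[
f(x-\beta x^2)>f(x-x^2).
\]
It remains to note that \eqref{eq:lim_AB} equals $f(\sigma_{\Delta r_{B|A}}^2\sigma_{\hat r_{B|A}}^2)=f(x(1-x))$. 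This uses $\sigma_{\Delta r_{B|A}}^2=1-x$. I also need to justify $\lim t_2''=0$. Here $\mathbf{B}_2\approx\mathbf{I}+\frac{p_A(1-\gamma_A)}{n_A-1}\mathbf{T}_1$ with $\mathbf{T}_1>0$ by Lemma 1. The hypothesis $\mathbf{\bar g}_{E,A}\notin\texttt{range}(\mathbf{H}_{E,A}')$ holds generically, because $\mathbf{H}_{E,A}$ has full column rank. Given this, $\mathbf{B}_2^{-1}\to 0$. The extra term in \eqref{eq:35} is of order $p_A^2\cdot p_A^{-2}\cdot t_2''\to 0$. So the limit is $(t_2'-1)/t_2'$, as stated.

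\textbf{Case $n_E<n_A$.} I will use \eqref{eq:38}. Its second term is nonnegative, so the limit is at least $(t_2'-1)/t_2'$. This exceeds the Bob limit by the same computation as in the first case. So the strict inequality persists; this matches the remark that $R_{s,\texttt{S-M-STEEP}}$ only improves when $n_E$ is smaller. An alternative justification is that removing Eve antennas can only increase her MSE, so the $n_E<n_A$ case is dominated by an embedding into a larger $n_E$. I prefer the direct use of \eqref{eq:38}.

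\textbf{Main obstacle.} The delicate step is rigor in the limit of $\mathbf{B}_2$. The approximation in \eqref{eq:B2} must be replaced by an exact statement: $\mathbf{A}_2^{-1}$ converges to $\mathbf{I}-\mathbf{\bar g}_{E,A}\mathbf{\bar g}_{E,A}^H$ with $O(1/p_A)$ error. The product $p_A^2\,t_2''/(t_2'p_A\gamma_A\|\mathbf{g}_{E,A}\|^2+1)^2$ must then have the claimed limit. In the rank-deficient case this requires a perturbation argument on the null space of $\mathbf{T}_1$. However, since only a lower bound on Eve's MSE is needed, I can avoid this difficulty altogether: the $t_2''$ term is nonnegative because $\mathbf{B}_2>0$. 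Therefore
\[
\sigma_{\Delta s_{A|E}}^2\geq\frac{(t_2'-1)p_A\gamma_A\|\mathbf{g}_{E,A}\|^2+1}{t_2'p_A\gamma_A\|\mathbf{g}_{E,A}\|^2+1}
\]
for every $p_A$. This lower bound tends to $(t_2'-1)/t_2'$, which yields the result uniformly for all $n_E\geq 1$, provided $\mathbf{g}_{E,A}\neq 0$. If $\mathbf{g}_{E,A}=0$, Eve's MSE is $1$ and the claim is trivial.
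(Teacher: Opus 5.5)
Your proof is correct, and its decisive step differs from the paper's.

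\textbf{The paper's route.} The paper computes the exact limit $(t_2'-1)/t_2'$ of $\sigma_{\Delta s_{A|E}}^2$ only for $n_E\geq n_A$. It does so by showing $t_2''\to 0$ via the approximation \eqref{eq:B2} and Lemma 1, which needs $\mathbf{H}_{E,A}$ of full column rank so that $\mathbf{\bar g}_{E,A}\notin\texttt{range}(\mathbf{H}_{E,A}')$. It then reaches $n_E<n_A$ through the remark, stated without proof, that the secrecy rate increases as $n_E$ decreases.

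\textbf{Your route.} You discard the artificial-noise term in \eqref{eq:35}, which is nonnegative because $\mathbf{B}_2\succeq\mathbf{I}$. In effect you bound Eve's MSE below by her MSE without artificial noise, for every $p_A$ and every $n_E$. Combined with your check $x-\beta x^2>x-x^2$ and the monotonicity of $u/(1+u)$, this proves the proposition with no genericity assumption, no Lemma 1, and no monotonicity in $n_E$.

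It also makes your appeal to \eqref{eq:38} unnecessary, which is just as well. Since $\mathbf{A}_2^{-1}=\mathbf{I}-\mathbf{\bar g}_{E,A}\mathbf{\bar g}_{E,A}^H+\frac{1}{1+t_2'p_A\gamma_A\|\mathbf{g}_{E,A}\|^2}\mathbf{\bar g}_{E,A}\mathbf{\bar g}_{E,A}^H$, the approximation \eqref{eq:B2} drops a term $\epsilon\,\mathbf{c}_{E,A}\mathbf{c}_{E,A}^H$ whose coefficient $\epsilon$ tends to a nonzero constant. That term is harmless when $\mathbf{T}_1>0$. When $n_E<n_A$, however, it survives on the null space of $\mathbf{T}_1$, so \eqref{eq:38} is not exact there. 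Its extra term is still nonnegative, so your conclusion in that case stands.

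\textbf{Trade-off.} The paper's route gives the exact limit for $n_E\geq n_A$, showing that the artificial noise contributes nothing asymptotically in that regime. Yours gives only a lower bound on $\lim_{p_A\to\infty}R_{s,\texttt{S-M-STEEP}}$, which is all the proposition requires.

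\textbf{One small addition.} Your lower bound controls only the $\liminf$; you should also show the limit exists. Write $\sigma_{\Delta s_{A|E}}^2=1-\gamma_A\mathbf{g}_{E,A}^H(p_A^{-1}\mathbf{I}+\mathbf{M})^{-1}\mathbf{g}_{E,A}$ with $\mathbf{M}=t_2'\gamma_A\mathbf{g}_{E,A}\mathbf{g}_{E,A}^H+\frac{1-\gamma_A}{n_A-1}\mathbf{H}_{E,A}'\mathbf{H}_{E,A}'^H$, which is independent of $p_A$. This expression is nonincreasing in $p_A$ and bounded below, so the limit exists.
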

Since $R_{s,\texttt{S-M-STEEP}}$ is a continuous function of $p_A$, there is a finite $\bar p_A$ such that $R_{s,\texttt{S-M-STEEP}}>0$ when $p_A>\bar p_A$. The expression of $\bar p_A$ is governed by the equation $\sigma_{\Delta s_{A|E}}^2=\sigma_{\Delta s_{A|B}}^2$.

Also note that the conventional SIMO scheme transmits a secret message (embedded in $r_B(k)$) from Bob to Alice while the SIMO-MISO-STEEP scheme transmits a secret message (embedded in $s_A(k)$) from Alice to Bob.

  \section{Conventional MISO Scheme}\label{sec:MISO_conv}

In this section, we present an optimal MISO scheme against optimized full-duplex multi-antenna eavesdropper. We will also present its secrecy rate $R_{\texttt{s,MISO,conv}}$ and its properties. Among these properties is that $R_{\texttt{s,MISO,conv}}^+=0$ even with infinite transmit power from user if Eve's channel is strong.

In a conventional MISO scheme, Alice sends out:
\begin{equation}\label{eq:xA1}
  \mathbf{x}_{A,1}(k)\doteq
  \sqrt{p_A\gamma_A}\mathbf{v}_Ar_A(k)+
\sqrt{\frac{p_A(1-\gamma_A)}{n_A-1}}\mathbf{V}_A\mathbf{n}_A(k)
\end{equation}
  where $[\mathbf{v}_A,\mathbf{V}_A]$ is an $n_A\times n_A$ unitary matrix, $\mathbf{v}_A=\mathbf{\bar h}_{B,A}^*$, and $\mathbf{n}_A(k)$ is an artificial noise.
Here $\gamma_A$ is exactly the power splitting factor between signal and artificial noise.

Then Bob receives
\begin{align}\label{eq:yB1k}
  &y_{B,1}(k) =\sqrt{p_A\gamma_A}\|\mathbf{h}_{B,A}\| r_A(k)+w_{B,1}(k)\notag\\
  &\,\,
  +\sqrt{p_{E,1}}\|\mathbf{h}_{B,E}\|n_{E,1}(k)
\end{align}
and Eve receives
\begin{align}\label{eq:yE1k_b}
  &\mathbf{y}_{E,1}(k) =\sqrt{p_A\gamma_A}\mathbf{g}_{E,A}r_A(k)+
  \sqrt{\frac{p_A(1-\gamma_A)}{n_A-1}}\mathbf{H}_{E,A}'\mathbf{n}_A(k)\notag\\
  &\,\,
  +\mathbf{w}_{E,1}(k)
\end{align}
with $\mathbf{g}_{E,A}=\mathbf{H}_{E,A}\mathbf{v}_A$, and $\mathbf{H}_{E,A}'=\mathbf{H}_{E,A}\mathbf{V}_A$.

Note that  $\mathbf{y}_{E,1}(k)$ and $\mathbf{y}_{E,2}(k)$ here in sections \ref{sec:MISO_conv} and \ref{sec:MISO-SIMO-STEEP} are different from the previous $\mathbf{y}_{E,1}(k)$ and $\mathbf{y}_{E,2}(k)$ in sections \ref{sec:SIMO_conv} and \ref{sec:SIMO-MISO-STEEP}.

We have assumed that Eve applies the optimized jamming noise $\sqrt{p_{E,1}}\mathbf{\bar h}_{B,E}^*n_{E,1}(k)$ against Bob. As before, the self-interference at Eve is zero (which in practice implies an upper limit on the jamming power from Eve).

Here, $r_A(k)\sim \mathcal{CN}(0,1)$, $\mathbf{n}_A(k)\sim\mathcal{CN}(0,\mathbf{I})$,  $w_{B,1}(k)\sim \mathcal{CN}(0,1)$, $n_{E,1}(k)\sim \mathcal{CN}(0,1)$, and $\mathbf{w}_{E,1}(k)\sim\mathcal{CN}(0,\mathbf{I})$.

\subsection{MMSE estimate of $r_A(k)$ by Bob}
The MMSE estimate $\hat r_{A|B}(k)$ of $r_A(k)$ by Bob from $y_{B,1}(k)$ in \eqref{eq:yB1k} is
\begin{align}\label{eq:hrABk}
  &\hat r_{A|B}(k) = \mathbb{E}\{r_A(k)y_{B,1}^*(k)\}(\mathbb{E}\{|y_{B,1}(k)|^2\})^{-1}y_{B,1}(k)
  \notag\\
  &=\frac{\sqrt{p_A\gamma_A}\|\mathbf{h}_{B,A}\|}{p_A\gamma_A\|\mathbf{h}_{B,A}\|^2
  +1+p_{E,1}\|\mathbf{h}_{B,E}\|^2}y_{B,1}(k).
\end{align}
The MSE of $\hat r_{A|B}(k)$  is
\begin{align}\label{eq:sDrA}
  &\sigma_{\Delta r_{A|B}}^2 = 1-\mathbb{E}\{r_A(k)y_{B,1}^*(k)\}(\mathbb{E}\{|y_{B,1}(k)|^2\})^{-1}
  \notag\\
  &\,\,\cdot(\mathbb{E}\{r_A(k)y_{B,1}^*(k)\})^*
  \notag\\
  &=
  \frac{1+p_{E,1}\|\mathbf{h}_{B,E}\|^2}
  {p_A\gamma_A\|\mathbf{h}_{B,A}\|^2
  +1+p_{E,1}\|\mathbf{h}_{B,E}\|^2}.
\end{align}

We will need the property that $0<\sigma_{\Delta r_{A|B}}^2<1$ if $p_A>0$.

\subsection{MMSE estimate of $r_A(k)$ by Eve}

Recall that $\mathbf{h}_{B,A}\in\mathcal{C}^{n_A\times 1}$ and $\mathbf{H}_{E,A}\in\mathcal{C}^{n_E\times n_A}$ are known to Eve.

\subsubsection{Elimination of artificial noise by Eve}
If $n_E\geq n_A$, there is a vector $\mathbf{t}_1\in\mathcal{C}^{n_E\times 1}$ such that
\begin{equation}\label{}
  \mathbf{t}_1^T\mathbf{H}_{E,A}=\mathbf{\bar h}_{B,A}^T.
\end{equation}
The minimum-norm solution of  $\mathbf{t}_1$ is $\mathbf{t}_1 =(\mathbf{H}_{E,A}^T)^\dag\mathbf{\bar h}_{B,A}$ where $\mathbf{H}^\dag$ is the pseudo-inverse of $\mathbf{H}$. If the SVD of $\mathbf{H}$ is $\mathbf{H}=\sum_{i=1}^{n_A}\sigma_i\mathbf{u}_i\mathbf{v}_i^H$, then $\mathbf{H}^\dag=\sum_{i=1}^{n_A}\frac{1}{\sigma_i}\mathbf{v}_i\mathbf{u}_i^H$ or $(\mathbf{H}^T)^\dag=\sum_{i=1}^{n_A}\frac{1}{\sigma_i}\mathbf{u}_i^*\mathbf{v}_i^T$. Therefore, it follows from \eqref{eq:yE1k_b} that
\begin{align}\label{eq:yE1P}
  &y_{E,1}'(k)\doteq  \mathbf{t}_1^T\mathbf{y}_{E,1}(k)
  =\sqrt{p_A\gamma_A}r_A(k)
  +w_{E,1}'(k)
\end{align}
where we have used $\mathbf{t}_1^T\mathbf{g}_{E,A}=\mathbf{t}_1^T\mathbf{H}_{E,A}\mathbf{v}_A
=\mathbf{\bar h}_{B,A}^T\mathbf{v}_A=1$,
$\mathbf{t}_1^T\mathbf{H}_{E,A}'=\mathbf{t}_1^T\mathbf{H}_{E,A}\mathbf{V}_A=\mathbf{\bar h}_{B,A}^T\mathbf{V}_A=0$, and $w_{E,1}'(k)\doteq\mathbf{t}_1^T\mathbf{w}_{E,1}(k)$.
Furthermore, $w_{E,1}'(k)$ is $\mathcal{CN}(0,\sigma_w^2)$ with
\begin{align}\label{}
  &\sigma_w^2 = \|\mathbf{t}_1\|^2=\mathbf{\bar h}_{B,A}^H(\mathbf{H}_{E,A}^T\mathbf{H}_{E,A}^*)^\dag\mathbf{\bar h}_{B,A}\notag\\
  &=\mathbf{\bar h}_{B,A}^H(\mathbf{H}_{E,A}^T\mathbf{H}_{E,A}^*)^{-1}\mathbf{\bar h}_{B,A},
\end{align}
which is invariant to $p_A\gamma_A$. Here $(\mathbf{H}^T\mathbf{H}^*)^\dag=\sum_{i=1}^{n_A}\frac{1}{\sigma_i^2}\mathbf{v}_i^*\mathbf{v}_i^T
=(\mathbf{H}^T\mathbf{H}^*)^{-1}$ with $\mathbf{H}\in\mathcal{C}^{n_E\times n_A}$ and $n_E\geq n_A$.

We see that for $n_E\geq n_A$, the artificial noise $\mathbf{n}_A(k)$ does not exist in $y_{E,1}'(k)$ although the SNR $\eta'$ of $y_{E,1}'(k)$ is $\frac{p_A\gamma_A}{\mathbf{\bar h}_{B,A}^H(\mathbf{H}_{E,A}^T\mathbf{H}_{E,A}^*)^{-1}\mathbf{\bar h}_{B,A}}$ while the SNR $\eta$ of $y_{E,1}(k)$ in \eqref{eq:yE1k_b} without $\mathbf{n}_A(k)$ would be $p_A\gamma_A\|\mathbf{g}_{E,A}\|^2
=p_A\gamma_A\mathbf{\bar h}_{B,A}^H(\mathbf{H}_{E,A}^T\mathbf{H}_{E,A}^*)\mathbf{\bar h}_{B,A}$. Here
\begin{align}\label{}
  &\frac{\eta}{\eta'}=\left (\mathbf{\bar h}_{B,A}^H(\mathbf{H}_{E,A}^T\mathbf{H}_{E,A}^*)\mathbf{\bar h}_{B,A}\right )\left (\mathbf{\bar h}_{B,A}^H(\mathbf{H}_{E,A}^T\mathbf{H}_{E,A}^*)^{-1}\mathbf{\bar h}_{B,A}\right )\notag\\
  &\geq 1.
\end{align}
 The inequality of the above can be proven by using the fact that the expectation of a positive random number $X$ times the expectation of $X^{-1}$ is larger than or equal to one.

However, the MMSE estimate of $r_A(k)$ from $y_{E,1}'(k)$ in \eqref{eq:yE1P} is suboptimal for Eve since the construction of $y_{E,1}'(k)$ is ad hoc.

\subsubsection{Optimal estimation of $r_A(k)$ by Eve}
The MMSE estimate $\hat r_{A|E}(k)$ of $r_A(k)$ directly from $\mathbf{y}_{E,1}(k)$ in \eqref{eq:yE1k_b} is optimal for Eve. The MSE of $\hat r_{A|E}(k)$ is
\begin{align}\label{eq:sDrAE}
  &\sigma_{\Delta r_{A|E}}^2 =
  1-p_A\gamma_A\mathbf{g}_{E,A}^H\left (p_A\gamma_A\mathbf{g}_{E,A}\mathbf{g}_{E,A}^H\right .\notag\\
  &\,\,\left .+\frac{p_A(1-\gamma_A)}{n_A-1}\mathbf{H}_{E,A}'\mathbf{H}_{E,A}'^H
  +\mathbf{I}\right )^{-1}\mathbf{g}_{E,A}.
\end{align}

\subsection{Properties of $\sigma_{\Delta r_{A|E}}^2$ in \eqref{eq:sDrAE}}
Using the matrix inverse lemma, we can rewrite \eqref{eq:sDrAE} as
\begin{align}\label{}
  &\sigma_{\Delta r_{A|E}}^2 =1-p_A\gamma_A\mathbf{g}_{E,A}^H\left (\mathbf{A}_3^{-1}
  -\right .\notag\\
  &\,\,\left .\frac{p_A(1-\gamma_A)}{n_A-1}\mathbf{A}_3^{-1}\mathbf{H}_{E,A}'\mathbf{B}_3^{-1}
  \mathbf{H}_{E,A}'^H\mathbf{A}_3^{-1}\right )\mathbf{g}_{E,A}
\end{align}
with $\mathbf{A}_3=p_A\gamma_A\mathbf{g}_{E,A}\mathbf{g}_{E,A}^H+\mathbf{I}$ and $\mathbf{B}_3
=\mathbf{I}+\frac{p_A(1-\gamma_A)}{n_A-1}\mathbf{H}_{E,A}'^H\mathbf{A}_3^{-1}\mathbf{H}_{E,A}'$. Using $\mathbf{g}_{E,A}^H\mathbf{A}_3^{-1}=\frac{1}{p_A\gamma_A\|\mathbf{g}_{E,A}\|^2+1}\mathbf{g}_{E,A}^H$, we have
\begin{align}\label{eq:sDrAE_2}
  &\sigma_{\Delta r_{A|E}}^2
  =1-p_A\gamma_A\left (\frac{\|\mathbf{g}_{E,A}\|^2}{p_A\gamma_A\|\mathbf{g}_{E,A}\|^2+1}-\frac{p_A(1-\gamma_A)}{n_A-1}
  \right .\notag\\
  &\,\,\left .\cdot\frac{1}{(p_A\gamma_A\|\mathbf{g}_{E,A}\|^2+1)^2}\mathbf{g}_{E,A}^H
  \mathbf{H}_{E,A}'\mathbf{B}_3^{-1}
  \mathbf{H}_{E,A}'^H\mathbf{g}_{E,A}\right )\notag\\
  &=\frac{1}{p_A\gamma_A\|\mathbf{g}_{E,A}\|^2+1}+\notag\\
  &\,\,\frac{\gamma_A(1-\gamma_A)p_A^2}{(n_A-1)
  (p_A\gamma_A\|\mathbf{g}_{E,A}\|^2+1)^2}\mathbf{g}_{E,A}^H
  \mathbf{H}_{E,A}'\mathbf{B}_3^{-1}
  \mathbf{H}_{E,A}'^H\mathbf{g}_{E,A}.
\end{align}
Using $\mathbf{A}_3^{-1}=\mathbf{I}-\frac{p_A\gamma_A}{1+p_A\gamma_A\|\mathbf{g}_{E,A}\|^2}\mathbf{g}_{E,A}
\mathbf{g}_{E,A}^H$, we have
\begin{align}\label{}
  &\mathbf{B}_3
=\mathbf{I}+\frac{p_A(1-\gamma_A)}{n_A-1}\left (\mathbf{H}_{E,A}'^H\mathbf{H}_{E,A}'\right .\notag\\
&\,\,\left .-\frac{p_A\gamma_A}{1+p_A\gamma_A\|\mathbf{g}_{E,A}\|^2}\mathbf{H}_{E,A}'^H\mathbf{g}_{E,A}
\mathbf{g}_{E,A}^H\mathbf{H}_{E,A}'\right)\notag\\
&\approx \mathbf{I}+\frac{p_A(1-\gamma_A)}{n_A-1}\mathbf{T}_1,
\end{align}
 where $\mathbf{T}_1=\mathbf{H}_{E,A}'^H\mathbf{H}_{E,A}'
-\mathbf{H}_{E,A}'^H\mathbf{\bar g}_{E,A}
\mathbf{\bar g}_{E,A}^H\mathbf{H}_{E,A}'$, and the approximation is subject to large $p_A$. This above property of $\mathbf{B}_3$ is similar to $\mathbf{B}_2$ in \eqref{eq:B2}.
Furthermore, for $n_E\geq n_A$,
 $\mathbf{T}_1>0$ as discussed before.

Therefore, for large $p_A$ and $n_E\geq n_A$, \eqref{eq:sDrAE_2} becomes
\begin{align}\label{eq:sDrAE_3}
  &\sigma_{\Delta r_{A|E}}^2
  \approx
  \frac{1}{p_A
  \gamma_A\|\mathbf{g}_{E,A}\|^2}\left (1+t_3\right )
\end{align}
with
$t_3=\mathbf{\bar g}_{E,A}^H
  \mathbf{H}_{E,A}'\mathbf{T}_1^{-1}
  \mathbf{H}_{E,A}'^H\mathbf{\bar g}_{E,A}>0$
 which is invariant to $p_A$. We also see that $t_3$ is invariant to any scaling on $\mathbf{H}_{E,A}'$.

\subsection{Secrecy rate of the conventional MISO scheme}
Using \eqref{eq:sDrA} and \eqref{eq:sDrAE}, an achievable secrecy rate from Alice to Bob using the conventional MISO scheme in bits per channel use is $R_{s,\texttt{MISO,conv}}^+$ with
\begin{equation}\label{eq:R_s_conv_AB}
  R_{s,\texttt{MISO,conv}}=\log\frac{\sigma_{\Delta r_{A|E}}^2}{\sigma_{\Delta r_{A|B}}^2}.
\end{equation}

It follows from \eqref{eq:sDrA} and \eqref{eq:sDrAE_3} that for $n_E\geq n_A$,
\begin{align}\label{}
&\lim_{p_A\to\infty} R_{s,\texttt{MISO,conv}}=\log\frac{1}{p_A
  \gamma_A\|\mathbf{g}_{E,A}\|^2}\left (1+t_3\right )\notag\\
  &\,\,-\log \frac{1+p_{E,1}\|\mathbf{h}_{B,E}\|^2}
  {p_A\gamma_A\|\mathbf{h}_{B,A}\|^2
  }\notag\\
  &=\log\frac{\|\mathbf{h}_{B,A}\|^2(1+t_3)}{\|\mathbf{g}_{E,A}\|^2(1+p_{E,1}\|\mathbf{h}_{B,E}\|^2)}.
\end{align}
Therefore, we have the following:
\begin{Proposition}
  For $n_E\geq n_A$ and large $p_A$:
\begin{enumerate}
  \item $R_{s,\texttt{MISO,conv}}$ is invariant to $\gamma_A$ provided $0<\gamma_A<1$, i.e., invariant to the fraction of power used for artificial noise.
  \item $R_{s,\texttt{MISO,conv}}$ is increased by the use of artificial noise because of $t_3>0$. But  $t_3$ is invariant to any scaling on $\mathbf{H}_{E,A}$, or equivalently $t_3$ is invariant to the distance between Alice and Eve.
  \item $R_{s,\texttt{MISO,conv}}$ reduces as the scaling on $\mathbf{H}_{E,A}$ (equivalently on $\mathbf{g}_{E,A}$) increases and/or as $p_{E,1}$ increases.
  \item $R_{s,\texttt{MISO,conv}}^+=0$  if
      \begin{equation}\label{eq:space_MISO}
        \|\mathbf{g}_{E,A}\|^2(1+p_{E,1}\|\mathbf{h}_{B,E}\|^2)>
      \|\mathbf{h}_{B,A}\|^2(1+t_3).
      \end{equation}
\end{enumerate}
\end{Proposition}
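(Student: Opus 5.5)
The plan is to read all four items off the large-$p_A$ limit just derived. Concretely, I would combine \eqref{eq:sDrA} with the asymptotic form \eqref{eq:sDrAE_3}, in which both MSEs decay like $1/(p_A\gamma_A)$. Substituting both into \eqref{eq:R_s_conv_AB} gives the closed-form limit
\begin{equation*}
\lim_{p_A\to\infty}R_{s,\texttt{MISO,conv}}=\log\frac{\|\mathbf{h}_{B,A}\|^2(1+t_3)}{\|\mathbf{g}_{E,A}\|^2(1+p_{E,1}\|\mathbf{h}_{B,E}\|^2)}.
\end{equation*}
Before reading off the items, I would make the approximation $\mathbf{B}_3\approx\mathbf{I}+\frac{p_A(1-\gamma_A)}{n_A-1}\mathbf{T}_1$ rigorous. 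The exact $\mathbf{B}_3$ differs from this by a term $\frac{p_A(1-\gamma_A)}{n_A-1}\,O(1/p_A)$, which stays bounded. Since $\mathbf{T}_1>0$ by Lemma 1 (for $n_E\geq n_A$, using $\mathbf{c}=\mathbf{\bar g}_{E,A}\notin\texttt{range}(\mathbf{H}_{E,A}')$), it follows that $p_A\mathbf{B}_3^{-1}\to\frac{n_A-1}{1-\gamma_A}\mathbf{T}_1^{-1}$. Plugging this into \eqref{eq:sDrAE_2} yields $p_A\gamma_A\|\mathbf{g}_{E,A}\|^2\sigma_{\Delta r_{A|E}}^2\to 1+t_3$. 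The factors $\gamma_A$, $(1-\gamma_A)$ and $(n_A-1)$ cancel exactly here, and that cancellation is the fact item 1 rests on.

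\textbf{Item 1.} In the limit expression, $\gamma_A$ appears only through the common factor $p_A\gamma_A$, which cancels between numerator and denominator. I would also check that $t_3$ does not depend on $\gamma_A$: it depends only on $\mathbf{\bar g}_{E,A}$, $\mathbf{H}_{E,A}'$ and $\mathbf{T}_1$, none of which involve $\gamma_A$. This holds for any $0<\gamma_A<1$, since $1-\gamma_A>0$ is needed for $\mathbf{B}_3^{-1}$ to scale as $1/p_A$.

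\textbf{Item 2.} Here $t_3>0$ because $\mathbf{T}_1^{-1}>0$ and $\mathbf{H}_{E,A}'^H\mathbf{\bar g}_{E,A}\neq 0$ generically. Without artificial noise ($\gamma_A=1$), the limit instead has $1$ in place of $1+t_3$, so artificial noise strictly helps. For scale invariance, replace $\mathbf{H}_{E,A}$ by $c\mathbf{H}_{E,A}$. Then $\mathbf{\bar g}_{E,A}$ is unchanged, $\mathbf{H}_{E,A}'$ scales by $c$ and $\mathbf{T}_1$ by $|c|^2$, so $t_3$ is unchanged.

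\textbf{Item 3.} Under the same scaling, $\|\mathbf{g}_{E,A}\|^2$ grows by $|c|^2$ while $t_3$ stays fixed. The limit is also strictly decreasing in $p_{E,1}$.

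\textbf{Item 4.} Condition \eqref{eq:space_MISO} makes the limit negative. To conclude $R^+=0$ for all $p_A$, and not just asymptotically, I need $R_{s,\texttt{MISO,conv}}$ to be nondecreasing in $p_A$, so that the limit is its supremum. This monotonicity is the main obstacle. My first attempt would write $\sigma^2_{\Delta r_{A|B}}/\sigma^2_{\Delta r_{A|E}}$ as a ratio of effective SINRs and differentiate. Failing that, I would restrict the claim to \emph{large $p_A$}, which is the regime the proposition states, where the limit directly gives $R<0$.
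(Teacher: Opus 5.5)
Your proposal is correct and follows the paper's route. All four items are read off the closed-form large-$p_A$ limit built from \eqref{eq:sDrA} and \eqref{eq:sDrAE_3}, and item 4 is settled only in the large-$p_A$ regime, exactly as in the paper; your Lemma-1 justification of $p_A\mathbf{B}_3^{-1}\to\frac{n_A-1}{1-\gamma_A}\mathbf{T}_1^{-1}$ usefully tightens the paper's ``$\approx$''. One remark on item 4: monotonicity in $p_A$ is sufficient but not necessary for the all-$p_A$ version, since $\sigma_{\Delta r_{A|B}}^2=1/(1+p_A\gamma_A\kappa)$ with constant $\kappa=\|\mathbf{h}_{B,A}\|^2/(1+p_{E,1}\|\mathbf{h}_{B,E}\|^2)$, while $\sigma_{\Delta r_{A|E}}^2=1/(1+p_A\gamma_A f)$ with $f=\mathbf{g}_{E,A}^H(\frac{p_A(1-\gamma_A)}{n_A-1}\mathbf{H}_{E,A}'\mathbf{H}_{E,A}'^H+\mathbf{I})^{-1}\mathbf{g}_{E,A}$ nonincreasing in $p_A$ with limit $\|\mathbf{g}_{E,A}\|^2/(1+t_3)$ (by your own asymptotics), so \eqref{eq:space_MISO}, which is $\kappa<\|\mathbf{g}_{E,A}\|^2/(1+t_3)$, already forces $\kappa<f$ and hence $R_{s,\texttt{MISO,conv}}<0$ at every $p_A>0$.
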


It remains open to prove or disprove that \eqref{eq:space_MISO} is sufficient for $R_{s,\texttt{MISO,conv}}^+=0$ under $n_E\geq n_A$ and \emph{any} $p_A>0$. A challenge here is to prove or disprove that $R_{s,\texttt{MISO,conv}}$ is monotonically increasing function of $p_A$. This issue will not be addressed in this paper.


\section{MISO-SIMO-STEEP}\label{sec:MISO-SIMO-STEEP}
In this section, we present the MISO-SIMO-STEEP scheme, the phase 1 of which has a strong connection with the MISO scheme shown in the previous section. Much of this section focuses on phase 2 of MISO-SIMO-STEEP. The secrecy rate $R_{\texttt{s,M-S-STEEP}}$ of this scheme is analyzed and shown to be positive with a sufficiently large phase-2 power from user despite Eve's finite jamming power and finite eavesdropping channel quality.

\subsection{Phase 1 of MISO-SIMO-STEEP}
In phase 1, Alice broadcasts a sequence of scalar probes along with artificial noise, which is similar to the conventional MISO scheme except that the transmitted signal $r_A(k)$ from Alice is now \emph{uncoded}.

Note that if $\gamma_A=\frac{1}{n_A}$, then the overall signal transmitted by Alice here, i.e.,  $\mathbf{x}_{A,1}(k)$ in \eqref{eq:xA1}, is equivalent to the probing vector $\sqrt{\frac{p_A}{n_A}}\mathbf{x}_A(k)$ defined in  \cite{Hua_STEEP_2025} and \cite{Hua_Rahman_2024_ICC}  where $\mathbf{v}_A^H\mathbf{x}_A(k)$ for MISO channel becomes an effective probe to be estimated by Bob. So, the probing signal transmitted by Alice in  \cite{Hua_STEEP_2025} and \cite{Hua_Rahman_2024_ICC} has an artificial noise implicitly included as long as $n_A>n_B$.

\subsection{Phase 2 of MISO-SIMO-STEEP}
In phase 2, Bob sends out $\sqrt{p_B}(\hat r_{A|B}(k)+s_B(k))$ with power $p_B(\sigma_{\hat r_{A|B}}^2+1)<2p_B$, and Alice receives
\begin{align}\label{eq:yA2k}
  &\mathbf{y}_{A,2}(k) = \sqrt{p_B}\mathbf{h}_{A,B}(\hat r_{A|B}(k)+s_B(k))\notag\\
  &\,\,+\mathbf{w}_A(k)+\sqrt{p_{E,2}}\mathbf{H}_{A,E}\mathbf{v}_{A,E} n_{E,2}(k),
\end{align}
where $\sqrt{p_{E,2}}\mathbf{v}_{A,E} n_{E,2}(k)$ is the jamming noise from Eve. The optimal $\mathbf{v}_{A,E}$ under large $p_{E,2}$ is the same as that discussed in section \ref{sec:SIMO_conv}.

Note that
Eve  now receives
\begin{equation}\label{}
  \mathbf{y}_{E,2}(k)=\sqrt{p_B}\mathbf{h}_{E,B}(\hat r_{A|B}(k)+s_B(k))+\mathbf{w}_{E,2}(k),
\end{equation}
where the self-interference at Eve is assumed as before to be zero. Again $\mathbf{y}_{E,1}(k)$ and  $\mathbf{y}_{E,2}(k)$ here in sections \ref{sec:MISO_conv} and \ref{sec:MISO-SIMO-STEEP} are different from those in sections \ref{sec:SIMO_conv} and \ref{sec:SIMO-MISO-STEEP}.

\subsection{Optimal Estimation of $s_B(k)$ at Alice}
We know $\mathbb{E}\{\mathbf{y}_{A,2}(k)|r_A(k)\}=\sqrt{p_B}\mathbf{h}_{A,B}\sigma_{\hat r_{A|B}}^2 r_A(k)$. Then
\begin{align}\label{eq:DyA2k}
  &\Delta \mathbf{y}_{A,2}(k)\doteq \mathbf{y}_{A,2}(k)-\mathbb{E}\{\mathbf{y}_{A,2}(k)|r_A(k)\}\notag\\
  &=\sqrt{p_B}\mathbf{h}_{A,B}(\hat r_{A|B}(k)-\sigma_{\hat r_{A|B}}^2 r_A(k)+s_B(k))+
  \mathbf{w}_A(k)\notag\\
  &\,\,+\sqrt{p_{E,2}}\mathbf{H}_{A,E}\mathbf{v}_{A,E}n_{E,2}(k).
\end{align}
Then
the MSE of the MMSE estimate  $\hat s_{B|A}(k)$ of $s_B(k)$ from the knowledge of $\mathbf{y}_{A,2}(k)$ and $r_A(k)$  is
\begin{align}\label{eq:sDsBA}
  &\sigma_{\Delta s_{B|A}}^2
  =1-p_B\mathbf{h}_{A,B}^H\left (p_B(\sigma_{\Delta r_{A|B}}^2\sigma_{\hat r_{A|B}}^2+1)\mathbf{h}_{A,B}\mathbf{h}_{A,B}^H+\mathbf{I}\right .\notag\\
  &\,\,\left .+p_{E,2}\mathbf{H}_{A,E}\mathbf{v}_{A,E}\mathbf{v}_{A,E}^H\mathbf{H}_{A,E}^H\right )^{-1}\mathbf{h}_{A,B}.
\end{align}

\subsection{Properties of $\sigma_{\Delta s_{B|A}}^2$ in \eqref{eq:sDsBA}}
Applying the matrix inverse lemma, we have
\begin{align}\label{eq:sDsBA2}
  &\sigma_{\Delta s_{B|A}}^2
  =1-p_B\mathbf{h}_{A,B}^H\left (\mathbf{A}_4\right .\notag\\
  &\,\,\left .+p_{E,2}\mathbf{H}_{A,E}\mathbf{v}_{A,E}\mathbf{v}_{A,E}^H\mathbf{H}_{A,E}^H\right )^{-1}\mathbf{h}_{A,B}\notag\\
  &=1-p_B\mathbf{h}_{A,B}^H\left (\mathbf{A}_4^{-1}\right .\notag\\
  &\,\,\left .-p_{E,2}\mathbf{A}_4^{-1}\mathbf{H}_{A,E}\mathbf{v}_{A,E}b_4^{-1}
  \mathbf{v}_{A,E}^H\mathbf{H}_{A,E}^H\mathbf{A}_4^{-1}\right )\mathbf{h}_{A,B}.
\end{align}
where $\mathbf{A}_4=p_B(\sigma_{\Delta r_{A|B}}^2\sigma_{\hat r_{A|B}}^2+1)\mathbf{h}_{A,B}\mathbf{h}_{A,B}^H+\mathbf{I}$ and $b_4=1+p_{E,2}\mathbf{v}_{A,E}^H\mathbf{H}_{A,E}^H\mathbf{A}_4^{-1}
\mathbf{H}_{A,E}\mathbf{v}_{A,E}$. Using $\mathbf{A}_4^{-1}
=\mathbf{I}-\frac{p_B(\sigma_{\Delta r_{A|B}}^2\sigma_{\hat r_{A|B}}^2+1)}{
p_B(\sigma_{\Delta r_{A|B}}^2\sigma_{\hat r_{A|B}}^2+1)\|\mathbf{h}_{A,B}\|^2+1}\mathbf{h}_{A,B}\mathbf{h}_{A,B}^H$, we have
\begin{align}\label{}
 & b_4=1+p_{E,2}\left (\|\mathbf{H}_{A,E}\mathbf{v}_{A,E}\|^2\right .\notag\\
 &\,\,\left . -\frac{p_B(\sigma_{\Delta r_{A|B}}^2\sigma_{\hat r_{A|B}}^2+1)}{
p_B(\sigma_{\Delta r_{A|B}}^2\sigma_{\hat r_{A|B}}^2+1)\|\mathbf{h}_{A,B}\|^2+1}|\mathbf{h}_{A,B}^H\mathbf{H}_{A,E}
\mathbf{v}_{A,E}|^2\right )\notag\\
&=1+p_{E,2}\left (c_{A,E}'^2\right .\notag\\
 &\,\,\left . -\frac{p_B(\sigma_{\Delta r_{A|B}}^2\sigma_{\hat r_{A|B}}^2+1)\|\mathbf{h}_{A,B}\|^2}
  {p_B(\sigma_{\Delta r_{A|B}}^2\sigma_{\hat r_{A|B}}^2+1)\|\mathbf{h}_{A,B}\|^2+1}c_{A,E}^2
\right )
\end{align}
where $c_{A,E}^2=|\mathbf{\bar h}_{A,B}^H\mathbf{H}_{A,E}
\mathbf{v}_{A,E}|^2$ and $c_{A,E}'^2=\|\mathbf{H}_{A,E}\mathbf{v}_{A,E}\|^2$.
Then
\begin{equation}\label{}
  \lim_{p_B\to\infty}b_4
  =1+p_{E,2}(c_{A,E}'^2-c_{A,E}^2).
\end{equation}
Using $\mathbf{h}_{A,B}^H\mathbf{A}_4^{-1}
=\frac{1}{p_Bt_4+1}\mathbf{h}_{A,B}^H$ with $t_4=(\sigma_{\Delta r_{A|B}}^2\sigma_{\hat r_{A|B}}^2+1)\|\mathbf{h}_{A,B}\|^2$, we have
\begin{align}\label{eq:sDsBA3}
  &\sigma_{\Delta s_{B|A}}^2
  =1-p_B\left (\frac{\|\mathbf{h}_{A,B}\|^2}{p_Bt_4+1}\right .\notag\\
  &\,\,\left .-\frac{p_{E,2}}{(p_Bt_4+1)^2}
  \frac{\|\mathbf{h}_{A,B}^H\mathbf{H}_{A,E}\mathbf{v}_{A,E}\|^2}{b_4}\right )\notag\\
  &=1-p_B\left (\frac{\|\mathbf{h}_{A,B}\|^2}{p_Bt_4+1}-\frac{p_{E,2}\|\mathbf{h}_{A,B}\|^2c_{A,E}^2}
  {(p_Bt_4+1)^2b_4}\right )\notag\\
  &=\frac{p_B(t_4-\|\mathbf{h}_{A,B}\|^2)+1}{p_Bt_4+1}+
  \frac{p_Bp_{E,2}\|\mathbf{h}_{A,B}\|^2c_{A,E}^2}{(p_Bt_4+1)^2b_4}.
\end{align}
It is obvious that  the second term in \eqref{eq:sDsBA3} becomes zero if $p_B\to\infty$. One can also verify that the first term in \eqref{eq:sDsBA3} has a nonzero limit as $p_B\to\infty$ if $p_A>0$ or equivalent if $\sigma_{\Delta r_{A|B}}^2\sigma_{\hat r_{A|B}}^2>0$. Therefore,
it follows from \eqref{eq:sDsBA3} that for $p_A>0$,
\begin{equation}\label{}
 \lim_{p_B\to \infty} \sigma_{\Delta s_{B|A}}^2
 =\frac{\sigma_{\Delta r_{A|B}}^2\sigma_{\hat r_{A|B}}^2}{\sigma_{\Delta r_{A|B}}^2\sigma_{\hat r_{A|B}}^2+1}.
\end{equation}

\subsection{Optimal Estimation of $s_B(k)$  at Eve}
Recall that the signals received by Eve during phases 1 and 2 of MISO-SIMO-STEEP are:
\begin{equation}\label{}
  \left \{\begin{array}{c}
            \mathbf{y}_{E,1}(k) = \sqrt{p_A\gamma_A}\mathbf{g}_{E,A}r_A(k)+
  \sqrt{\frac{p_A(1-\gamma_A)}{n_A-1}}\mathbf{H}_{E,A}'\mathbf{n}_A(k)\\
  +\mathbf{w}_{E,1}(k), \\
            \mathbf{y}_{E,2}(k)=\sqrt{p_B}\mathbf{h}_{E,B}(\hat r_{A|B}(k)+s_B(k))+\mathbf{w}_{E,2}(k).
          \end{array}
  \right .
\end{equation}
The MSE of the MMSE estimate of $s_B(k)$ from $\mathbf{y}_{E,1}(k)$ and $\mathbf{y}_{E,2}(k)$ is
\begin{equation}\label{}
  \sigma_{\Delta s_{B|E}}^2
  =1-\mathbf{q}^H\mathbf{Q}^{-1}\mathbf{q}
\end{equation}
with $\mathbf{q}^H=\mathbb{E}\{s_B[\mathbf{y}_{E,1}^H(k),\mathbf{y}_{E,2}^H(k)]\}
=[0,\sqrt{p_B}\mathbf{h}_{E,B}^H]$ and
\begin{equation}\label{}
  \mathbf{Q}\doteq\mathbb{E}\left \{\left [\begin{array}{c}
                                  \mathbf{y}_{E,1}(k) \\
                                  \mathbf{y}_{E,2}(k)
                                \end{array}
  \right ]\left [\begin{array}{c}
                                  \mathbf{y}_{E,1}(k) \\
                                  \mathbf{y}_{E,2}(k)
                                \end{array}
  \right ]^H\right \}=\left [\begin{array}{cc}
                      \mathbf{Q}_{1,1} & \mathbf{Q}_{1,2} \\
                      \mathbf{Q}_{1,2}^H & \mathbf{Q}_{2,2}
                    \end{array}
   \right ].
\end{equation}
Here $\mathbf{Q}_{2,2}=p_B(\sigma_{\hat r_{A|B}}^2+1)\mathbf{h}_{E,B}\mathbf{h}_{E,B}^H+\mathbf{I}$, $\mathbf{Q}_{1,2}=\sigma_{\hat r_{A|B}}^2\sqrt{p_A\gamma_Ap_B}
  \mathbf{g}_{E,A}\mathbf{h}_{E,B}^H$, and
\begin{align}\label{}
  &\mathbf{Q}_{1,1}
  =p_A\gamma_A\mathbf{g}_{E,A}\mathbf{g}_{E,A}^H
  +\frac{p_A(1-\gamma_A)}{n_A-1}\mathbf{H}_{E,A}'\mathbf{H}_{E,A}'^H
  +\mathbf{I}.
\end{align}

It follows that
\begin{equation}\label{eq:sDsBE}
  \sigma_{\Delta s_{B|E}}^2
  =1-p_B\mathbf{h}_{E,B}^H(\mathbf{Q}_{2,2}-\mathbf{Q}_{1,2}^H\mathbf{Q}_{1,1}^{-1}
  \mathbf{Q}_{1,2})^{-1}\mathbf{h}_{E,B}.
\end{equation}

\subsection{Properties of $\sigma_{\Delta s_{B|E}}^2$ in \eqref{eq:sDsBE}}
Here
$\mathbf{Q}_{1,2}^H\mathbf{Q}_{1,1}^{-1}
  \mathbf{Q}_{1,2}=t_4'p_B\mathbf{h}_{E,B}\mathbf{h}_{E,B}^H
$
where
\begin{align}\label{eq:t4}
  &t_4'=\sigma_{\hat r_{A|B}}^4p_A\gamma_A\mathbf{g}_{E,A}^H
  \left (p_A\gamma_A\mathbf{g}_{E,A}\mathbf{g}_{E,A}^H\right .\notag\\
  &\,\,\left .+\frac{p_A(1-\gamma_A)}{n_A-1}\mathbf{H}_{E,A}'\mathbf{H}_{E,A}'^H
  +\mathbf{I}\right )^{-1}
  \mathbf{g}_{E,A}.
\end{align}
Hence, \eqref{eq:sDsBE} becomes
\begin{align}\label{eq:sDsBE2}
  &\sigma_{\Delta s_{B|E}}^2
  =1-\notag\\
  &\,\,p_B\mathbf{h}_{E,B}^H(p_B(\sigma_{\hat r_{A|B}}^2+1-t_4')\mathbf{h}_{E,B}\mathbf{h}_{E,B}^H+\mathbf{I}
  )^{-1}\mathbf{h}_{E,B}\notag\\
  &=1-\frac{p_B\|\mathbf{h}_{E,B}\|^2}{p_B(\sigma_{\hat r_{A|B}}^2+1-t_4')\|\mathbf{h}_{E,B}\|^2+1}\notag\\
  &=\frac{p_B(\sigma_{\hat r_{A|B}}^2-t_4')\|\mathbf{h}_{E,B}\|^2+1}{p_B(\sigma_{\hat r_{A|B}}^2+1-t_4')\|\mathbf{h}_{E,B}\|^2+1},
\end{align}
which implies that if $\sigma_{\hat r_{A|B}}^2-t_4'>0$,
\begin{equation}\label{}
  \lim_{p_B\to\infty}\sigma_{\Delta s_{B|E}}^2
  =\frac{\sigma_{\hat r_{A|B}}^2-t_4'}{\sigma_{\hat r_{A|B}}^2-t_4'+1}.
\end{equation}
We will show next that if $p_A>0$, then $t_4'<\sigma_{\hat r_{A|B}}^4<\sigma_{\hat r_{A|B}}^2$, and hence
\begin{align}\label{eq:lim_sigma}
 & \lim_{p_B\to\infty}\sigma_{\Delta s_{B|E}}^2
  >\frac{\sigma_{\hat r_{A|B}}^2-\sigma_{\hat r_{A|B}}^4}{\sigma_{\hat r_{A|B}}^2-\sigma_{\hat r_{A|B}}^4+1}\notag\\
  &
  =\frac{\sigma_{\Delta r_{A|B}}^2\sigma_{\hat r_{A|B}}^2}{\sigma_{\Delta r_{A|B}}^2\sigma_{\hat r_{A|B}}^2+1}
  =\lim_{p_B\to\infty}\sigma_{\Delta s_{B|A}}^2,
\end{align}
where we have used $\sigma_{\Delta r_{A|B}}^2\sigma_{\hat r_{A|B}}^2=\sigma_{\hat r_{A|B}}^2-\sigma_{\hat r_{A|B}}^4$.
The actual gap between $\lim_{p_B\to\infty}\sigma_{\Delta s_{B|E}}^2$ and $\lim_{p_B\to\infty}\sigma_{\Delta s_{B|A}}^2$ however depends on $p_A$ among other parameters.

\subsubsection{Proof of $t_4'<\sigma_{\hat r_{A|B}}^4$ under $p_A>0$}
To show $t_4'<\sigma_{\hat r_{A|B}}^4$,
let us now rewrite $t_4'$ in \eqref{eq:t4}, using the matrix inverse lemma, as
\begin{align}\label{eq:t4_2}
  &t_4'
  =\sigma_{\hat r_{A|B}}^4p_A\gamma_A\mathbf{g}_{E,A}^H\left (
  \mathbf{A}_4'^{-1}\right .\notag\\
  &\,\,\left .-\frac{p_A(1-\gamma_A)}{n_A-1}\mathbf{A}_4'^{-1}\mathbf{H}_{E,A}'\mathbf{B}_4'^{-1}
  \mathbf{H}_{E,A}'^H
  \mathbf{A}_4'^{-1}\right )\mathbf{g}_{E,A}
\end{align}
where $\mathbf{A}_4'=p_A\gamma_A\mathbf{g}_{E,A}\mathbf{g}_{E,A}^H+\mathbf{I}$ and $\mathbf{B}_4'=\mathbf{I}+\frac{p_A(1-\gamma_A)}{n_A-1}\mathbf{H}_{E,A}'^H
\mathbf{A}_4'^{-1}\mathbf{H}_{E,A}'$. Using $\mathbf{g}_{E,A}^H\mathbf{A}_4'^{-1}
=\frac{1}{p_A\gamma_A\|\mathbf{g}_{E,A}\|^2+1}\mathbf{g}_{E,A}^H$, we have
\begin{align}\label{eq:t4_3}
  &t_4'
  =\sigma_{\hat r_{A|B}}^4\left (\frac{p_A
  \gamma_A\|\mathbf{g}_{E,A}\|^2}{p_A\gamma_A\|\mathbf{g}_{E,A}\|^2+1}\right .\notag\\
  &\,\,\left .-
  \frac{p_A^2\gamma_A(1-\gamma_A)}{n_A-1}\frac{\|\mathbf{g}_{E,A}\|^2}
  {(p_A\gamma_A\|\mathbf{g}_{E,A}\|^2+1)^2}t_4''\right )\notag\\
  &<\sigma_{\hat r_{A|B}}^4\frac{p_A
  \gamma_A\|\mathbf{g}_{E,A}\|^2}{p_A\gamma_A\|\mathbf{g}_{E,A}\|^2+1}<\sigma_{\hat r_{A|B}}^4,
\end{align}
where $t_4''=\mathbf{c}_{E,A}^H\mathbf{B}_4'^{-1}
  \mathbf{c}_{E,A}$ and $\mathbf{c}_{E,A}=\mathbf{H}_{E,A}'^H\mathbf{\bar g}_{E,A}$.

\subsection{Secrecy rate of MISO-SIMO-STEEP}

Similar to the notion behind \eqref{eq:RsAB}, and using \eqref{eq:sDsBA} and \eqref{eq:sDsBE}, an achievable secrecy rate of MISO-SIMO-STEEP from Bob to Alice in bits per round-trip channel use is $R_{s,\texttt{M-S-STEEP}}^+$ with
\begin{equation}\label{eq:RsBA}
  R_{s,\texttt{M-S-STEEP}}=C_{B|A}-C_{B|E}=\log\frac{\sigma_{\Delta s_{B|E}}^2}{\sigma_{\Delta s_{B|A}}^2}.
\end{equation}
Then from \eqref{eq:lim_sigma}, we have:
\begin{Proposition}
For $p_A>0$,
\begin{equation}\label{eq:limit_MISO-SIMO-STEEP}
  \lim_{p_B\to\infty}R_{s,\texttt{M-S-STEEP}}>0.
\end{equation}
Namely, under virtually any channel conditions and $p_A>0$, there is  a finite $\bar p_B$ such that $R_{s,\texttt{M-S-STEEP}}>0$ when $p_B>\bar p_B$.
\end{Proposition}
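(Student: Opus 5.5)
The argument combines the two limits already computed in this section and then uses continuity. Fix $p_A>0$, together with all channel parameters, $\gamma_A\in(0,1)$, and Eve's finite jamming power $p_{E,2}$. Write $a\doteq\sigma_{\hat r_{A|B}}^2$. From \eqref{eq:sDrA}, $0<\sigma_{\Delta r_{A|B}}^2<1$ whenever $p_A>0$, so $0<a<1$ and $\sigma_{\Delta r_{A|B}}^2a=a-a^2>0$.

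First, I use \eqref{eq:sDsBA3} to get
\[\lim_{p_B\to\infty}\sigma_{\Delta s_{B|A}}^2=\frac{a-a^2}{1+a-a^2}>0.\]
To justify this, note that $b_4\geq 1$. The second term of \eqref{eq:sDsBA3} is therefore $O(1/p_B)$. In the first term, $t_4-\|\mathbf{h}_{A,B}\|^2=(a-a^2)\|\mathbf{h}_{A,B}\|^2$, and the ratio tends to the stated constant. This step does not depend on Eve's jamming, because $c_{A,E}^2$ and $b_4$ stay bounded for finite $p_{E,2}$.

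Second, I use \eqref{eq:sDsBE2} together with \eqref{eq:t4_3}. Since $t_4'$ does not depend on $p_B$, and \eqref{eq:t4_3} gives $t_4'<a^2<a$, the coefficient $a-t_4'$ is strictly positive. Hence
\[\lim_{p_B\to\infty}\sigma_{\Delta s_{B|E}}^2=\frac{a-t_4'}{1+a-t_4'}.\]
The map $x\mapsto x/(1+x)$ is strictly increasing, so $a-t_4'>a-a^2$ gives exactly \eqref{eq:lim_sigma}. This is the one step that needs care, since everything hinges on the strict inequality $t_4'<a^2$. To check it, I subtract in \eqref{eq:t4_3} a nonnegative term, because $t_4''=\mathbf{c}_{E,A}^H\mathbf{B}_4'^{-1}\mathbf{c}_{E,A}\geq0$ with $\mathbf{B}_4'>0$. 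I then use $\frac{x}{x+1}<1$, which is strict because $\|\mathbf{g}_{E,A}\|^2<\infty$. This gives $t_4'<a^2$ for every $n_E$, with no rank condition on $\mathbf{H}_{E,A}$.

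Third, both limits are finite and positive, so by continuity of $\log$ and \eqref{eq:RsBA},
\[\lim_{p_B\to\infty}R_{s,\texttt{M-S-STEEP}}=\log\frac{\lim\sigma_{\Delta s_{B|E}}^2}{\lim\sigma_{\Delta s_{B|A}}^2}>0.\]
For the second claim of the proposition, let $L>0$ denote this limit. By the definition of the limit there is a finite $\bar p_B$ such that $R_{s,\texttt{M-S-STEEP}}>L/2>0$ for all $p_B>\bar p_B$. The phrase "virtually any channel conditions" only requires the channel quantities to be finite, with $\mathbf{h}_{A,B}\neq\mathbf{0}$ and $\mathbf{h}_{B,A}\neq\mathbf{0}$ so that $a>0$. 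I will state this explicitly.
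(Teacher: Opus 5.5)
Your proposal is correct and follows the paper's own route: the limit of $\sigma_{\Delta s_{B|A}}^2$ from \eqref{eq:sDsBA3}, the limit of $\sigma_{\Delta s_{B|E}}^2$ from \eqref{eq:sDsBE2} using $t_4'<\sigma_{\hat r_{A|B}}^4$ from \eqref{eq:t4_3}, and then the comparison \eqref{eq:lim_sigma}. Your version is also slightly more careful than the paper's: the first inequality in \eqref{eq:t4_3} is in general only non-strict (it is an equality when $t_4''=0$), and you correctly place the strictness in $\frac{p_A\gamma_A\|\mathbf{g}_{E,A}\|^2}{p_A\gamma_A\|\mathbf{g}_{E,A}\|^2+1}<1$; you also state explicitly the nondegeneracy conditions $\mathbf{h}_{A,B}\neq\mathbf{0}$ and $\mathbf{h}_{B,A}\neq\mathbf{0}$ behind ``virtually any channel conditions.''
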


 Like the conventional SIMO versus SIMO-MISO-STEEP, the conventional MISO scheme and the MISO-SIMO-STEEP scheme transmit secret messages in the opposite directions. But for upper layers, such a change of direction has little impact. This is because any secret received by Alice from Bob can be immediately used to protect a message from Alice to Bob.

 However, the secrecy rates $R_{s,\texttt{S-M-STEEP}}$ and $R_{s,\texttt{M-S-STEEP}}$ are in bits per round-trip symbol interval while $R_{s,\texttt{SIMO,conv}}$ and $R_{s,\texttt{MISO,conv}}$ are in bits per single-trip symbol interval. A useful comparison should be between $R_{s,\texttt{S-M-STEEP}}^+$ and $R_{s,\texttt{SIMO,conv}}^+ + R_{s,\texttt{MISO,conv}}^+$ or between $R_{s,\texttt{M-S-STEEP}}^+$ and $R_{s,\texttt{SIMO,conv}}^+ + R_{s,\texttt{MISO,conv}}^+$. Such a comparison will be shown in section \ref{sec:Simulation}.

 \section{Complexity, Latency and Other Issues}\label{sec:complexity}

 The preceding discussions included both the protocols of SIMO-MISO-STEEP and MISO-SIMO-STEEP and the analyses of the secrecy rates of the two forms of STEEP. In this section, we highlight the key steps in these protocols, which makes clear that the added complexity for STEEP in comparison to a conventional scheme is not high. We also discuss the latency  and other issues of importance.

 \subsection{Complexity of SIMO-MISO-STEEP}\label{sec:SIMO-MISO-STEEP_b}
 In phase 1, Bob simply transmits a random scalar sequence $r_B(k)$, and Alice receives $\mathbf{y}_{A,1}(k)$ as in \eqref{eq:yA1k} and computes $\hat r_{B|A}(k)$ as in \eqref{eq:hrBAk}. This is the main additional complexity.

 In phase 2, Alice transmits a vector sequence $\mathbf{x}_{A,2}(k)$ as shown in \eqref{eq:xA2} where $s_A(k)$ is a coded sequence with a secret message $\mathcal{M}_A$ from Alice, and Bob receives $y_{B,2}(k)$ as in \eqref{eq:yB2k}. For reliable detection of $\mathcal{M}_A$ at Bob, the channel coding applied to $s_A(k)$ at Alice should be done at a rate in bits/s/Hz close to the SIMO-MISO-STEEP-induced effective capacity from Alice to Bob $C_{A|B}$ in \eqref{eq:CAB}.
Bob only needs to compute  $\Delta y_{B,2}(k)$ as in \eqref{eq:DyB2k} and detect $\mathcal{M}_A$ from the sequence $\Delta y_{B,2}(k)$.

 After the phase 2 operation, the entropy of $\mathcal{M}_A$ given Eve's knowledge can be (approximately) as high as $KR^+_{\texttt{s,S-M-STEEP}}$ where $K$ is the range of $k=1,\cdots,K$, i.e., $\mathbb{H}(\mathcal{M}_A|\texttt{Eve})\leq KR_{\texttt{s,S-M-STEEP}}=K(C_{A|B}-C_{E|B})$. This entropy could be much smaller than $|\mathcal{M}_A|\leq KC_{A|B}$. Here $\mathbb{H}(\cdot|\texttt{Eve})$ denotes the entropy conditioned on Eve's observations, and $|\cdot|$ denotes the size.

  Note that we can also convert the above $C_{A|B}$ into $C_{A|B}=\log(1+\texttt{SNR}_{A|B})$ where $\texttt{SNR}_{A|B}$ is an effective SNR for $s_A(k)$ at Bob.   If $K$ is in the range of several thousands, practical channel codes such as the LDPC codes can typically achieve a SNR gap of less than 0.5dB from capacity.

 Alternatively, if $R^+_{\texttt{s,S-M-STEEP}}$ is known to users, a wiretap channel code (or coset code) is available to keep  $\mathcal{M}_A$ in complete secrecy from Eve at a rate approaching $R^+_{\texttt{s,S-M-STEEP}}$, i.e., $\mathbb{H}(\mathcal{M}_A|\texttt{Eve}) =|\mathcal{M}_A|\leq KR^+_{\texttt{s,S-M-STEEP}}$.

 \subsection{Complexity of MISO-SIMO-STEEP}
 In phase 1, Alice transmits the random vector sequence $\mathbf{x}_{A,1}(k)$ as in \eqref{eq:xA1}, and Bob receives $y_{B,1}(k)$ as in \eqref{eq:yB1k} and computes $\hat r_{A|B}(k)$ as in \eqref{eq:hrABk}.

 In phase 2, Bob transmits $\sqrt{p_B}(\hat r_{A|B}(k)+s_B(k))$ where $s_B(k)$ is a coded sequence with a message $\mathcal{M}_B$ from Bob, and Alice receives $\mathbf{y}_{A,2}(k)$ as in \eqref{eq:yA2k}. For reliable detection of $\mathcal{M}_B$ at Alice, the data rate of $\mathcal{M}_B$ in bits/s/Hz must be less than the MISO-SIMO-STEEP-induced effective capacity from Bob to Alice $C_{B|A}=-\log\sigma_{\Delta s_{B|A}}^2$ with $\sigma_{\Delta s_{B|A}}^2$ given in \eqref{eq:sDsBA}.
And Alice  needs to compute $\Delta\mathbf{y}_{A,2}(k)$ as in \eqref{eq:DyA2k} and detect $\mathcal{M}_B$ from the sequence $\Delta\mathbf{y}_{A,2}(k)$.

Unlike SIMO-MISO-STEEP where artificial noise (AN) from user is applied in phase 2, MISO-SIMO-STEEP uses AN from user in phase 1. This is a key reason that MISO-SIMO-STEEP tends to yield a higher secrecy rate than SIMO-MISO-STEEP.

After phase 2, the entropy of $\mathcal{M}_B$ given Eve's knowledge can be  as high as $KR^+_{\texttt{s,M-S-STEEP}}$.
The other discussions in section \ref{sec:SIMO-MISO-STEEP_b} also apply similarly here.

\subsection{Knowledge of Eve's Channels}
Two types of channel state information (CSI) at Eve are relevant here. One is Eve's jamming CSI (i.e., $\mathbf{H}_{A,E}$ and $\mathbf{h}_{B,E}$), and the other is Eve's receive CSI (i.e., $\mathbf{H}_{E,A}$ and $\mathbf{h}_{E,B}$).

For reliable transmission of $\mathcal{M}_A$ for SIMO-MISO-STEEP and $\mathcal{M}_B$ for MISO-SIMO-STEEP, the received jamming powers (including the received jamming power at Bob $p_E\|\mathbf{h}_{B,E}\|^2$ and the received jamming covariance matrix at Alice $p_E\mathbf{H}_{A,E}\mathbf{v}_{A,E}\mathbf{v}_{A,E}^H\mathbf{H}_{A,E}^H$)  need to be known because the effective channel capacity in either form of STEEP depends on them. In most situations, these received jamming powers can be measured online in real time by users before STEEP is executed. In other words, Eve's jamming CSI is not exactly needed.

As for Eve's receive CSI, the users do not need it either in order to  transmit $\mathcal{M}_A$ or $\mathcal{M}_B$ reliably. But the actual value of the secrecy rate of STEEP depends on $\mathbf{H}_{E,A}$ and $\mathbf{h}_{E,B}$. For most practical applications, $\mathbf{H}_{E,A}$ and $\mathbf{h}_{E,B}$ are unknown to users, so are $R^+_{\texttt{s,S-M-STEEP}}$ and $R^+_{\texttt{s,M-S-STEEP}}$, and hence a wiretap channel code in this case is not applicable to keep $\mathcal{M}_A$ (or $\mathcal{M}_B$) in complete secrecy from Eve. In this case, a practical approach is to use a standard channel code to keep $\mathcal{M}_A$ (or $\mathcal{M}_B$) at a rate as close as possible to $C_{A|B}$ (or $C_{B|A}$), and the (partial) secrecy in $\mathcal{M}_A$ (or $\mathcal{M}_B$) is governed by the secrecy rate $R_{\texttt{s,S-M-STEEP}}$ (or $R_{\texttt{s,M-S-STEEP}}$) as discussed before.

It is also worth noting that (for example) if $R_{\texttt{s,S-M-STEEP}}=C_{A|B}-C_{A|E}>0$ and $\mathcal{M}_A$ is transmitted at a rate close to  $C_{A|B}$,  Bob can receive $\mathcal{M}_A$ reliably but Eve cannot. In practice, it is still difficult (although not impossible) for Eve to compute, and make use of, her ambiguity space about $\mathcal{M}_A$ subject to $R_{\texttt{s,S-M-STEEP}}>0$ and a large $K$ \cite{Hua_AAA_2025}. Until Eve is able to compute and make use of that ambiguity space, the ``entropy'' of $\mathcal{M}_A$ against Eve is practically close to $KC_{A|B}$ rather than the strict measure $KR_{\texttt{s,S-M-STEEP}}=K(C_{A|B}-C_{A|E})$.

\subsection{Latency}
STEEP is a physical-layer round-trip scheme, which could double the latency in comparison to a conventional single-trip scheme especially if time division duplex (TDD) is used between Alice and Bob. But the packet $\mathcal{P}_1$ transmitted in phase 1 is uncoded. So, as mentioned below \eqref{eq:xA2}, if frequency division duplex (FDD) is used, the packet $\mathcal{P}_2$ transmitted in phase 2 could be mostly overlapped with $\mathcal{P}_1$ in time. This way, the latency can be substantially reduced unless the propagation time dominates the packet duration as in satellite communications.

If Eve's channel cannot be made weaker than user's channel via any other means (most often due to infrastructure cost), then STEEP is among the very few options to achieve a positive secrecy rate. A conventional protocol for secret-key generation (SKG) using the data sets established in phase 1 of STEEP would incur much more latency, as it requires additional communications for quantization, information reconciliation and privacy amplification. SKG using the estimates of reciprocal channel parameters between Alice and Bob would also cause a significant latency (before any secret message can be transmitted), and furthermore its secret-key rate in bits per second is  highly limited due to limited channel variability in practice.

\subsection{How to Ensure a Positive Secrecy Rate}
Without any knowledge of Eve's receive CSI, no physical layer security method could ensure a positive secrecy rate.  If Eve's parameters (such as $p_E$ and $n_E$) are known to be bounded, a positive secrecy rate of STEEP can be achieved simply by increasing  the phase-2 transmit power. See \eqref{eq:limit_SIMO-MISO-STEEP} for SIMO-MISO-STEEP and \eqref{eq:limit_MISO-SIMO-STEEP} for MISO-SIMO-STEEP. In other words, if users know an upper bound on each of the key parameters along with a valid statistical model for both users's and Eve's CSI, then the users can choose a phase-2 power and a corresponding phase-1 power to ensure a positive secrecy rate with a high confidence. The computations needed for such a decision can  be done online or offline. Examples of such computations are shown in the next section.

Note that the secrecy rate $R_s$ of STEEP is measured by (secret) bits per round-trip channel use. The corresponding secrecy rate in (secret) bits per second is in the order of $WR_s$ with $W$ being the bandwidth used by each symbol. For many practical wireless channels, $W$  ranges from 10MHz to 100MHz. If $R_s=0.01$, $WR_s$ ranges from $10^5$ to $10^6$ in (secret) bits per second, which is very significant for transmission of extremely sensitive information such as a long secret key.

\section{Secrecy Rates and Simulation Results}\label{sec:Simulation}

It is shown by \eqref{eq:limit_SIMO-MISO-STEEP} and \eqref{eq:limit_MISO-SIMO-STEEP}  that for SIMO-MISO-STEEP and MISO-SIMO-STEEP, a positive secrecy rate can be achieved by choosing a sufficiently large power in the second phase, subject to virtually any given channel conditions and given jamming power by Eve. It is also shown by \eqref{eq:space_SIMO} and \eqref{eq:space_MISO} that for the conventional SIMO and MISO schemes under $n_E\geq n_A$, the achievable secrecy rate is  zero within a large space of channel conditions and jamming powers by Eve.

It is also important to add that for the conventional MISO scheme, if $n_A>n_E$, then the secrecy rate $R_{s,\texttt{MISO,conv}}$ would scale with large $p_A$ as $\mathcal{O}(\log p_A)$, which is the same as the scaling law of MISO-SIMO-STEEP subject to $p_B\gg p_A\gg 1$. In other words, the secure degree of freedom of the conventional MISO scheme and MISO-SIMO-STEEP is known to be one subject to $n_A>n_E\geq n_B=1$ \cite{Hua_STEEP_2025}, \cite{HuaMaksud2024} But the conventional MISO scheme does not need $p_B$.

The above suggests that STEEP is more important for situations where $n_E\geq n_A$ or $n_E$ has to be assumed to be no less than $n_A$ to be conservative. So, the simulation shown below focuses only on the challenging case where $n_E\geq n_A$. This is a case where virtually all schemes in \cite{Thien2022}-\cite{Abdalla2023} would fail to yield a positive secrecy rate within a large space of channel conditions.

We now provide simulation results to illustrate the secrecy performances of all four schemes subject to the same conditions of channels and powers. The simulations are based on the expressions of $R_{s,\texttt{SIMO,conv}}$, $R_{s,\texttt{S-M-STEEP}}$, $R_{s,\texttt{MISO,conv}}$ and $R_{s,\texttt{M-S-STEEP}}$ given by \eqref{eq:R_s_conv_BA}, \eqref{eq:RsAB}, \eqref{eq:R_s_conv_AB} and \eqref{eq:RsBA} respectively.

We will choose $p_{E,1}=p_{E,2}=p_E$, $\mathbf{H}_{A,E}=\mathbf{H}_{E,A}^T$, $\mathbf{h}_{A,B}=\mathbf{h}_{B,A}$, and $\mathbf{h}_{B,E}=\mathbf{h}_{E,B}$. Subject to the above, we choose all channel parameters independently from $\mathcal{CN}(0,1)$ with $10^5$ independent realizations.
We also choose $p_E=10$dB and $\gamma_A=0.5$.

Note that $p_A$ is the power consumed by Alice for MISO and MISO-SIMO-STEEP, and $p_B$ is the power consumed by Bob for SIMO and SIMO-MISO-STEEP. But $p_A+p_A\gamma_A\sigma_{\hat r_{B|A}}^2<(1+\gamma_A)p_A$ is the actual power consumed by Alice for SIMO-MISO-STEEP, and $p_B+p_B\sigma_{\hat r_{A|B}}^2<2p_B$ is the actual power consumed by Bob for MISO-SIMO-STEEP.

\subsection{Optimal phase-1 power for STEEP}
The secrecy rate of either SIMO-MISO-STEEP or MISO-SIMO-STEEP  depends on its phase-1 and phase-2 powers. While its dependency on its phase-2 power is generally monotonically increasing, its dependency on its phase-1 power is not. The optimal phase-1 power for given channel conditions is typically somewhere larger than zero and less than the phase-2 power.


 Since the optimal phase-1 power depends on channel realizations, we show next some figures to illustrate the distributions of $(\hat R_s, p_{1,opt})$ where  $\hat R_s$ is the maximized secrecy rate and $p_{1,opt}$ is the optimal phase-1 power, i.e., $\hat R_s=\max_{p_1}R_s$, and $p_{1,opt}=arg\max_{p_1}R_s$. Here $p_1=p_A$ and $R_s=R_{s,\texttt{M-S-STEEP}}$ for MISO-SIMO-STEEP, and $p_1=p_B$ and $R_s=R_{s,\texttt{S-M-STEEP}}$ for SIMO-MISO-STEEP.

\begin{figure}[ht]
\begin{minipage}[b]{0.48\linewidth}
\centering
\includegraphics[width=\textwidth]{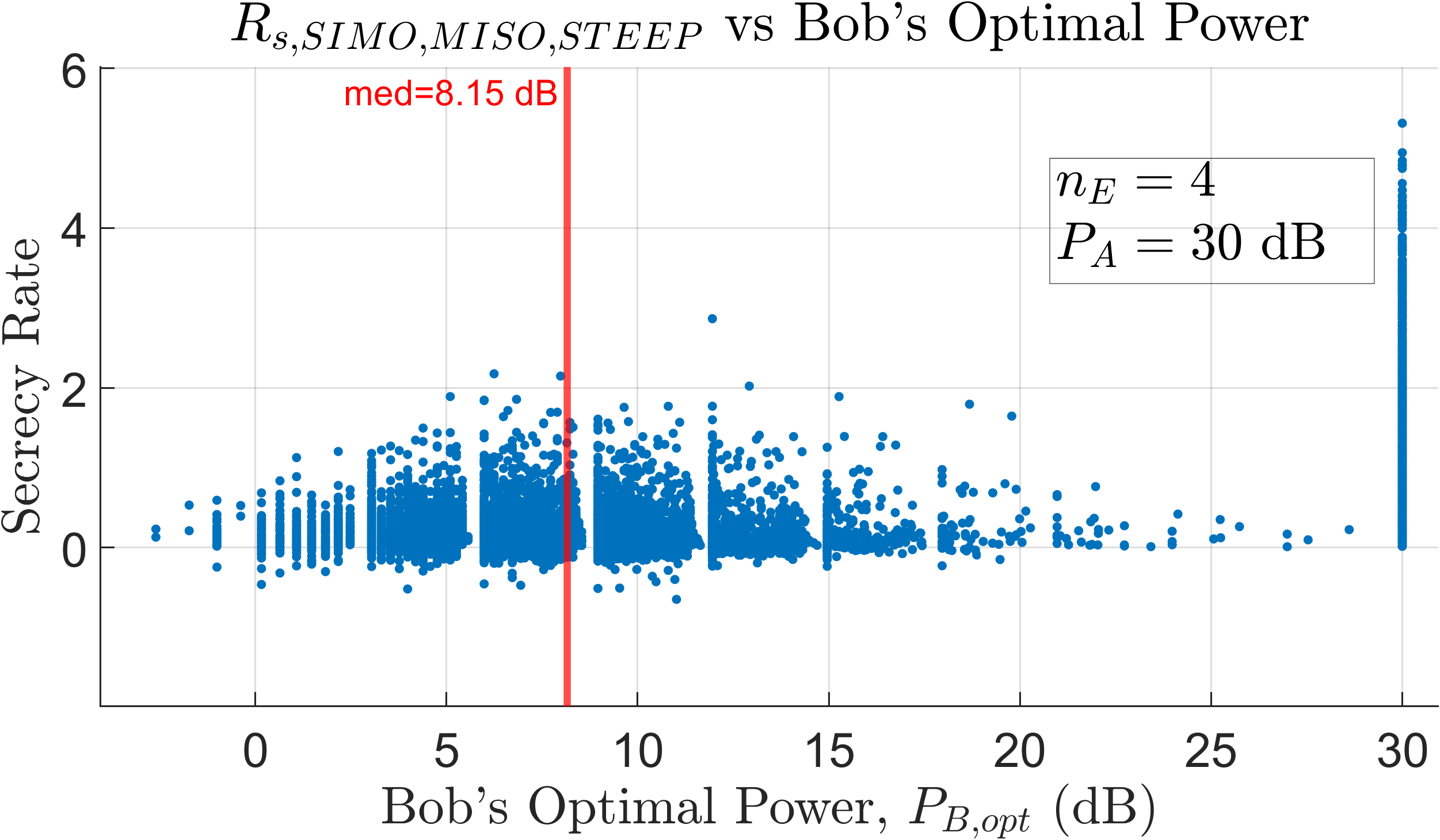}
\subcaption{$p_A=30$dB. }
\label{fig:secrecy_steep_pB_30}
\end{minipage}
\hspace{0.1cm}
\begin{minipage}[b]{0.48\linewidth}
\centering
\includegraphics[width=\textwidth]{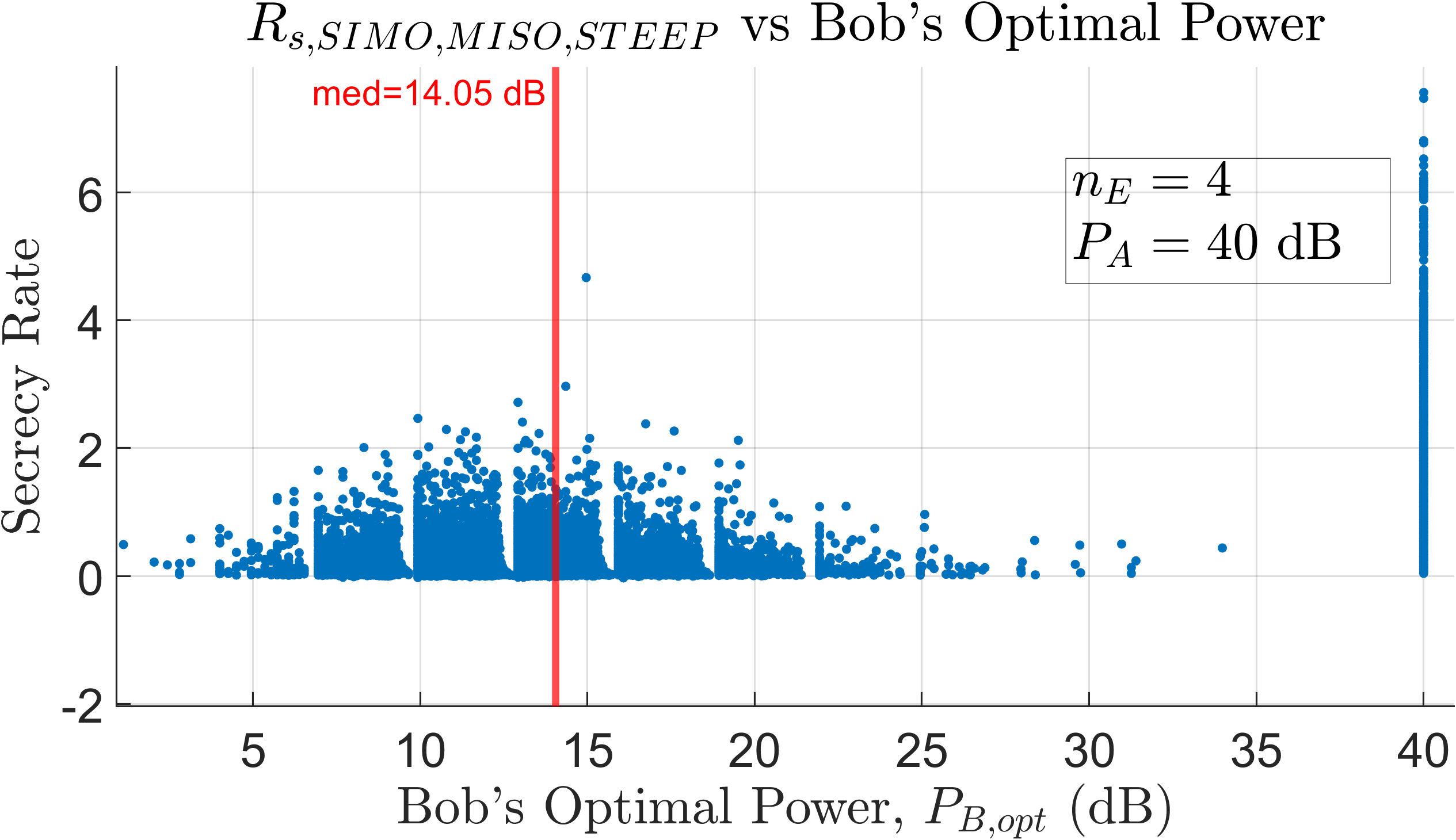}
\subcaption{$p_A=40$dB. }
\label{fig:secrecy_steep_pB_40}
\end{minipage}\\
\\
\begin{minipage}[b]{0.48\linewidth}
\centering
\includegraphics[width=\textwidth]{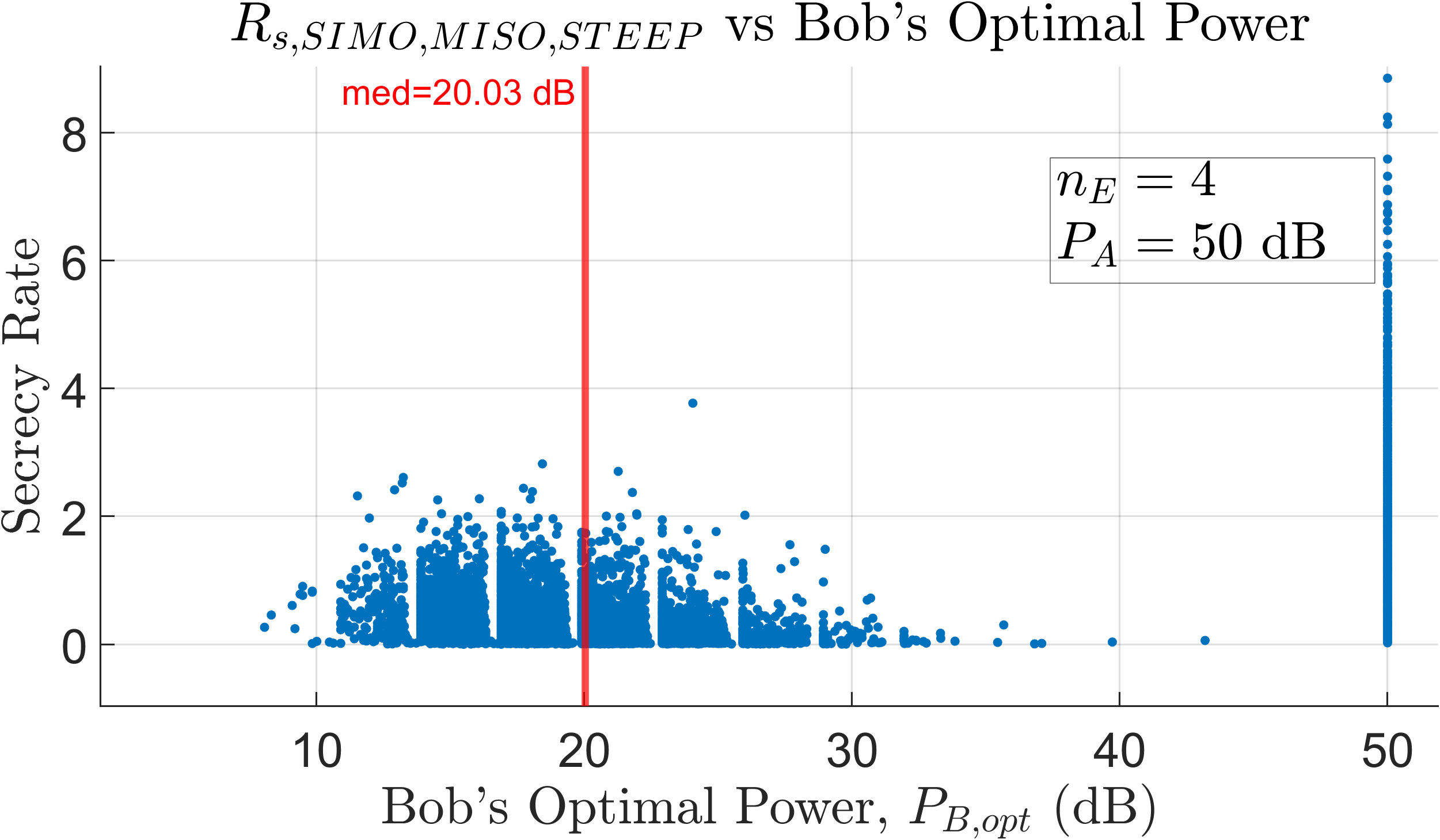}
\subcaption{$p_A=50$dB,}
\label{fig:secrecy_steep_pB_30_rho_0}
\end{minipage}
\hspace{0.1cm}
\begin{minipage}[b]{0.48\linewidth}
\centering
\includegraphics[width=\textwidth]{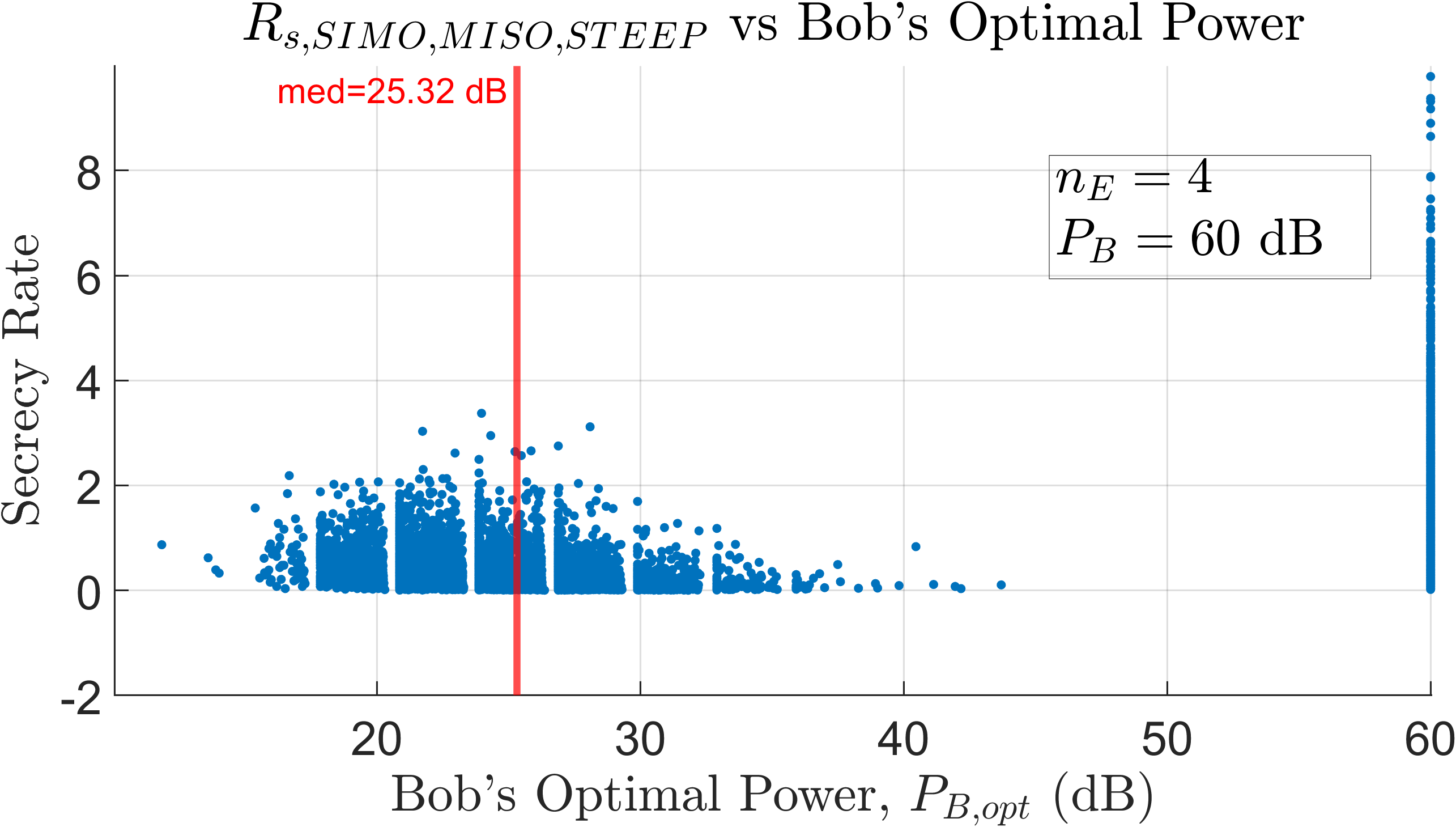}
\subcaption{$p_A=60$dB. }
\label{fig:secrecy_steep_pB_40_rho_0}
\end{minipage}
\caption{Distributions of  $(\hat R_{s,\texttt{S-M-STEEP}},p_{B,opt})$ with $p_E=10$dB, $n_A=4$, $n_E=4$. In this and other figures, $R_{s,SIMO,MISO,STEEP}$ in the legend is the same as $R_{s,\texttt{S-M-STEEP}}$. We will also use $R_{s,MISO,SIMO,STEEP}$ in the legend for $R_{s,\texttt{M-S-STEEP}}$.}
\label{fig:secrecy_rate_vs_optimized_p_B_simo_miso_steep_n_E_4}
\end{figure}

Shown in Fig. \ref{fig:secrecy_rate_vs_optimized_p_B_simo_miso_steep_n_E_4} are the distributions of $(\hat R_s, p_{1,opt})$ for SIMO-MISO-STEEP, i.e., $\hat R_{s,\texttt{S-M-STEEP}}$ versus $p_{B,opt}$,  subject to $10^5$ independent channel realizations. The four subplots in Fig. \ref{fig:secrecy_rate_vs_optimized_p_B_simo_miso_steep_n_E_4} correspond to $p_A=30,40,50,60$dB respectively.

The optimization over $p_B$ was done via bi-section search of $p_B$ (in linear space) between its minimum choice zero and its maximum choice $p_A$. This is the reason behind the periodic (thin) white bands in the plots, which show a 3dB gap between every two adjacent white bands along the optimal $p_B$ in dB. In other words, the converged optimal value of $p_B$ will not land on any of the points equal to $2^{-l}p_A$ with $l=1,2, \cdots$, and more generally the converged optimal value of $p_B$ does not fall into a region around any of these points. But the width of each of the white bands reduces as the maximum number of iterations of the bi-section search increases.

For vast majority of the $10^5$ realizations, the optimal $p_B$ is somewhere far away from zero and $p_A$. But for some few cases, the optimal $p_B$ equals to its chosen upper bound $p_A$. These cases correspond to the situations where Eve's receive channel in phase 1 is very weak compared to user's receiver channel in phase 1 so that the phase-1 transmitter should apply the maximum power. Also note that when $\hat p_B=p_A$, the secrecy rate $R_{s,\texttt{S-M-STEEP}}$ still depends on Eve's receive channel in phase 2 relative to user's receive channel in phase 2, which is shown by the vertical cluster of dots on the far right side of each of the four plots in Fig. \ref{fig:secrecy_rate_vs_optimized_p_B_simo_miso_steep_n_E_4}.

The median value of $p_{B,opt}$ in each of the four plots in Fig. \ref{fig:secrecy_rate_vs_optimized_p_B_simo_miso_steep_n_E_4} is marked by a red vertical line, which will be denoted by $\bar p_{B,opt}$. Without the knowledge of Eve's receive channels (other than its statistical distribution), $\bar p_{B,opt}$ is clearly a good choice of the phase-1 power for SIMO-MISO-STEEP.

\begin{figure}[ht]
\begin{minipage}[b]{0.48\linewidth}
\centering
\includegraphics[width=\textwidth]{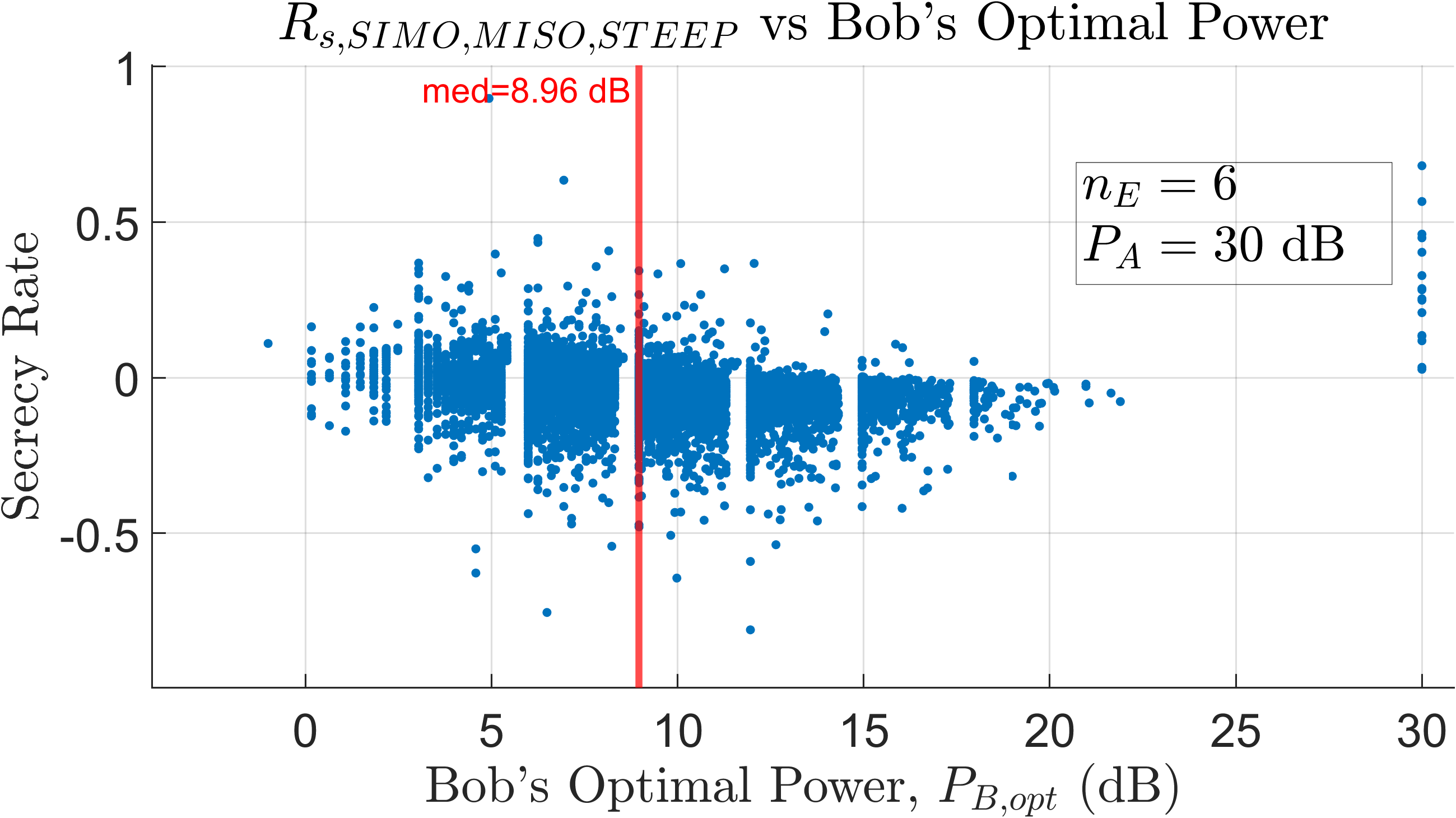}
\subcaption{$p_A=30$dB. }
\label{fig:secrecy_steep_pB_30}
\end{minipage}
\hspace{0.1cm}
\begin{minipage}[b]{0.48\linewidth}
\centering
\includegraphics[width=\textwidth]{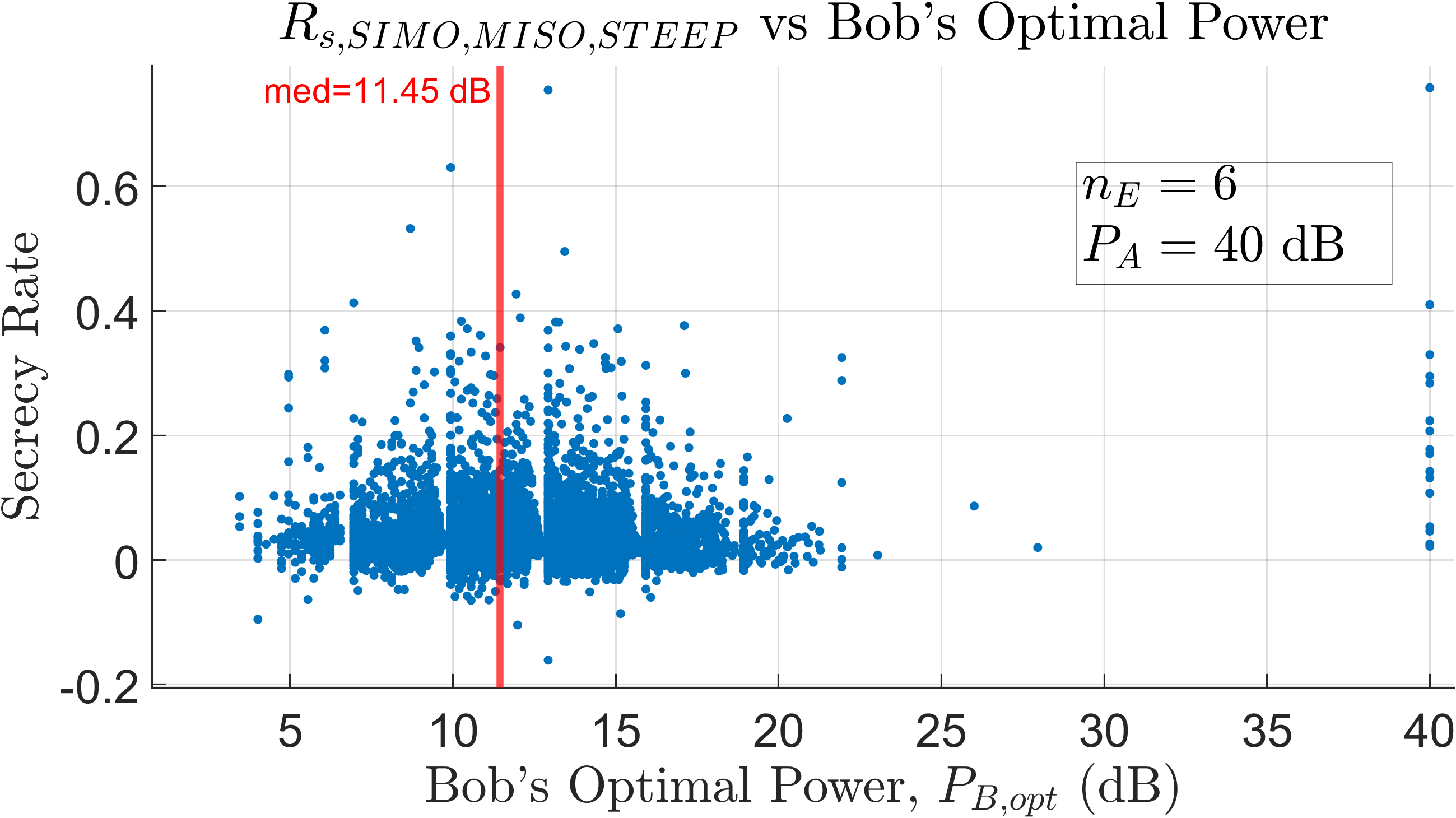}
\subcaption{$p_A=40$dB. }
\label{fig:secrecy_steep_pB_40}
\end{minipage}\\
\\
\begin{minipage}[b]{0.48\linewidth}
\centering
\includegraphics[width=\textwidth]{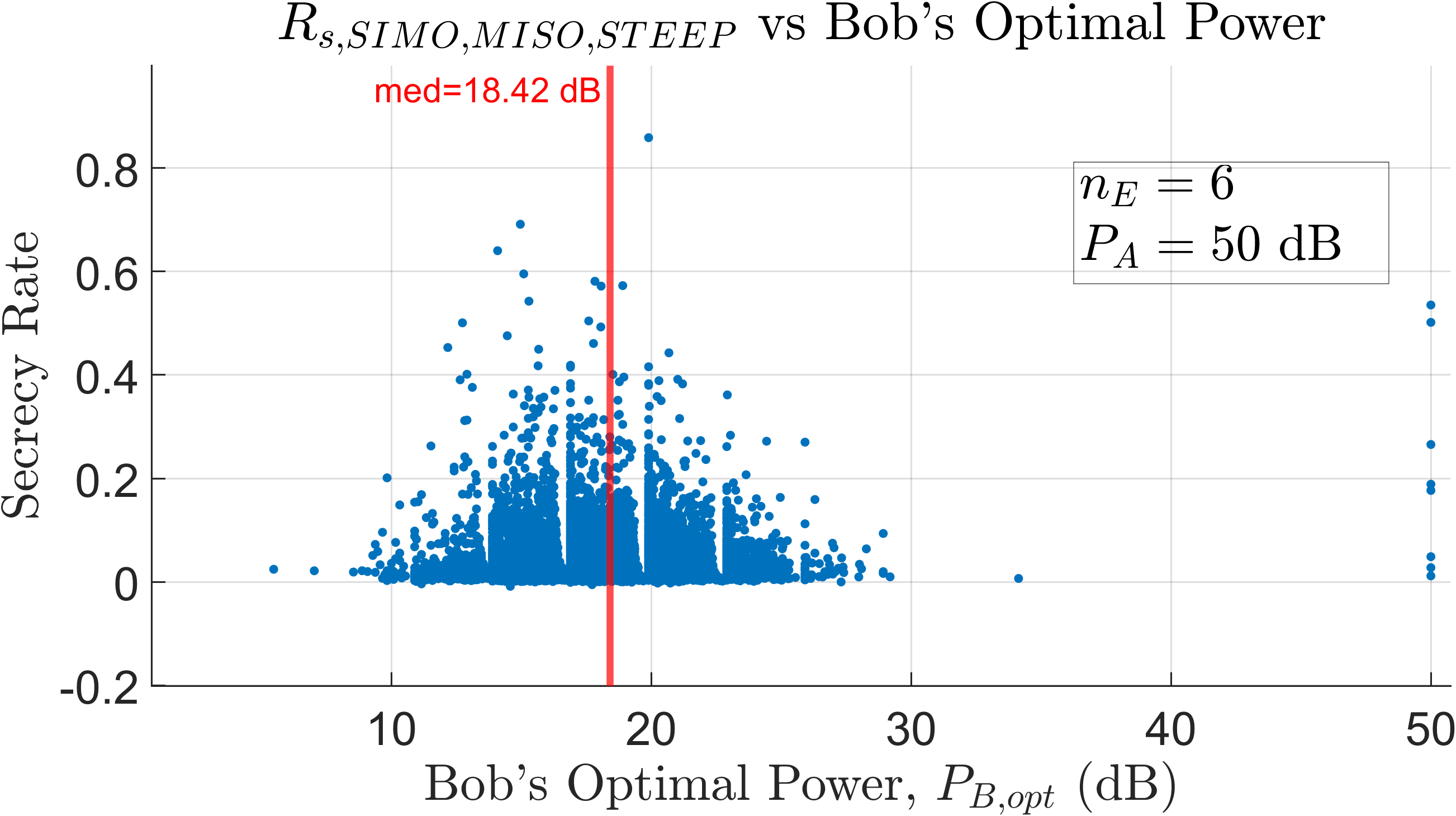}
\subcaption{$p_A=50$dB,}
\label{fig:secrecy_steep_pB_30_rho_0}
\end{minipage}
\hspace{0.1cm}
\begin{minipage}[b]{0.48\linewidth}
\centering
\includegraphics[width=\textwidth]{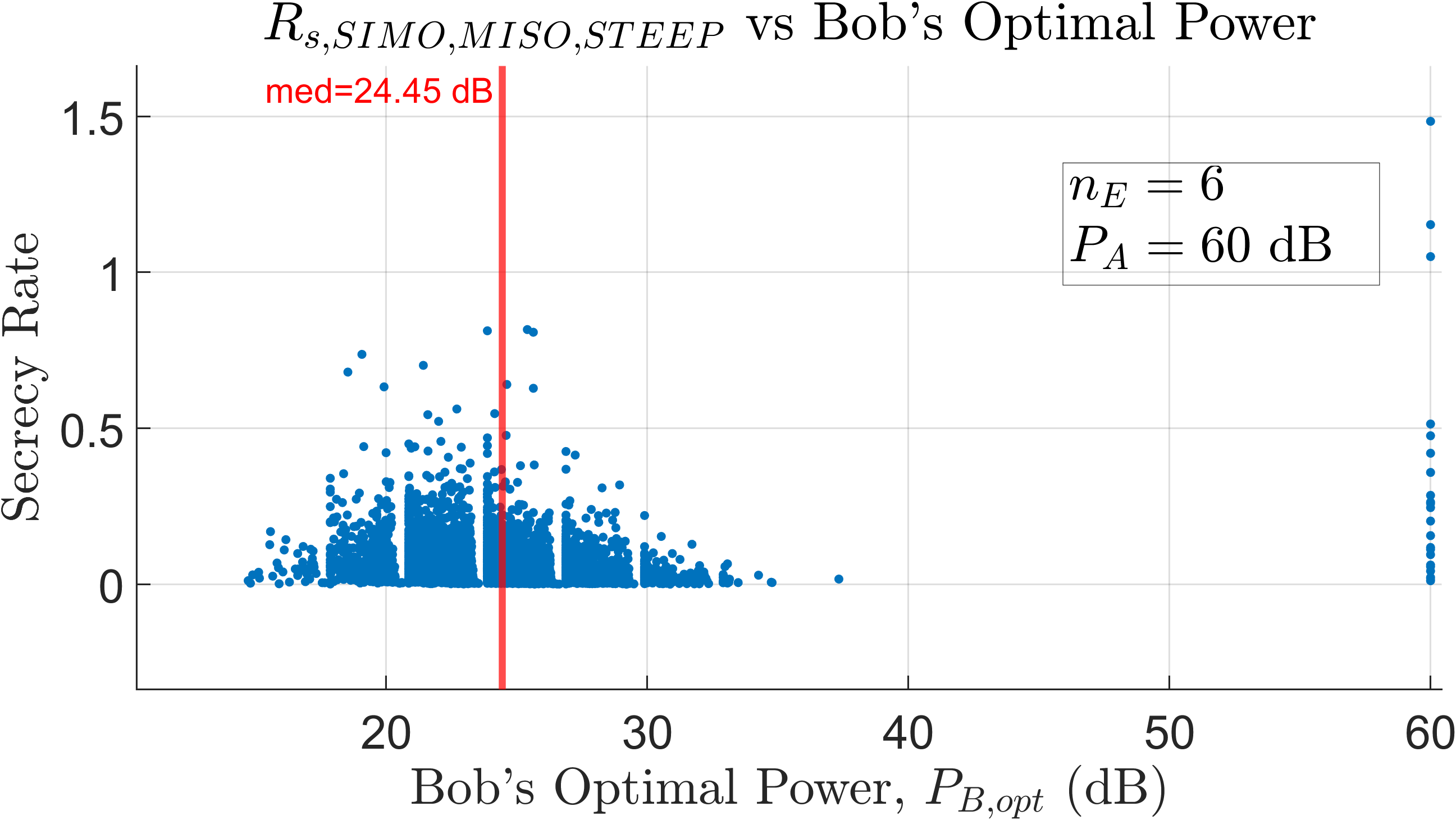}
\subcaption{$p_A=60$dB. }
\label{fig:secrecy_steep_pB_40_rho_0}
\end{minipage}
\caption{Distributions of  $(\hat R_{s,\texttt{S-M-STEEP}},p_{B,opt})$ with $p_E=10$dB, $n_A=4$, $n_E=6$.}
\label{fig:secrecy_rate_vs_optimized_p_B_simo_miso_steep_n_E_6}
\end{figure}

\begin{figure}[ht]
\begin{minipage}[b]{0.48\linewidth}
\centering
\includegraphics[width=\textwidth]{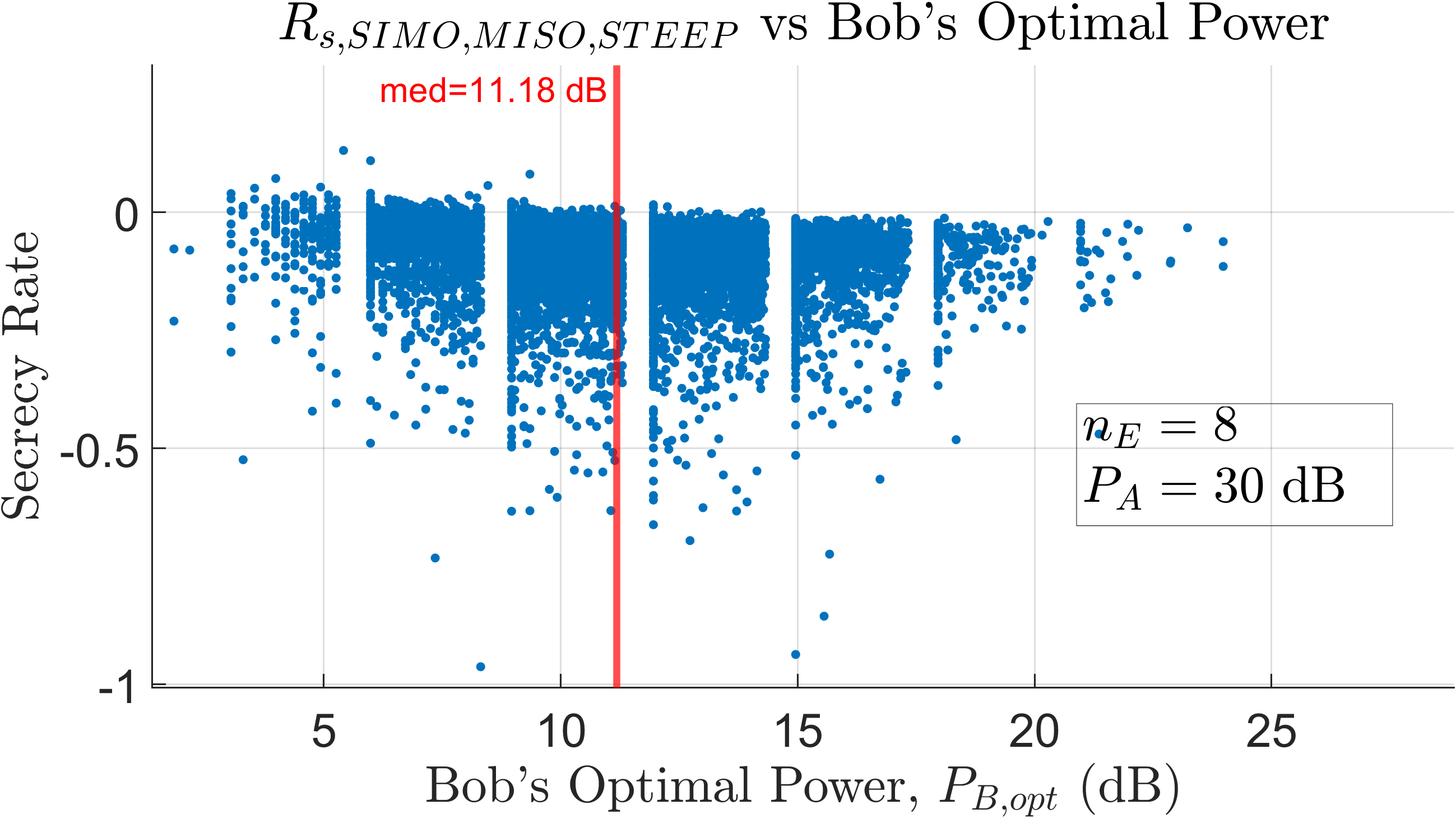}
\subcaption{$p_A=30$dB. }
\label{fig:secrecy_steep_pB_30}
\end{minipage}
\hspace{0.1cm}
\begin{minipage}[b]{0.48\linewidth}
\centering
\includegraphics[width=\textwidth]{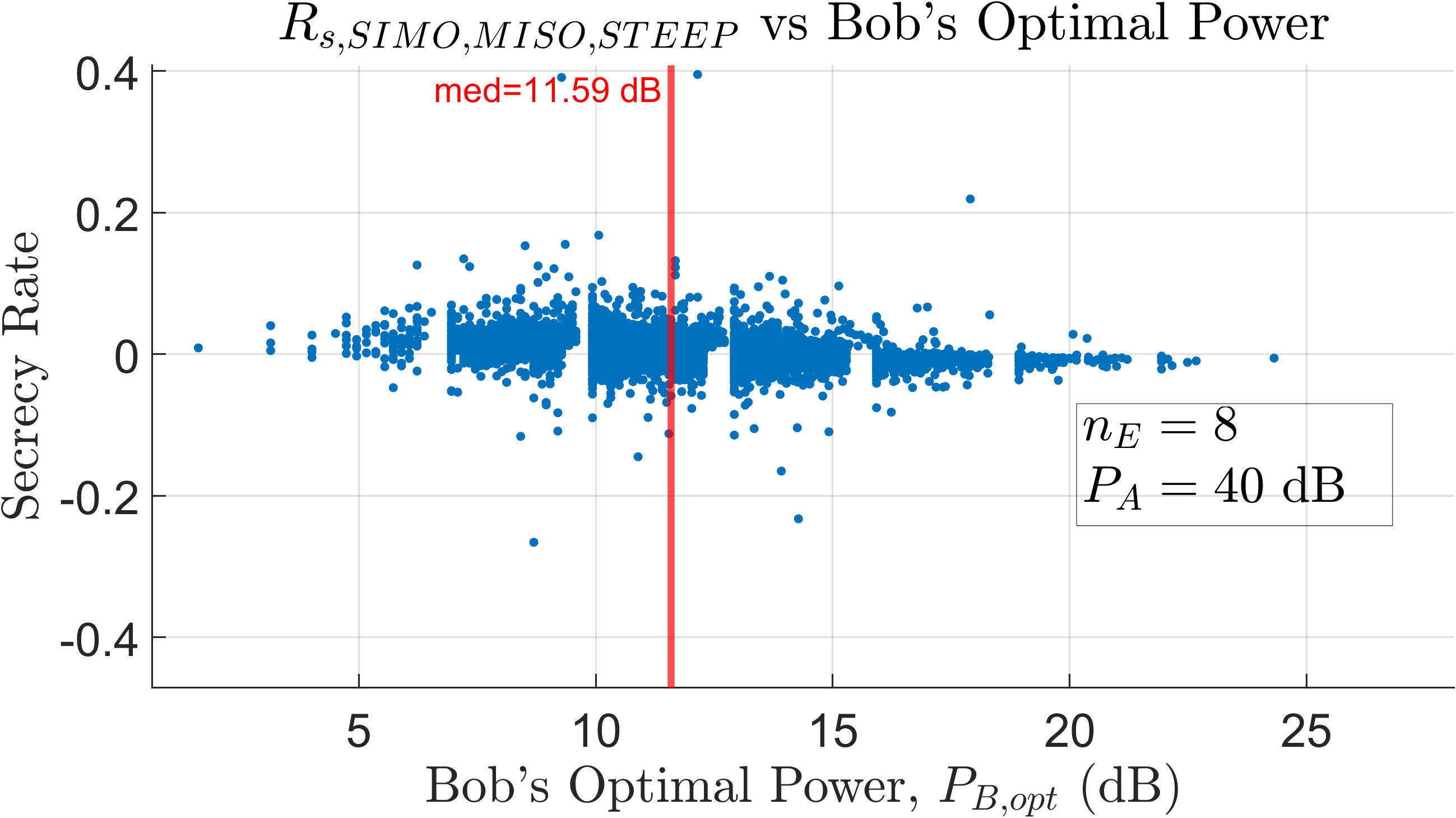}
\subcaption{$p_A=40$dB. }
\label{fig:secrecy_steep_pB_40}
\end{minipage}\\
\\
\begin{minipage}[b]{0.48\linewidth}
\centering
\includegraphics[width=\textwidth]{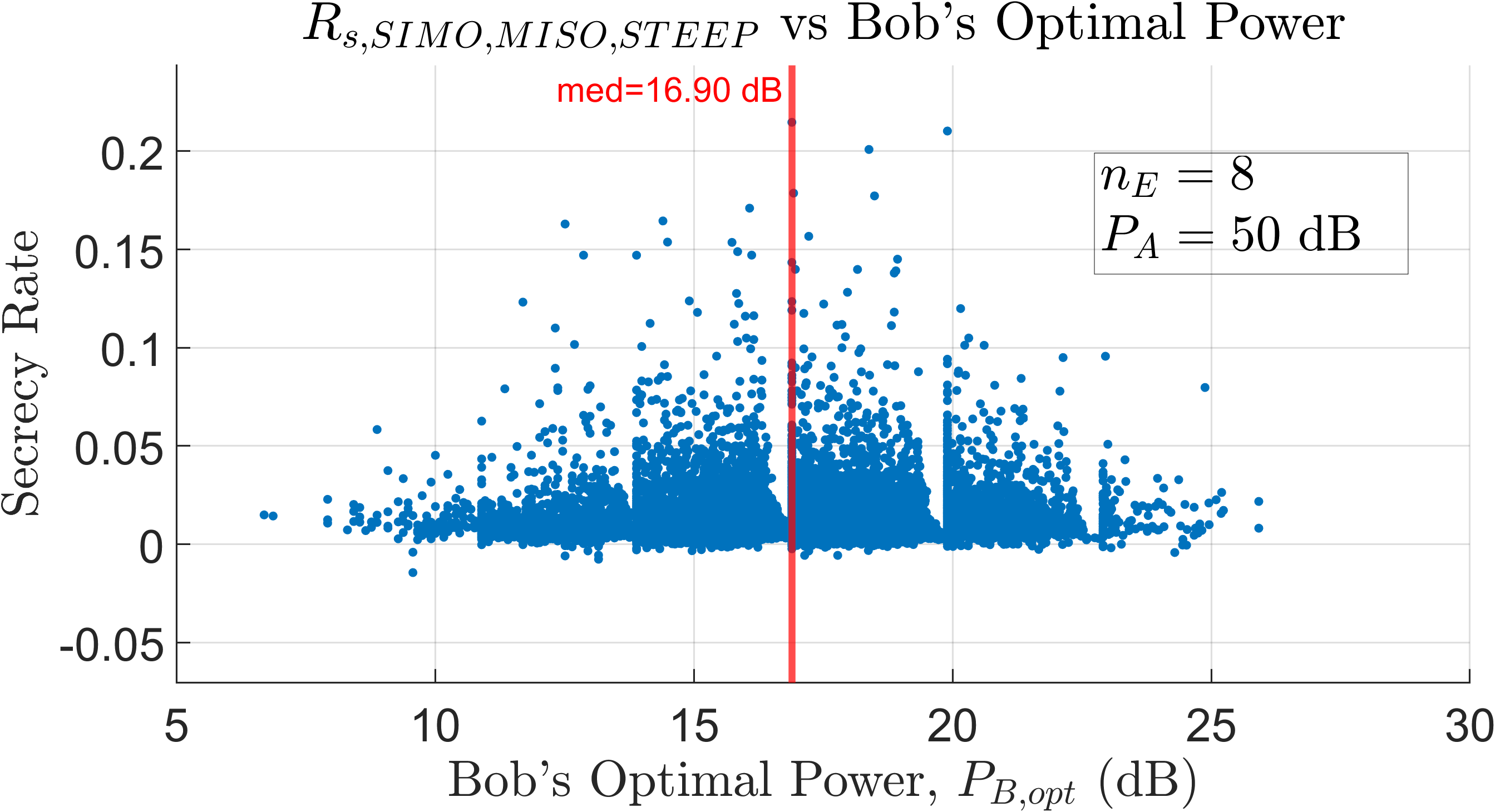}
\subcaption{$p_A=50$dB,}
\label{fig:secrecy_steep_pB_30_rho_0}
\end{minipage}
\hspace{0.1cm}
\begin{minipage}[b]{0.48\linewidth}
\centering
\includegraphics[width=\textwidth]{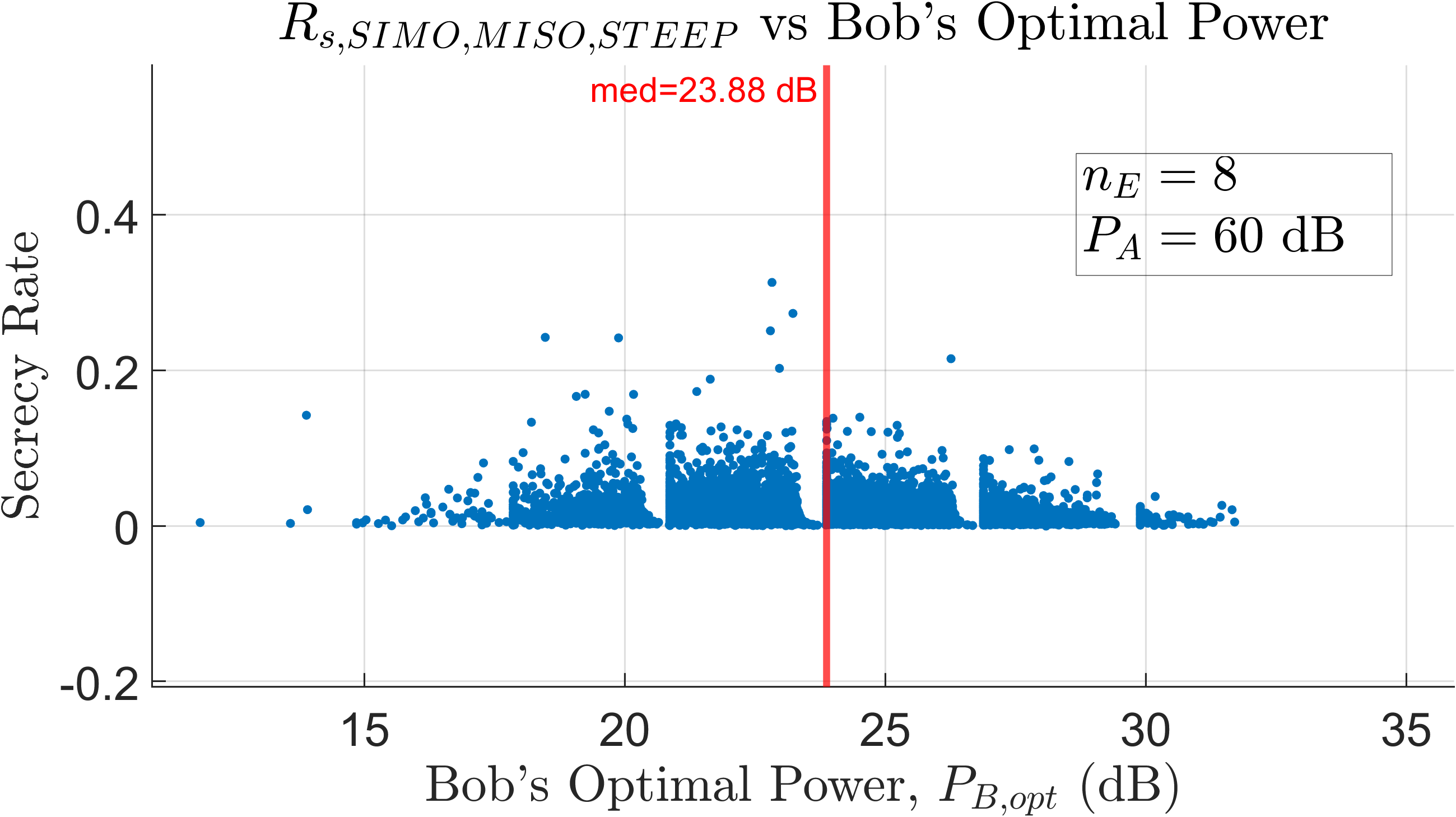}
\subcaption{$p_A=60$dB. }
\label{fig:secrecy_steep_pB_40_rho_0}
\end{minipage}
\caption{Distributions of  $(\hat R_{s,\texttt{S-M-STEEP}},p_{B,opt})$ with $p_E=10$dB, $n_A=4$, $n_E=8$.}
\label{fig:secrecy_rate_vs_optimized_p_B_simo_miso_steep_n_E_8}
\end{figure}

Similar to Fig. \ref{fig:secrecy_rate_vs_optimized_p_B_simo_miso_steep_n_E_4} for $n_E=4$, Figs. \ref{fig:secrecy_rate_vs_optimized_p_B_simo_miso_steep_n_E_6} and \ref{fig:secrecy_rate_vs_optimized_p_B_simo_miso_steep_n_E_8} are shown for $n_E=6$ and $n_E=8$ respectively. Because of the increased values of $n_E$, the optimal secrecy rate $\hat R_{s,\texttt{S-M-STEEP}}$ of SIMO-MISO-STEEP is mostly non-positive when its phase-2 power $p_A$ is not large enough. But when $p_A=60$dB, we see that for both $n_E=6$ and $n_E=8$, $\hat R_{s,\texttt{S-M-STEEP}}$ stays above zero with near probability one.

We also see that with a large $n_E$  the optimal phase-1 power $p_{B,opt}$ of SIMO-MISO-STEEP is rarely equal to the given maximum $p_A$. This is because with a large $n_E$ Eve's receive channel in phase 1 is rarely weaker than user's.

\begin{figure}[ht]
\begin{minipage}[b]{0.48\linewidth}
\centering
\includegraphics[width=\textwidth]{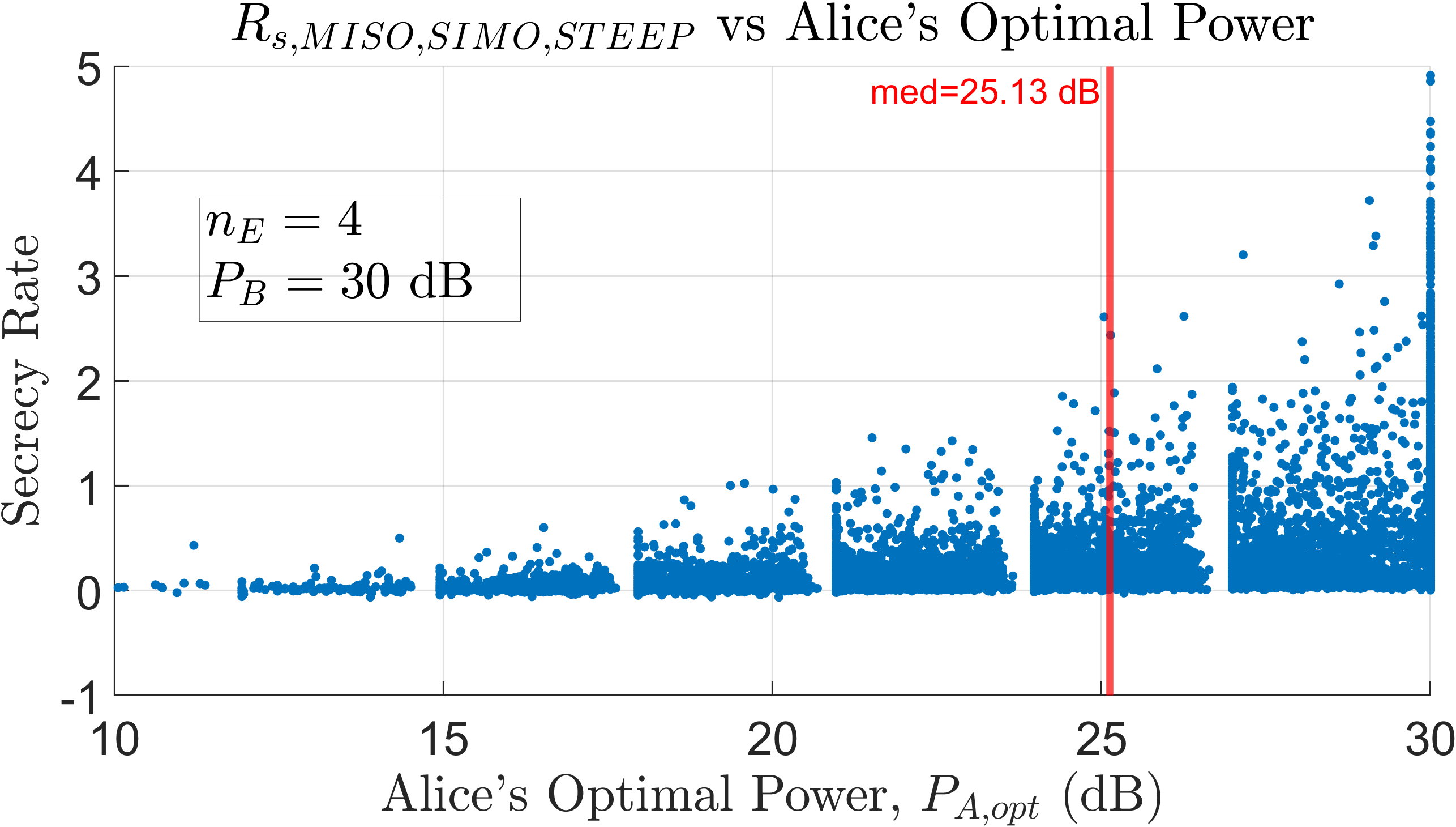}
\subcaption{$p_B=30$dB. }
\label{fig:secrecy_steep_pB_30}
\end{minipage}
\hspace{0.1cm}
\begin{minipage}[b]{0.48\linewidth}
\centering
\includegraphics[width=\textwidth]{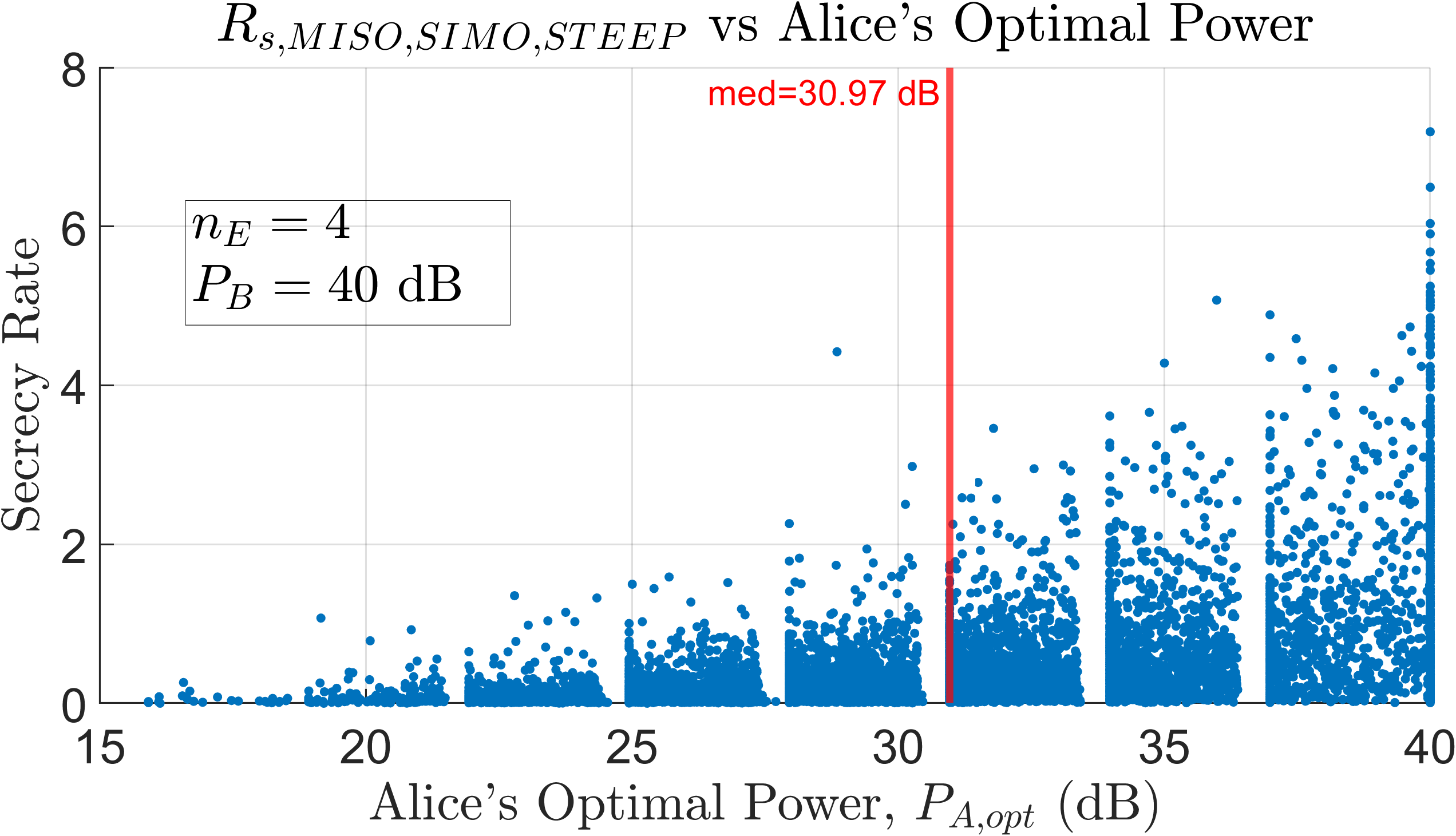}
\subcaption{$p_B=40$dB. }
\label{fig:secrecy_steep_pB_40}
\end{minipage}\\
\\
\begin{minipage}[b]{0.48\linewidth}
\centering
\includegraphics[width=\textwidth]{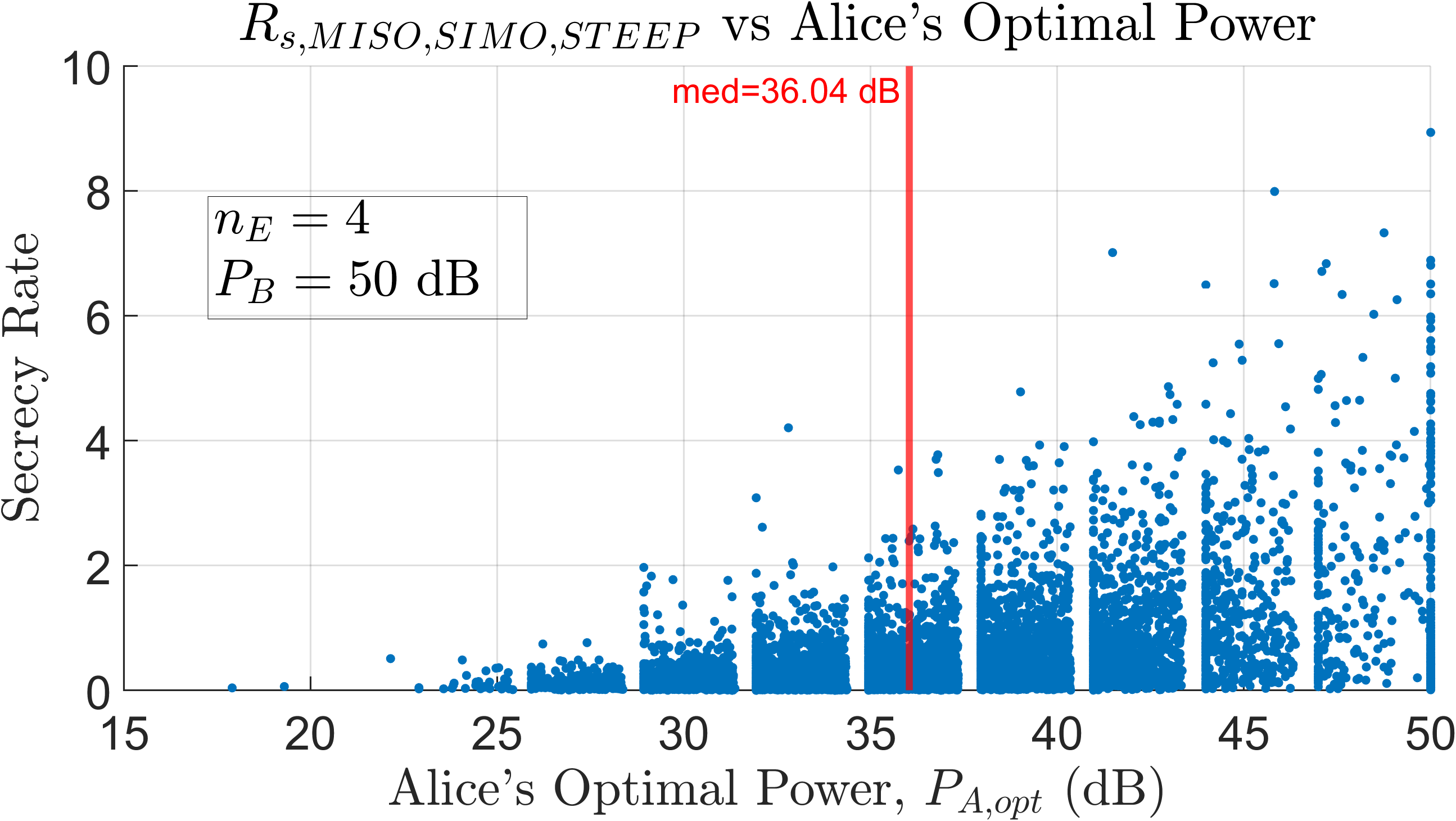}
\subcaption{$p_B=50$dB,}
\label{fig:secrecy_steep_pB_30_rho_0}
\end{minipage}
\hspace{0.1cm}
\begin{minipage}[b]{0.48\linewidth}
\centering
\includegraphics[width=\textwidth]{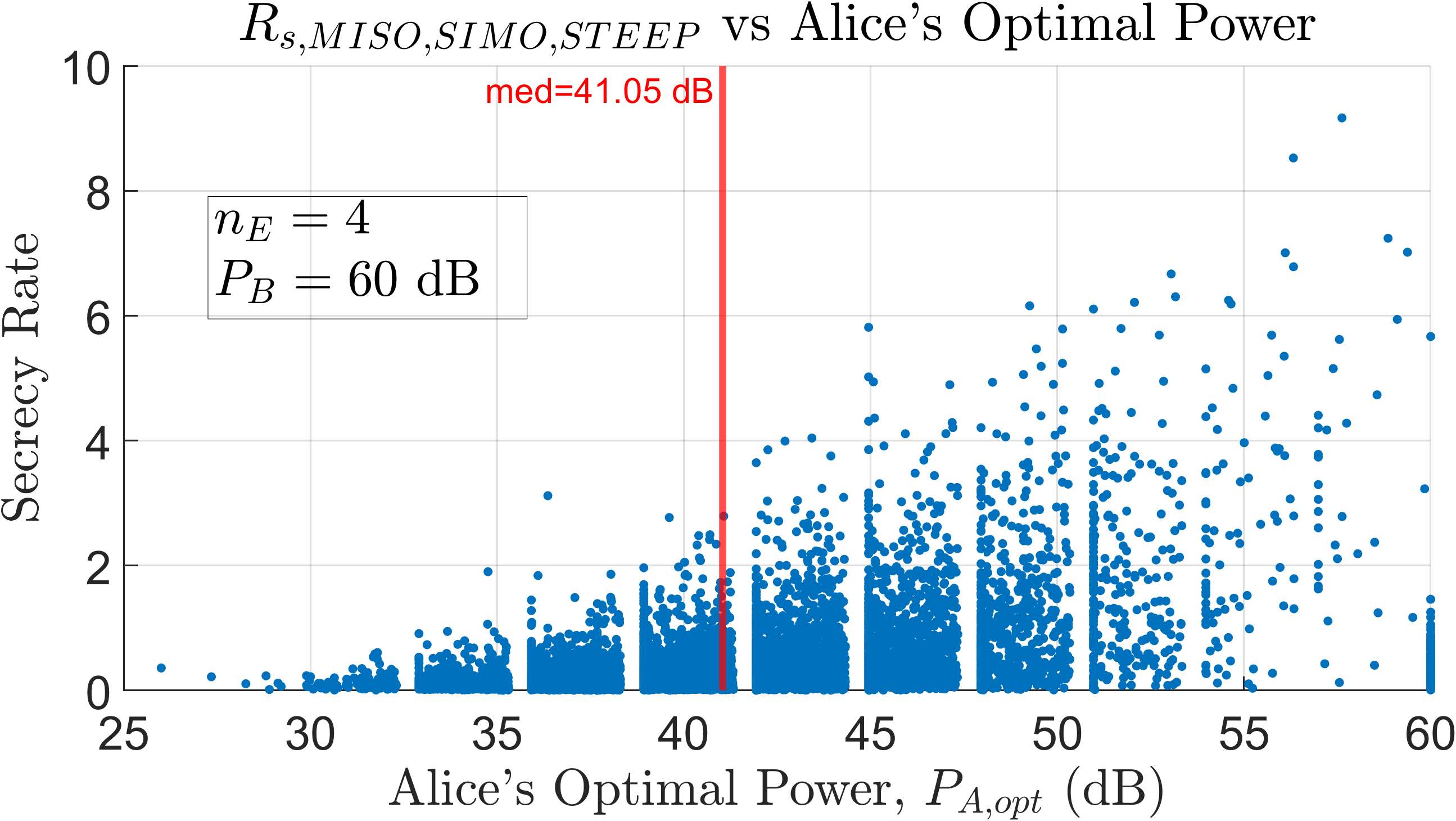}
\subcaption{$p_B=60$dB. }
\label{fig:secrecy_steep_pB_40_rho_0}
\end{minipage}
\caption{Distributions of $(\hat R_{s,\texttt{M-S-STEEP}},p_{A,opt})$ with $p_E=10$dB, $n_A=4$, $n_E=4$.}
\label{fig:secrecy_rate_vs_optimized_p_a_miso_simo_steep_n_E_4}
\end{figure}

\begin{figure}[ht]
\begin{minipage}[b]{0.48\linewidth}
\centering
\includegraphics[width=\textwidth]{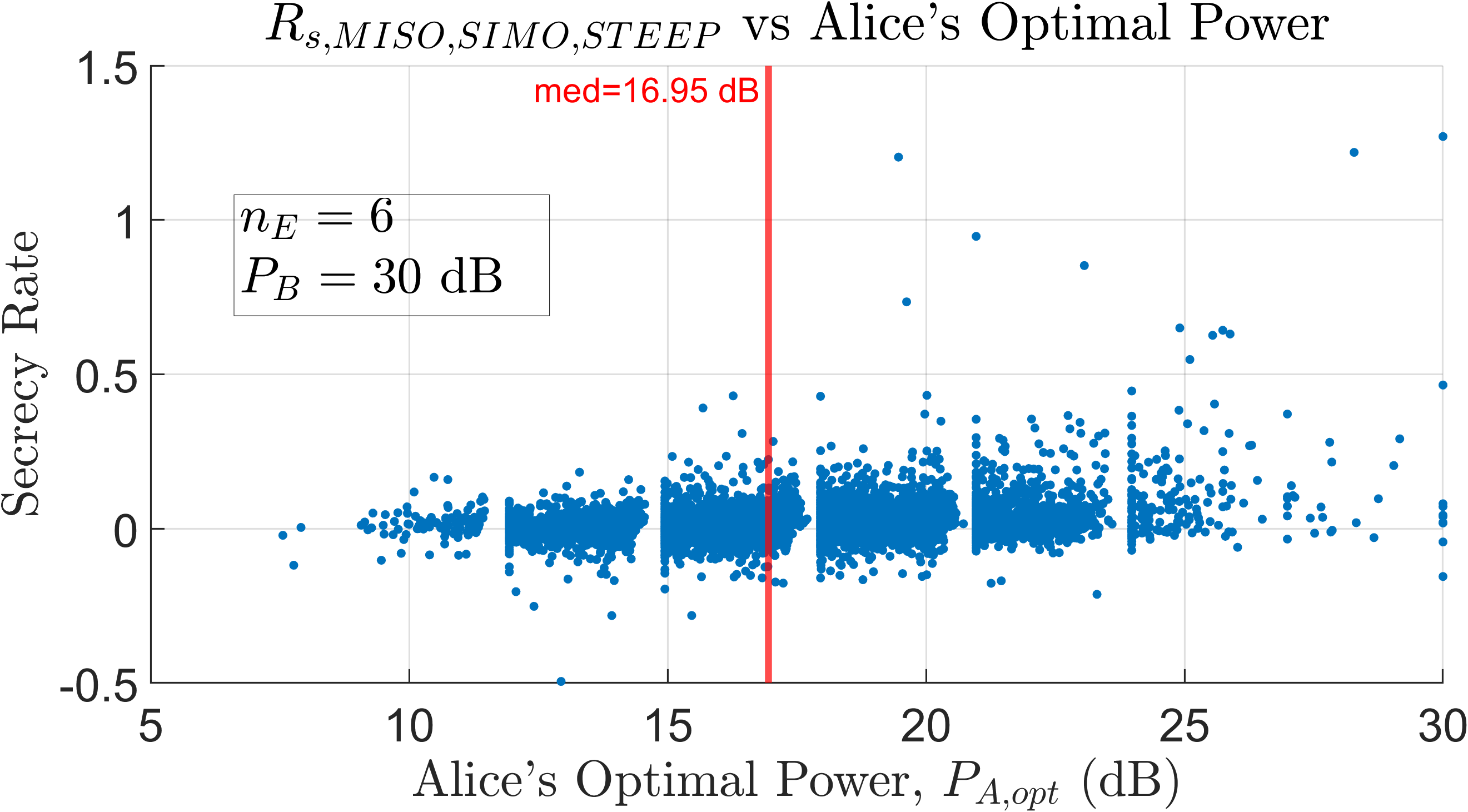}
\subcaption{$p_B=30$dB. }
\label{fig:secrecy_steep_pB_30}
\end{minipage}
\hspace{0.1cm}
\begin{minipage}[b]{0.48\linewidth}
\centering
\includegraphics[width=\textwidth]{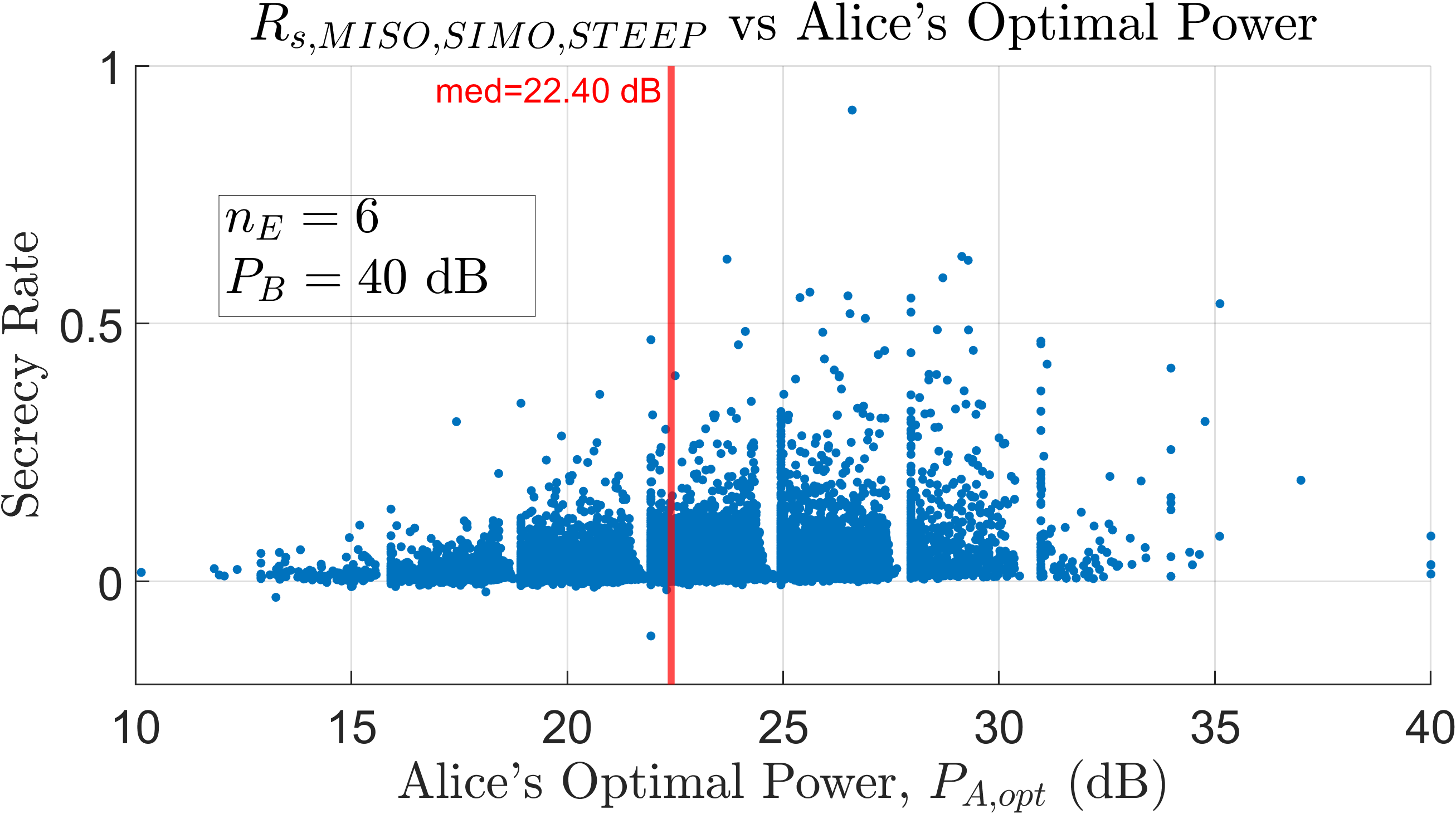}
\subcaption{$p_B=40$dB. }
\label{fig:secrecy_steep_pB_40}
\end{minipage}\\
\\
\begin{minipage}[b]{0.48\linewidth}
\centering
\includegraphics[width=\textwidth]{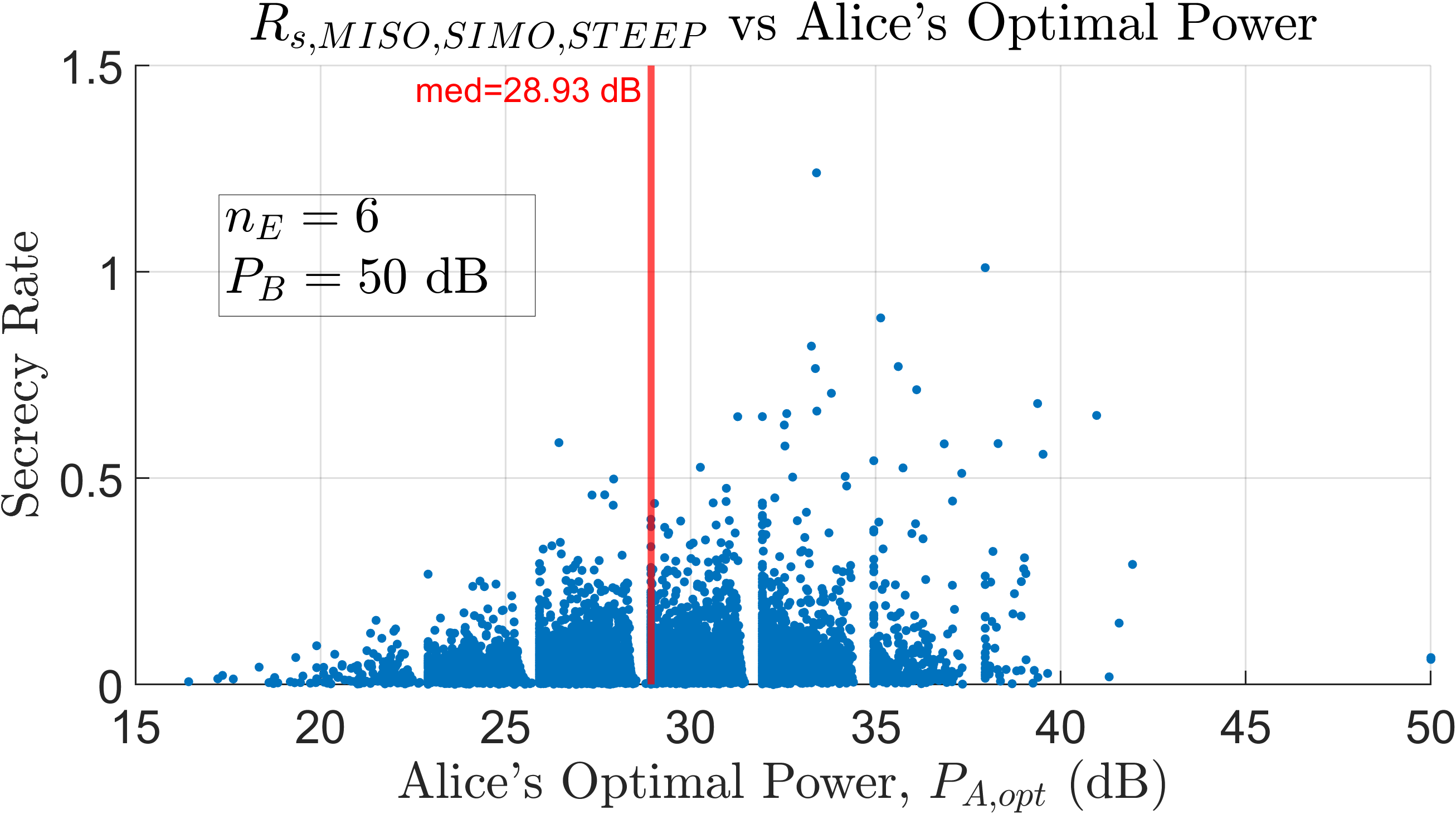}
\subcaption{$p_B=50$dB,}
\label{fig:secrecy_steep_pB_30_rho_0}
\end{minipage}
\hspace{0.1cm}
\begin{minipage}[b]{0.48\linewidth}
\centering
\includegraphics[width=\textwidth]{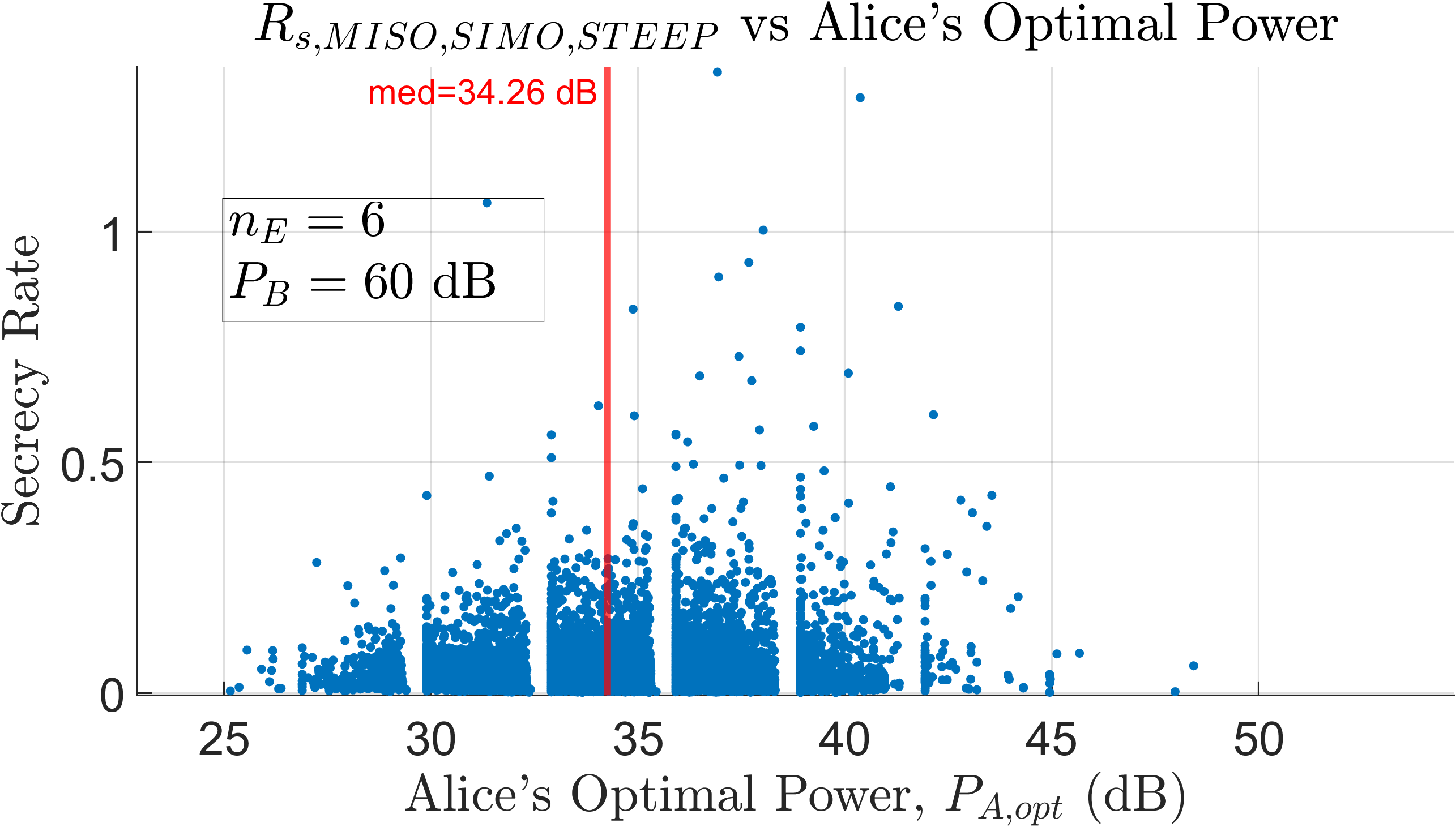}
\subcaption{$p_B=60$dB. }
\label{fig:secrecy_steep_pB_40_rho_0}
\end{minipage}
\caption{Distributions of  $(\hat R_{s,\texttt{M-S-STEEP}},p_{A,opt})$ with $p_E=10$dB, $n_A=4$, $n_E=6$.}
\label{fig:secrecy_rate_vs_optimized_p_a_miso_simo_steep_n_E_6}
\end{figure}

\begin{figure}[ht]
\begin{minipage}[b]{0.48\linewidth}
\centering
\includegraphics[width=\textwidth]{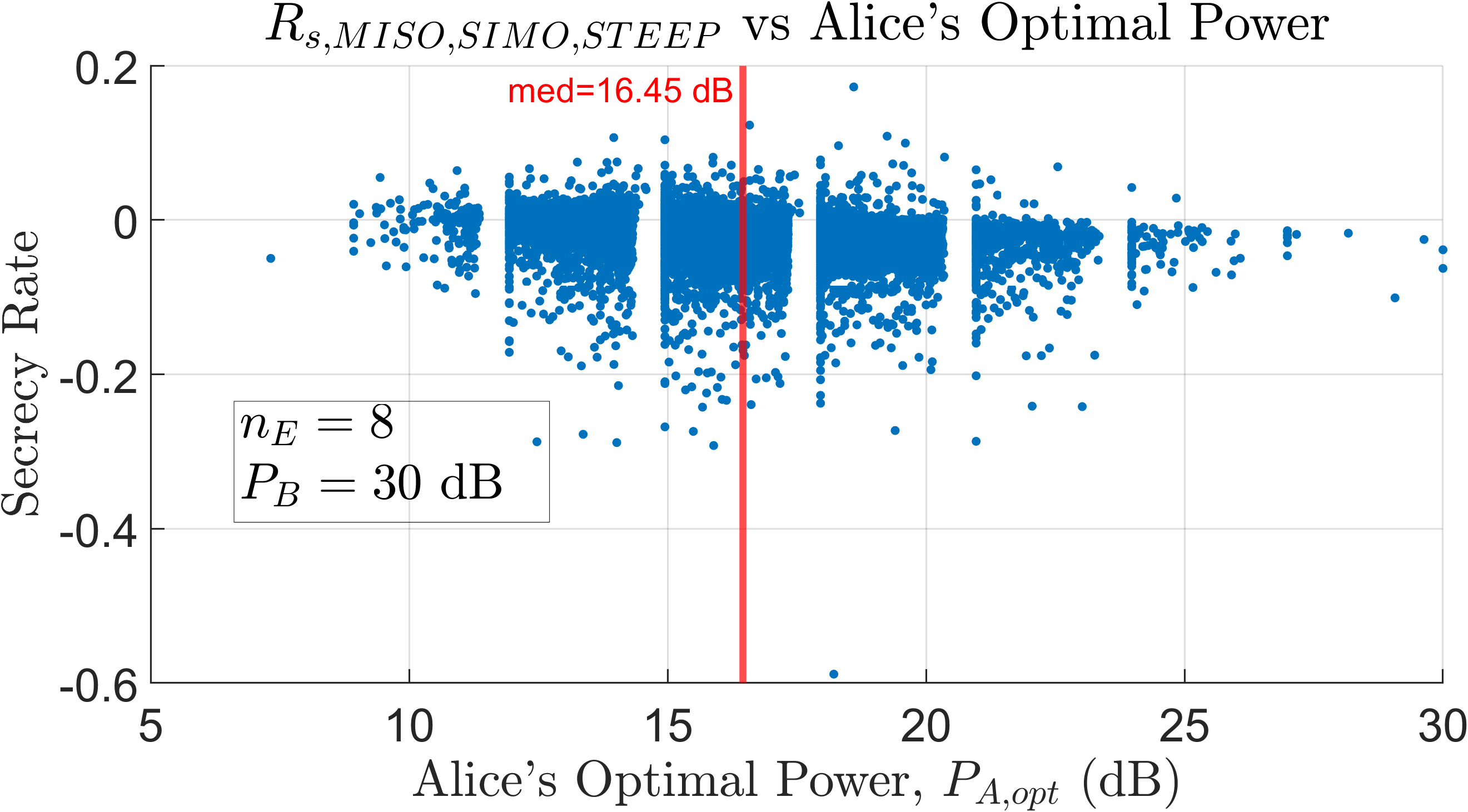}
\subcaption{$p_B=30$dB. }
\label{fig:secrecy_steep_pB_30}
\end{minipage}
\hspace{0.1cm}
\begin{minipage}[b]{0.48\linewidth}
\centering
\includegraphics[width=\textwidth]{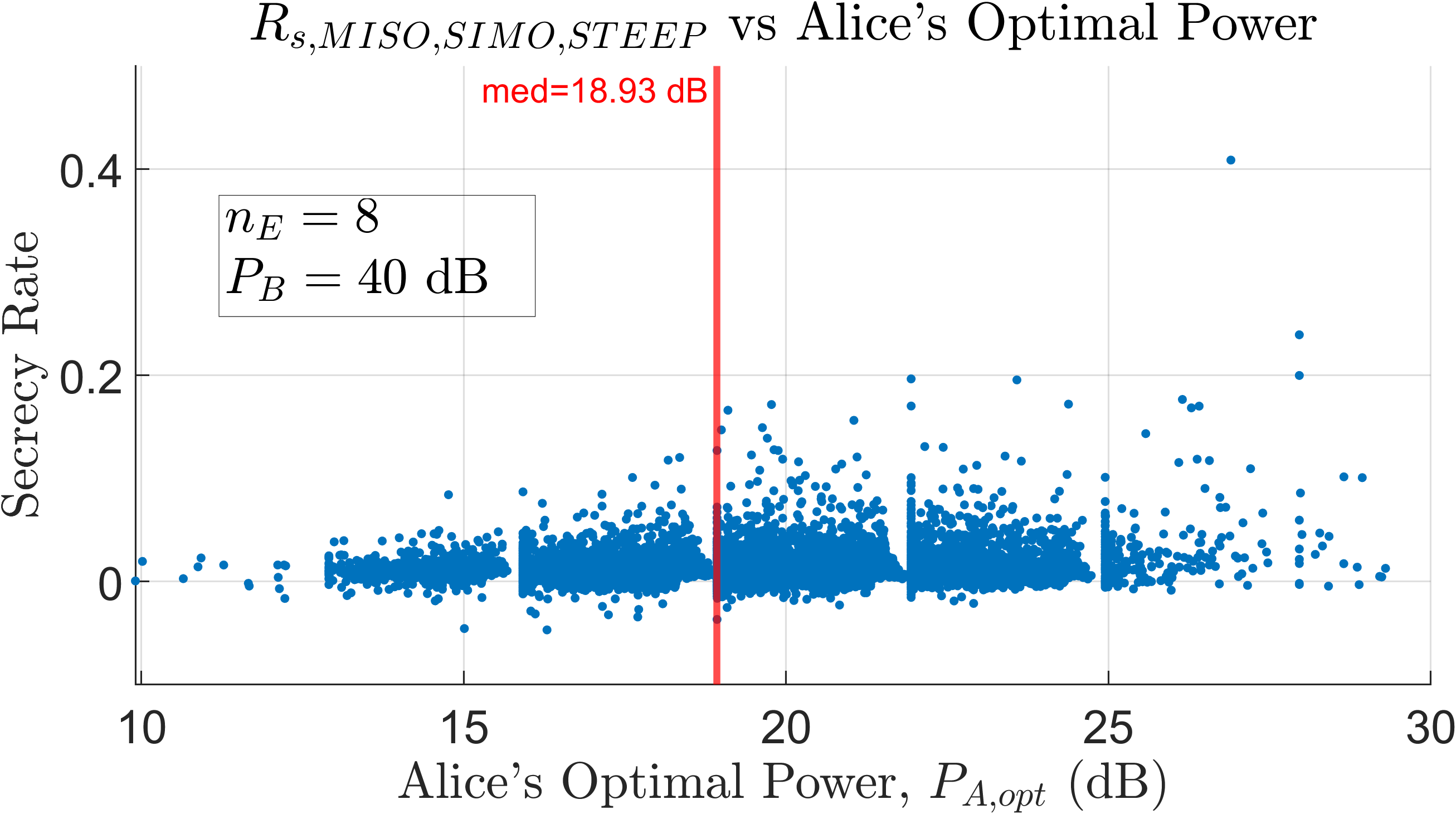}
\subcaption{$p_B=40$dB. }
\label{fig:secrecy_steep_pB_40}
\end{minipage}\\
\\
\begin{minipage}[b]{0.48\linewidth}
\centering
\includegraphics[width=\textwidth]{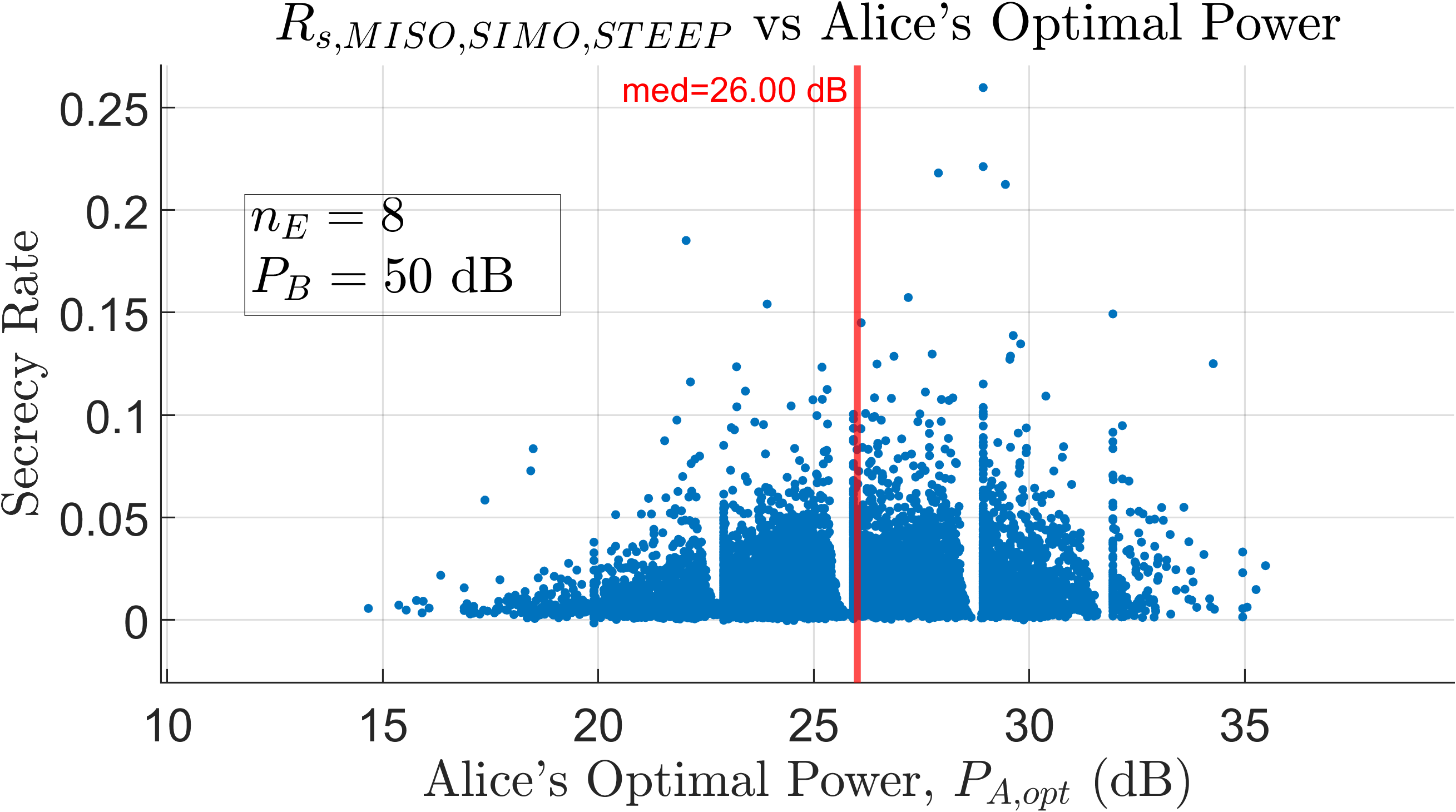}
\subcaption{$p_B=50$dB,}
\label{fig:secrecy_steep_pB_30_rho_0}
\end{minipage}
\hspace{0.1cm}
\begin{minipage}[b]{0.48\linewidth}
\centering
\includegraphics[width=\textwidth]{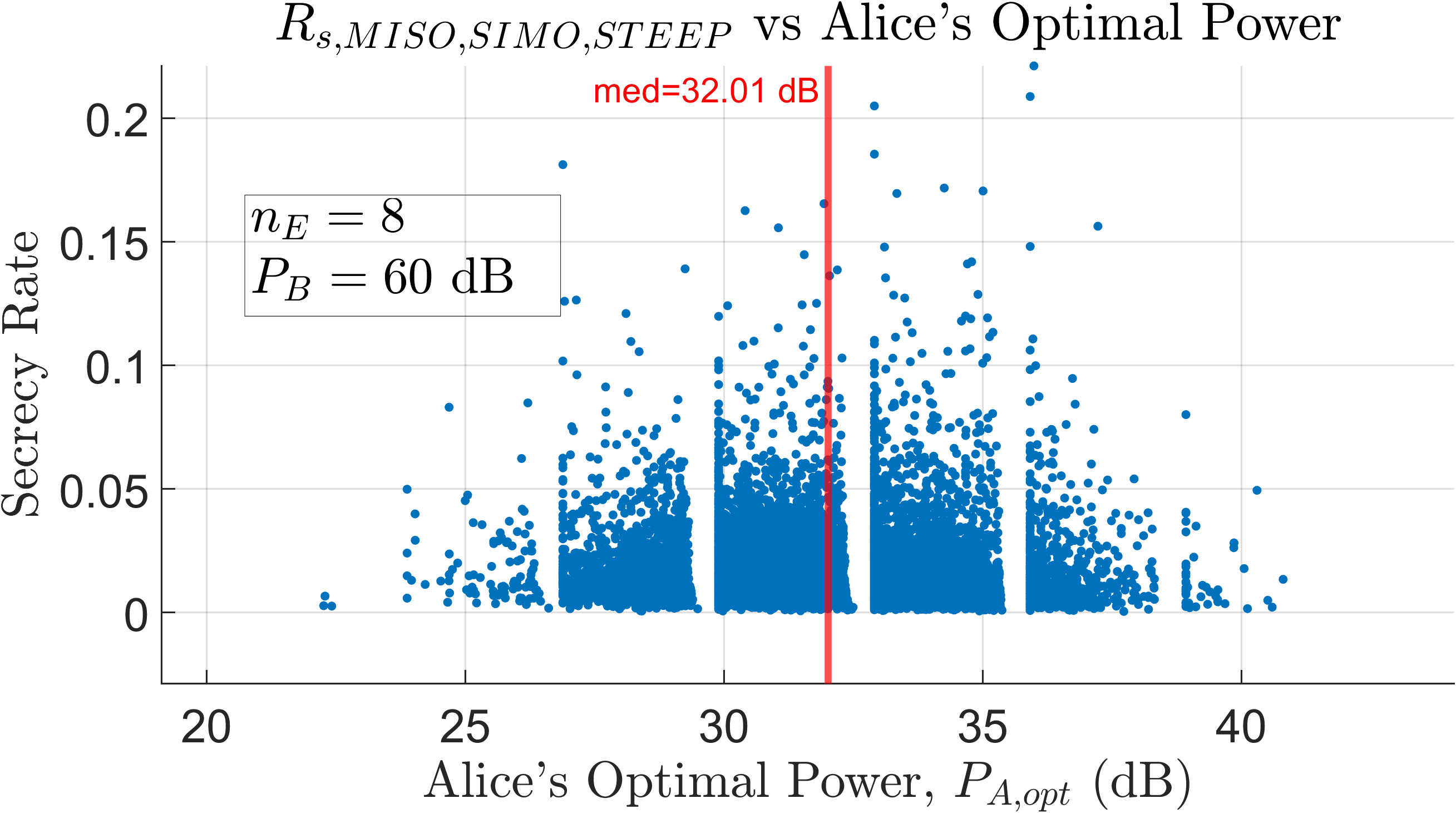}
\subcaption{$p_B=60$dB. }
\label{fig:secrecy_steep_pB_40_rho_0}
\end{minipage}
\caption{Distributions of $(\hat R_{s,\texttt{M-S-STEEP}},p_{A,opt})$ with $p_E=10$dB, $n_A=4$, $n_E=8$.}
\label{fig:secrecy_rate_vs_optimized_p_a_miso_simo_steep_n_E_8}
\end{figure}

Shown in Figs. \ref{fig:secrecy_rate_vs_optimized_p_a_miso_simo_steep_n_E_4}, \ref{fig:secrecy_rate_vs_optimized_p_a_miso_simo_steep_n_E_6} and \ref{fig:secrecy_rate_vs_optimized_p_a_miso_simo_steep_n_E_8} are the distributions of $(\hat R_{s,\texttt{M-S-STEEP}},p_{A,opt})$ for MISO-SIMO-STEEP. Most of the features one can see for MISO-SIMO-STEEP from Figs. \ref{fig:secrecy_rate_vs_optimized_p_a_miso_simo_steep_n_E_4}, \ref{fig:secrecy_rate_vs_optimized_p_a_miso_simo_steep_n_E_6} and \ref{fig:secrecy_rate_vs_optimized_p_a_miso_simo_steep_n_E_8} are similar to those for SIMO-MISO-STEEP from Figs. \ref{fig:secrecy_rate_vs_optimized_p_B_simo_miso_steep_n_E_4}, \ref{fig:secrecy_rate_vs_optimized_p_B_simo_miso_steep_n_E_6} and \ref{fig:secrecy_rate_vs_optimized_p_B_simo_miso_steep_n_E_8}. These figures should be now self-explanatory.

In practice, it would be too costly to conduct a bisection search to optimize the phase-1 power for every realization of channel parameters. Also, Eve's receive channels are often hidden from users. But a feasible choice of the phase-1 power in practice is the median of the optimal phase-1 power values computed offline based on a statistical model. This is what we adopt next in comparing the secrecy rates of SIMO-MISO-STEEP and MISO-SIMO-STEEP with those of the conventional schemes.

\subsection{Comparison with the conventional schemes}
We now present numerical results to compare SIMO-MISO-STEEP and MISO-SIMO-STEEP with the conventional SIMO and MISO schemes. Taking into account that STEEP is a round-trip scheme where both nodes need to consume some amount of transmit power, we  define
\begin{align}\label{eq:Delta_S_M_STEEP}
    &\Delta R_{s,\texttt{SIMO,MISO}}\doteq R_{s,\texttt{S-M-STEEP}}^{+} -  R_{s,\texttt{SIMO,conv}}^{+}\notag\\
     &\,\,- R_{s,\texttt{MISO,conv}}^{+},
\end{align}
\begin{align}\label{eq:Delta_M_S_STEEP}
    &\Delta R_{s,\texttt{MISO,SIMO}}\doteq R_{s,\texttt{M-S-STEEP}}^{+} -  R_{s,\texttt{SIMO,conv}}^{+} \notag\\
    &\,\,- R_{s,\texttt{MISO,conv}}^{+},
\end{align}
where every term in \eqref{eq:Delta_S_M_STEEP} is computed using the same parameters including the same realization of all channels, and the same holds for \eqref{eq:Delta_M_S_STEEP}. Here $\Delta R_{s,\texttt{SIMO,MISO}}$ measures a secrecy-rate advantage of SIMO-MISO-STEEP over the conventional SIMO and MISO schems, i.e., the difference between the secrecy rate of SIMO-MISO-STEEP and the sum secrecy rate of the conventional SIMO and MISO schemes; and $\Delta R_{s,\texttt{MISO,SIMO}}$ measures a secrecy-rate advantage of MISO-SIMO-STEEP over the conventional SIMO and MISO schemes.

Note that the phase-1 power used for SIMO-MISO-STEEP is now chosen to be $p_B=\bar p_{B,opt}$ which is the median of the optimal phase-1 powers for SIMO-MISO-STEEP as obtained from Figs. \ref{fig:secrecy_rate_vs_optimized_p_B_simo_miso_steep_n_E_4}, \ref{fig:secrecy_rate_vs_optimized_p_B_simo_miso_steep_n_E_6} and \ref{fig:secrecy_rate_vs_optimized_p_B_simo_miso_steep_n_E_8}. For fairness, the power used by Alice for $R_{s,\texttt{MISO,conv}}^{+}$ in \eqref{eq:Delta_S_M_STEEP} is now chosen to be $p_A+p_A\gamma_A\sigma_{\hat r_{B|A}}^2<(1+\gamma_A)p_A$.

Similarly, the phase-1 power used for MISO-SIMO-STEEP is now chosen to be $p_A=\bar p_{A,opt}$, which is the median of the optimal phase-1 powers for MISO-SIMO-STEEP as shown in Figs. \ref{fig:secrecy_rate_vs_optimized_p_a_miso_simo_steep_n_E_4}, \ref{fig:secrecy_rate_vs_optimized_p_a_miso_simo_steep_n_E_6} and \ref{fig:secrecy_rate_vs_optimized_p_a_miso_simo_steep_n_E_8}. And the power used by Bob for $R_{s,\texttt{SIMO,conv}}^{+}$ in \eqref{eq:Delta_M_S_STEEP} is now chosen to be $p_B+p_B\sigma_{\hat r_{A|B}}^2<2p_B$.

Also note that we have found that the secrecy rate of either SIMO-MISO-STEEP or MISO-SIMO-STEEP as a function of its phase-1 power has a broad peak, which is not very sensitive to a deviation from the above choices of phase-1 powers. The good choices of phase-1 powers are typically much smaller than the phase-2 power.

\begin{figure}[ht]
\begin{minipage}[b]{0.48\linewidth}
\centering
\includegraphics[width=4.5cm,height=4.5cm]{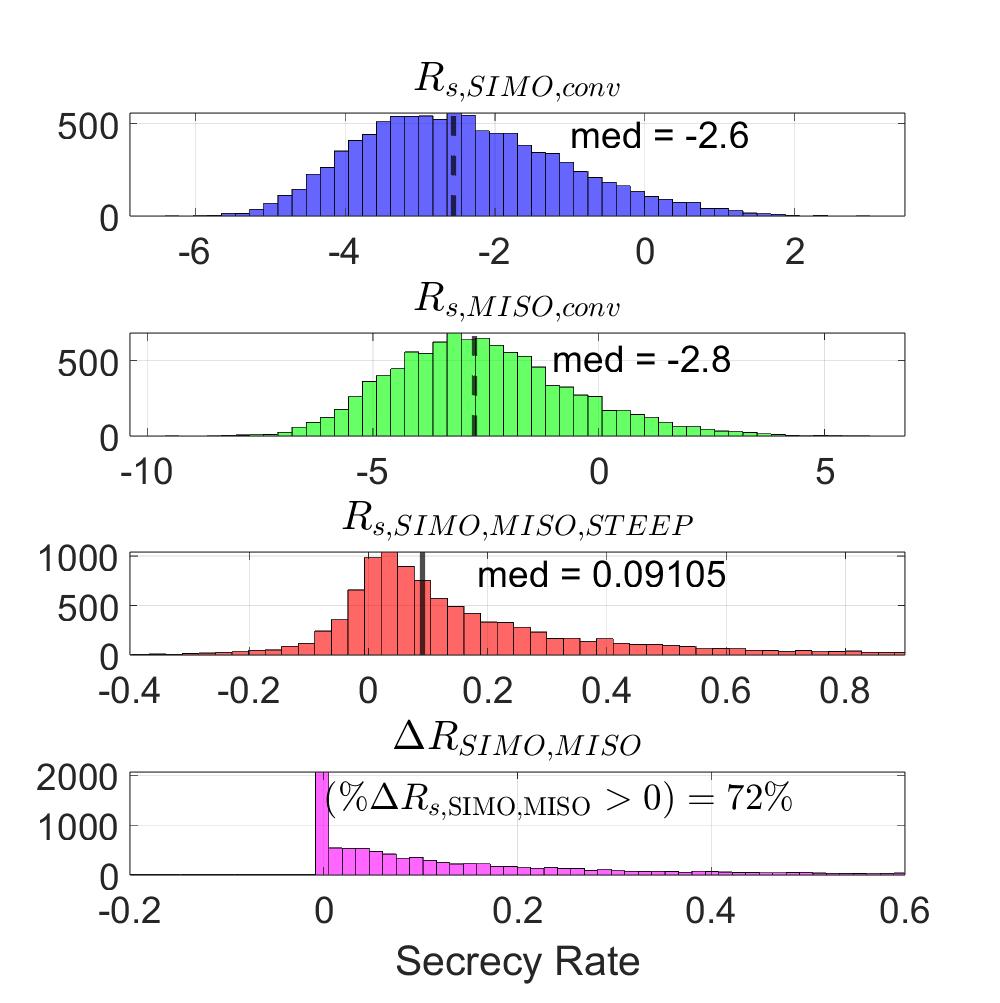}
\subcaption{$p_A=30$, $p_B=\bar p_{B,opt}=8.2$dB. }
\label{fig:secrecy_steep_pB_30}
\end{minipage}
\hspace{0.01cm}
\begin{minipage}[b]{0.48\linewidth}
\centering
\includegraphics[width=4.5cm,height=4.5cm]{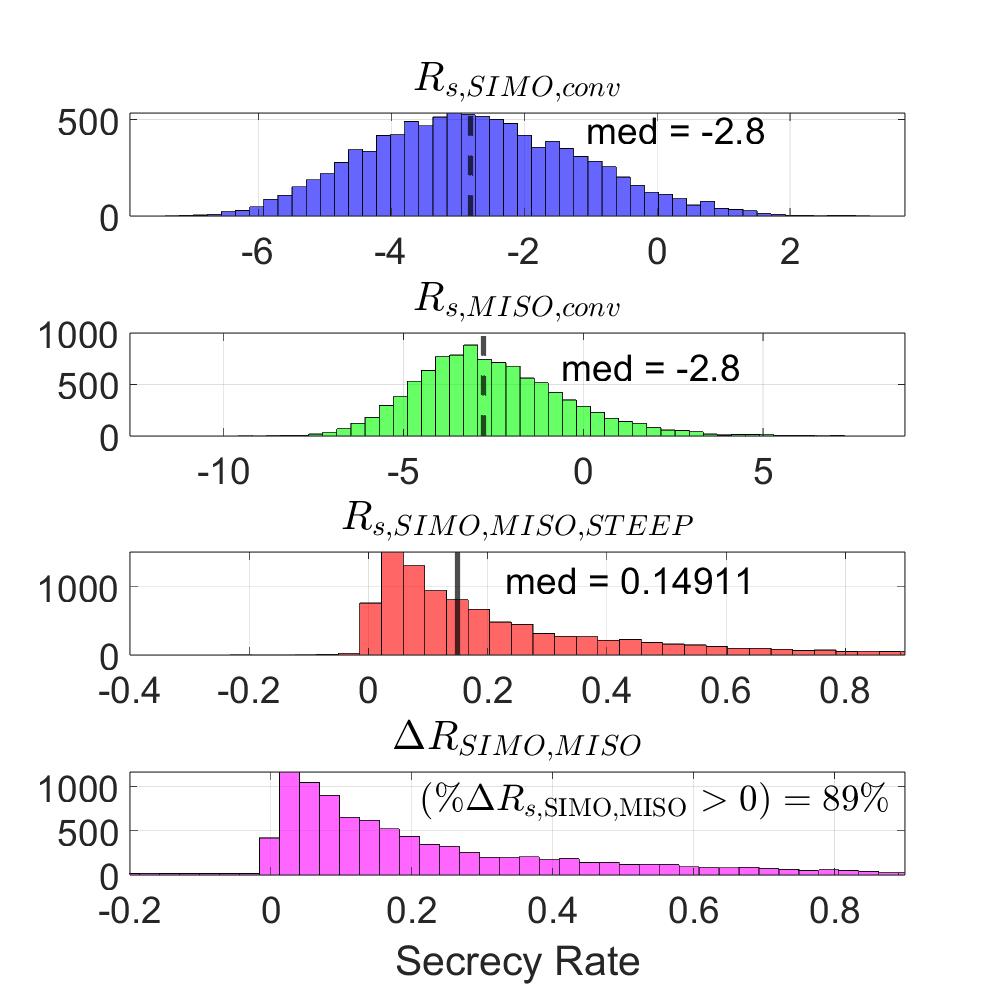}
\subcaption{$p_A=40$, $p_B=\bar p_{B,opt}=14.2$dB. }
\label{fig:secrecy_steep_pB_40}
\end{minipage}\\
\caption{Comparison of SIMO-MISO-STEEP with the conventional using $p_E=10$dB, $n_A=4$, $n_E=4$.}
\label{fig:comp_optimized_simo_miso_steep_n_E_4}
\end{figure}

Shown in Fig. \ref{fig:comp_optimized_simo_miso_steep_n_E_4} are the histograms of the individual terms in \eqref{eq:Delta_S_M_STEEP} with $p_E=10$dB, $n_A=4$ and $n_E=4$. And two options of $p_A$, i.e., 30dB and 40dB, are shown. We see that the secrecy-rate advantage of SIMO-MISO-STEEP is mostly positive while the secrecy rates of the conventional SIMO and MISO schemes are mostly non-positive. The median values of the secrecy rates are shown for the upper three rows of plots. For the bottom row of plots, the percentages of positive secrecy-rate advantages of SIMO-MISO-STEEP are also shown.

\begin{figure}[ht]
\begin{minipage}[b]{0.48\linewidth}
\centering
\includegraphics[width=4.5cm,height=4.5cm]{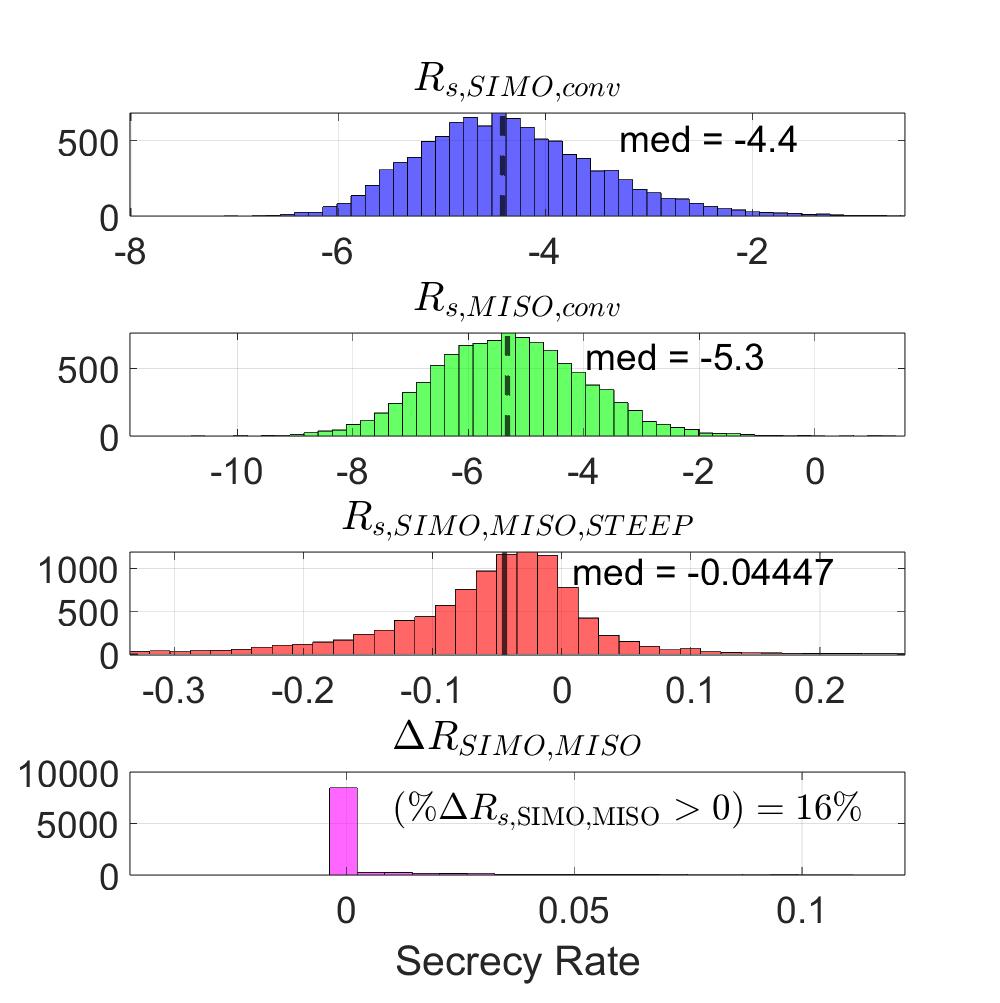}
\subcaption{$p_A=30$dB, $p_B=\bar p_{B,opt}=9$dB. }
\label{fig:secrecy_steep_pB_30}
\end{minipage}
\hspace{0.01cm}
\begin{minipage}[b]{0.48\linewidth}
\centering
\includegraphics[width=4.5cm,height=4.5cm]{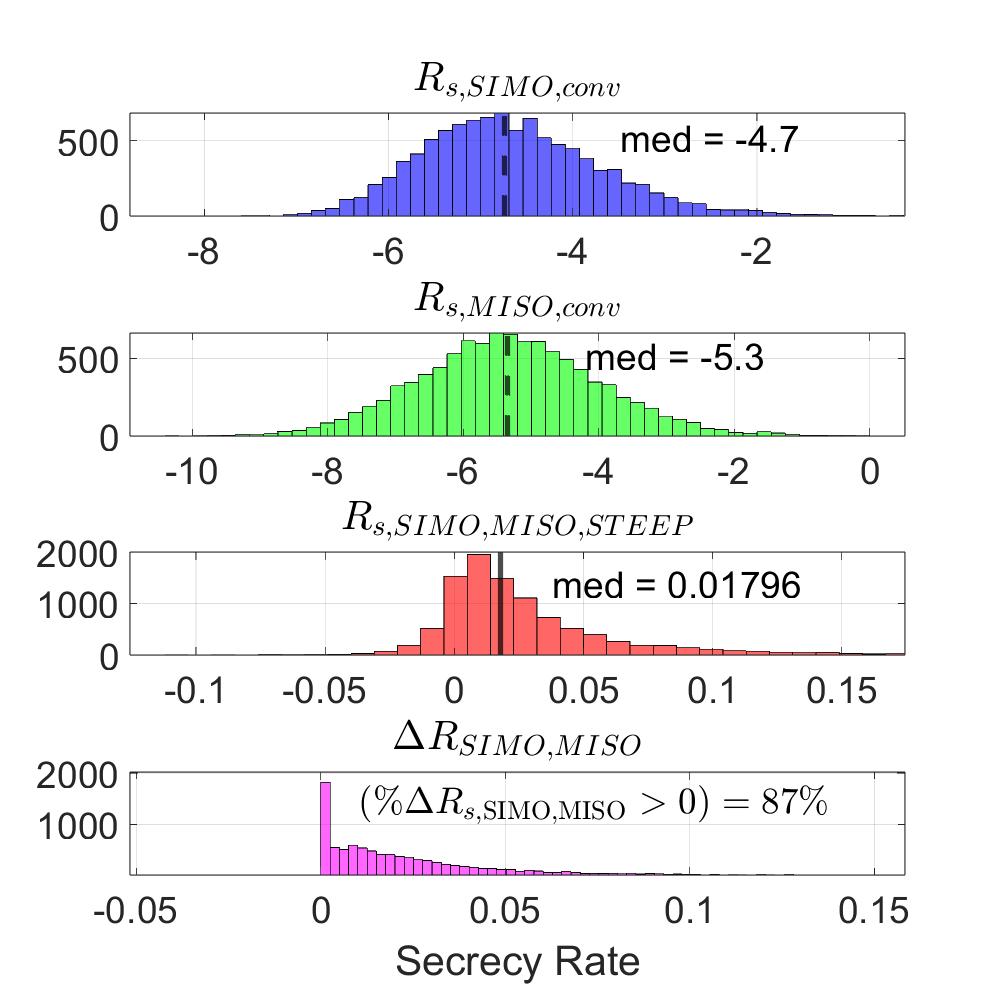}
\subcaption{$p_A=40$dB, $p_B=\bar p_{B,opt}=11.3$dB. }
\label{fig:secrecy_steep_pB_40}
\end{minipage}\\
\caption{Comparison of SIMO-MISO-STEEP with the conventional using $p_E=10$dB, $n_A=4$, $n_E=6$.}
\label{fig:comp_optimized_simo_miso_steep_n_E_6}
\end{figure}

\begin{figure}[ht]
\begin{minipage}[b]{0.48\linewidth}
\centering
\includegraphics[width=4.5cm,height=4.5cm]{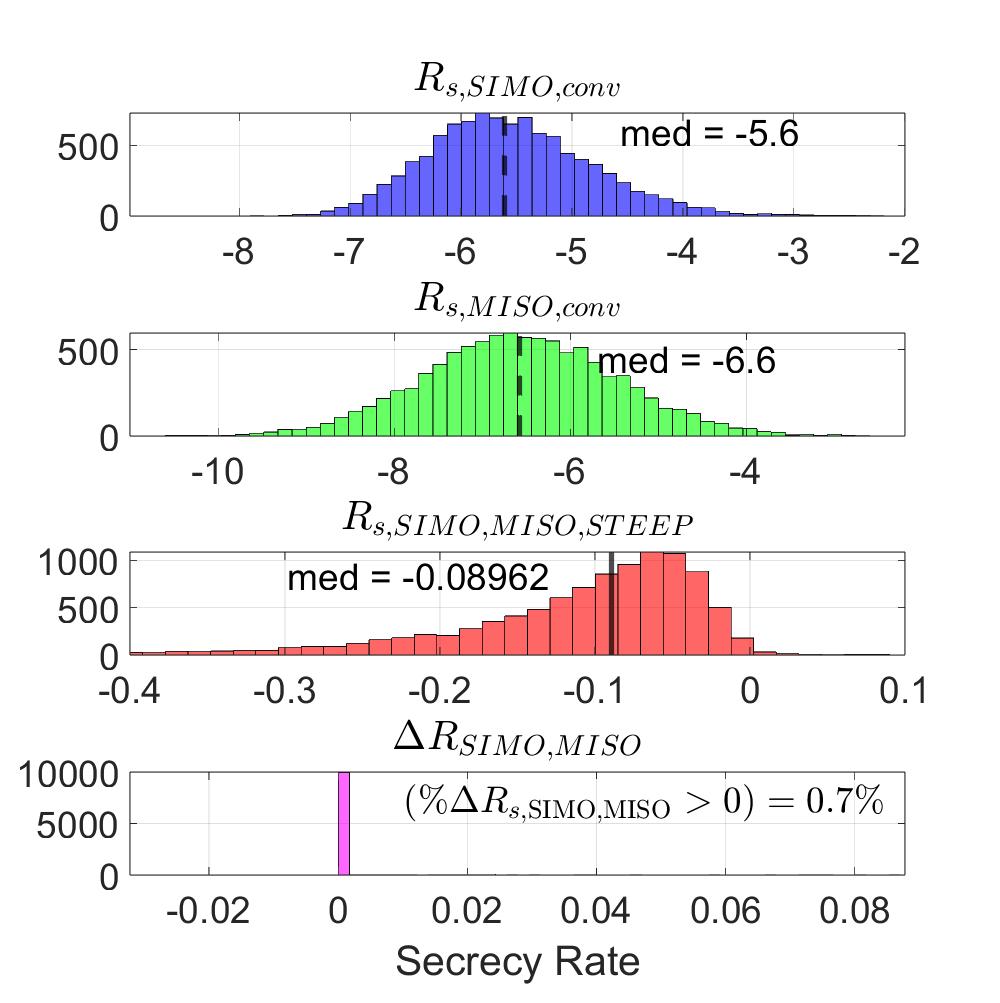}
\subcaption{$p_A=30$dB, $p_B=\bar p_{B,opt}=11.1$dB. }
\label{fig:secrecy_steep_pB_30}
\end{minipage}
\hspace{0.01cm}
\begin{minipage}[b]{0.48\linewidth}
\centering
\includegraphics[width=4.5cm,height=4.5cm]{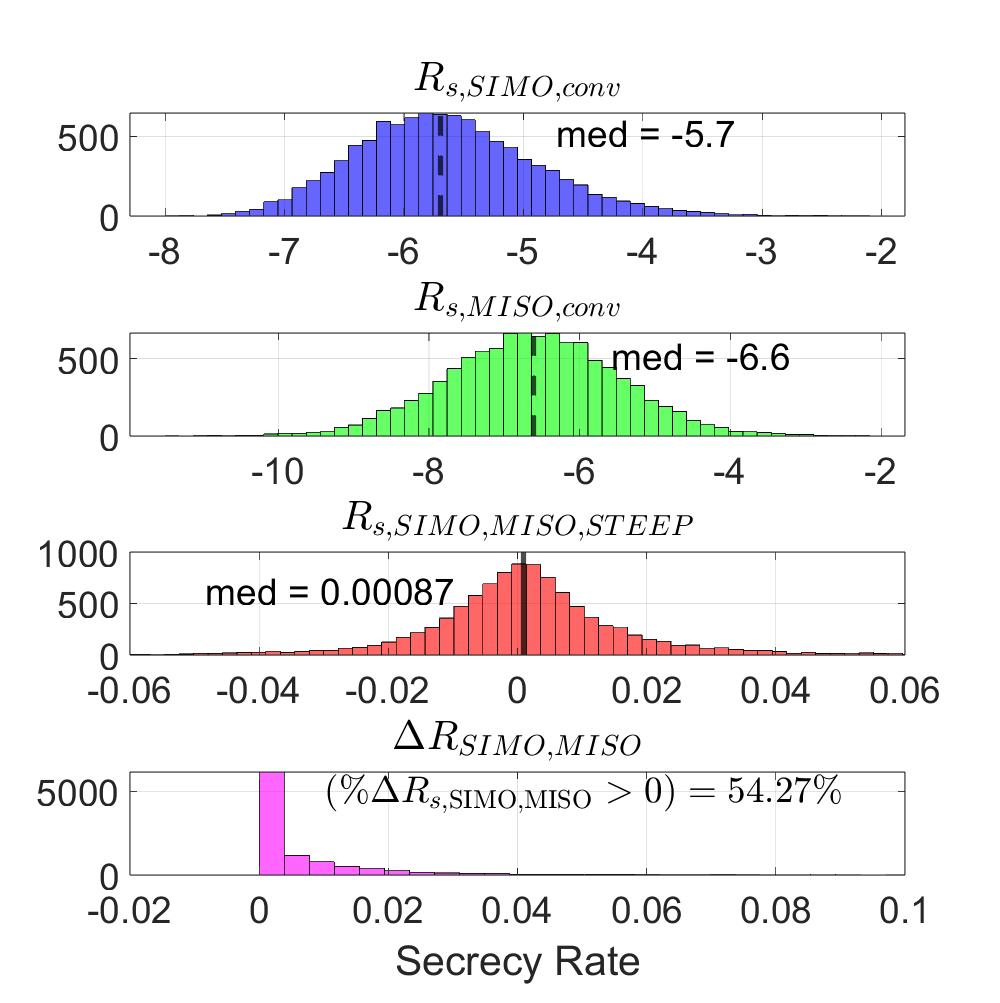}
\subcaption{$p_A=40$dB, $p_B=\bar p_{B,opt}=11.6$dB. }
\label{fig:secrecy_steep_pB_40}
\end{minipage}\\
\caption{Comparison of SIMO-MISO-STEEP with the conventional using $p_E=10$dB, $n_A=4$, $n_E=8$.}
\label{fig:comp_optimized_simo_miso_steep_n_E_8}
\end{figure}

Similar to Fig. \ref{fig:comp_optimized_simo_miso_steep_n_E_4}, Figs. \ref{fig:comp_optimized_simo_miso_steep_n_E_6} and \ref{fig:comp_optimized_simo_miso_steep_n_E_8} show the histograms of the individual terms in
\eqref{eq:Delta_S_M_STEEP} but using $n_E=6$ and $n_E=8$ respectively. In these cases, due to the increased $n_E$, the secrecy rates of the  conventional SIMO and MISO schemes are almost all non-positive while the secrecy rate of SIMO-MISO-STEEP is still positive for a significant probability especially for the case of $p_A=40$dB. The secrecy-rate advantage of SIMO-MISO-STEEP (i.e., $\Delta R_{s,\texttt{SIMO,MISO}}$) is consequently also positive for the same probabilities.

Note that for the bottom two plots in Fig. \ref{fig:comp_optimized_simo_miso_steep_n_E_8}, for example, the positive components of $\Delta R_{s,\texttt{SIMO,MISO}}$ are overshadowed by a large concentration of $\Delta R_{s,\texttt{SIMO,MISO}}=0$ (due to ``rounding'' of all negative values of $R_{s,\texttt{S-M-STEEP}}$ to zero). But the probability of $\Delta R_{s,\texttt{SIMO,MISO}}>0$ is indicated in each of the two plots.

Let $\texttt{M}_{\texttt{SIMO}}$, $\texttt{M}_{\texttt{MISO}}$, $\texttt{M}_{\texttt{SIMO-MISO}}$ and $\texttt{M}_{\texttt{MISO-SIMO}}$ be the medians of  $R_{\texttt{s,SIMO,conv}}$, $R_{\texttt{s,MISO,conv}}$, $R_{\texttt{s,S-M-STEEP}}$ and $R_{\texttt{s,M-S-STEEP}}$, respectively. Also let $\rho_{\texttt{SIMO-MISO}}$ be the probability that $\Delta R_{\texttt{s,SIMO,MISO}}>0$, and $\rho_{\texttt{MISO-SIMO}}$ be the probability that $\Delta R_{\texttt{s,MISO,SIMO}}>0$.

Table \ref{Table_1} summarizes the key numbers from Figs. \ref{fig:comp_optimized_simo_miso_steep_n_E_4}-\ref{fig:comp_optimized_simo_miso_steep_n_E_8}.
We see from Table \ref{Table_1} that the secrecy rates of all schemes decrease as $n_E$ increases, the (negative) median of $R_{\texttt{s,SIMO,conv}}$ decreases as $p_B$ increases, but the (negative) median of $R_{\texttt{s,MISO,conv}}$ is virtually invariant to $p_A$. Also the median of the optimal power $p_{B,opt}$ (in phase 1 here) increases with the given power $p_A$ (in phase 2 here). But the numerical pattern of $\bar p_{B,opt}$ versus $n_E$ was found to be increasing at $p_A=30$dB, decreasing at $p_A=50,60$dB, and inconsistent at $p_A=40$dB. Also see Figs \ref{fig:secrecy_rate_vs_optimized_p_B_simo_miso_steep_n_E_4}-\ref{fig:secrecy_rate_vs_optimized_p_B_simo_miso_steep_n_E_8}.

\begin{table}
  \centering
\begin{tabular}{|c|c|c|c|c|c|c|}
  \hline
  $n_E$ & 4 & 4 & 6 & 6 & 8 & 8 \\
  \hline
  $p_A$(dB) & 30 & 40 & 30 & 40 & 30 & 40 \\\hline
  $p_B=\bar p_{B,opt}$(dB) & 8.2  &14.2  & 9 & 11.3 & 11.1 &  11.6\\\hline
  $\texttt{M}_{\texttt{SIMO}}$ & -2.6 & -2.8 &-4.4  &-4.7  & -5.6 & -5.7 \\\hline
   $\texttt{M}_{\texttt{MISO}}$& -2.8 & -2.8 & -5.3 & -5.3 & -6.6 & -6.6 \\\hline
   $\texttt{M}_{\texttt{SIMO-MISO}}$& 0.09 & 0.15 & -0.04 & 0.02 & -0.09 & 0.0009 \\\hline
   $\rho_{\texttt{SIMO-MISO}}$&72\%  &89\%  & 16\% & 87\% & 0.7\% & 54\% \\
  \hline
\end{tabular}
  \caption{Comparison of key numbers from Figs \ref{fig:comp_optimized_simo_miso_steep_n_E_4}-\ref{fig:comp_optimized_simo_miso_steep_n_E_8}
  where $\bar p_{B,opt}$ is the median of the optimal $p_B$ for SIMO-MISO-STEEP.}
  \label{Table_1}
\end{table}

\begin{table}
  \centering
 \begin{tabular}{|c|c|c|c|c|c|c|}
   \hline
   $n_E$ & 4 & 4 & 6 & 6 & 8 & 8 \\\hline
     $p_B$(dB) & 30 & 40 & 30 & 40 & 30 & 40 \\\hline
   $p_A=\bar p_{A,opt}$(dB) & 25 & 31 & 16.8 & 22.4 & 16.4 & 18.9 \\\hline
   $\texttt{M}_{\texttt{SIMO}}$ & -2.9 & -3.0 & -5.4 & -5.4 & -6.6 & -6.6 \\\hline
   $\texttt{M}_{\texttt{MISO}}$ & -2.8 & -2.8 & -4.9 & -5.2 & -5.9 & -6.1 \\\hline
   $\texttt{M}_{\texttt{MISO-SIMO}}$ & 0.13 & 0.17 & 0.006 & 0.02 & -0.02 & 0.008 \\\hline
   $\rho_{\texttt{MISO-SIMO}}$ & 86\% & 94\% & 61\% & 90\% & 15.3\% & 88.6\% \\
   \hline
 \end{tabular}
  \caption{Comparison of key numbers from Figs. \ref{fig:comp_optimized_miso_simo_steep_n_E_4}-\ref{fig:comp_optimized_miso_simo_steep_n_E_8} where $\bar p_{A,opt}$ is the median of the optimal $p_A$ for MISO-SIMO-STEEP.}
  \label{Table_2}
\end{table}

\begin{figure}[ht]
\begin{minipage}[b]{0.48\linewidth}
\centering
\includegraphics[width=4.5cm,height=4.5cm]{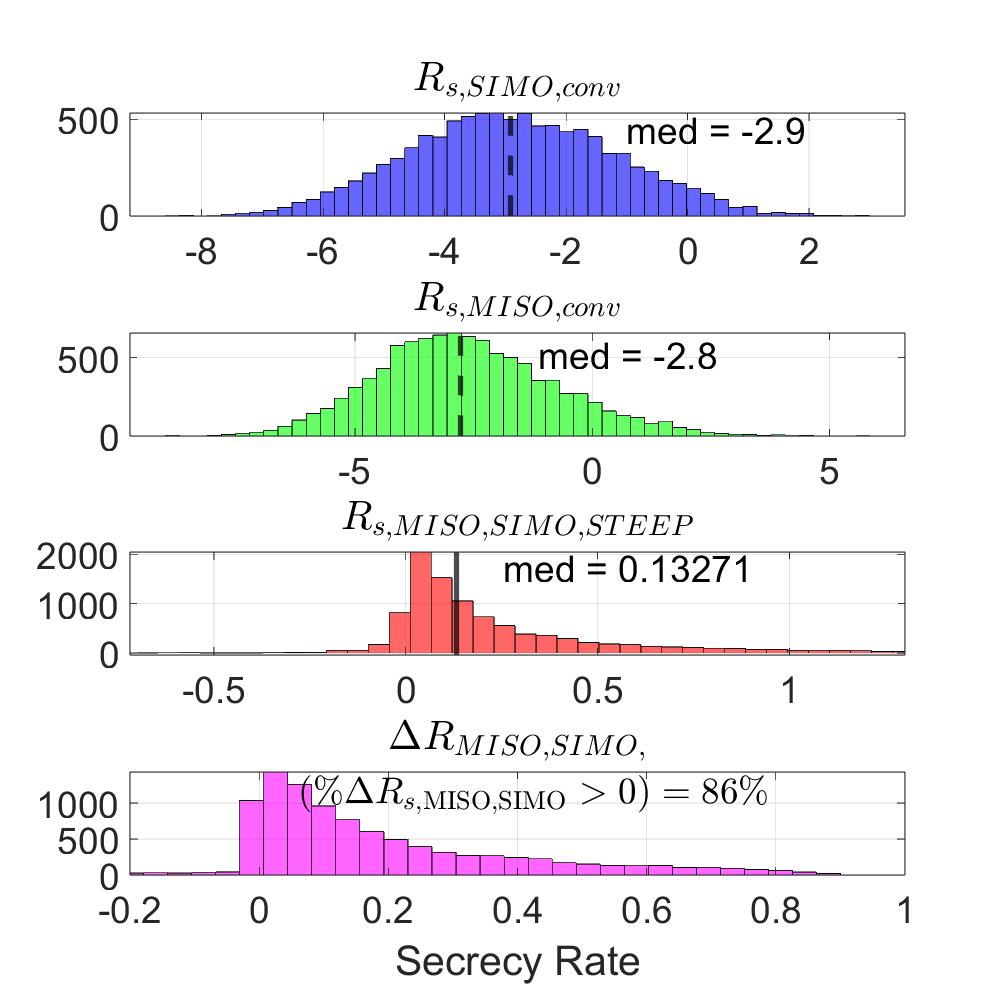}
\subcaption{$p_B=30$dB, $p_A=\bar p_{A,opt}=25$dB. }
\label{fig:secrecy_steep_pa_30}
\end{minipage}
\hspace{0.1cm}
\begin{minipage}[b]{0.48\linewidth}
\centering
\includegraphics[width=4.5cm,height=4.5cm]{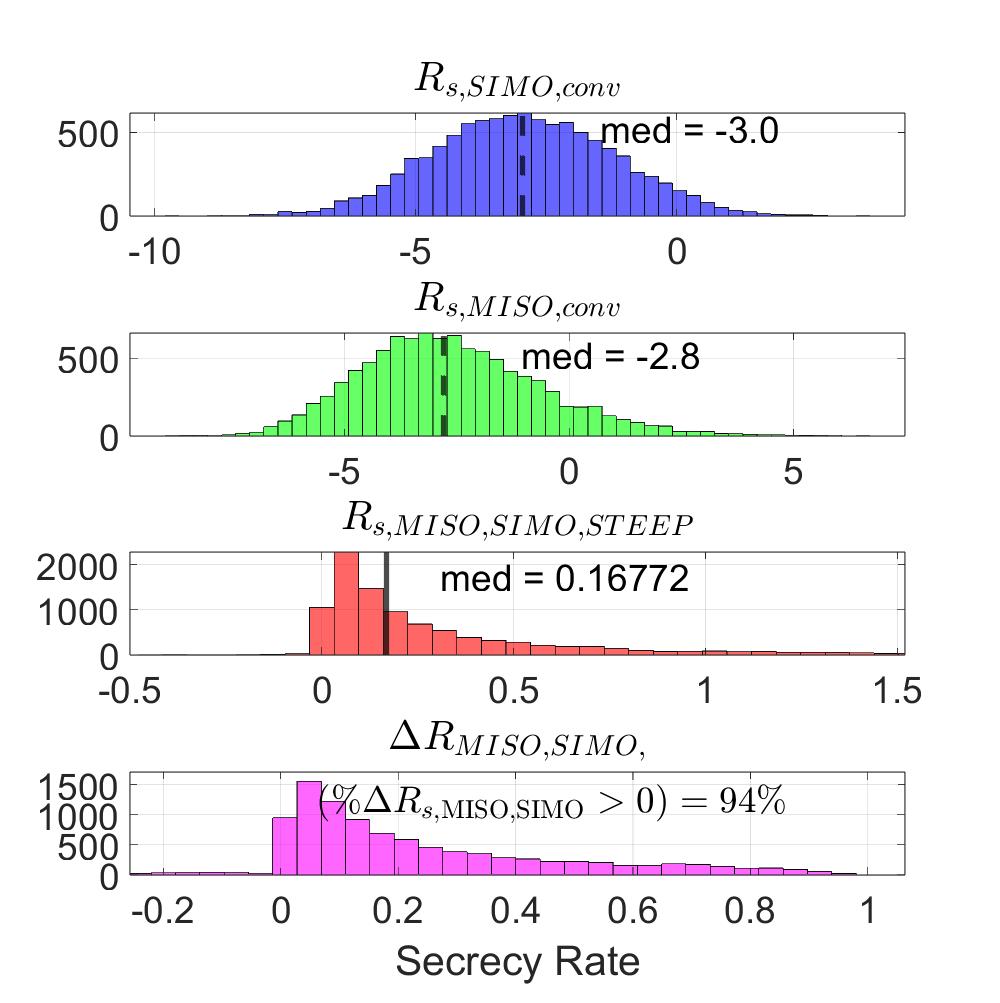}
\subcaption{$p_B=40$dB, $p_A=\bar p_{A,opt}=31$dB. }
\label{fig:secrecy_steep_pa_40}
\end{minipage}\\
\caption{Comparison of MISO-SIMO-STEEP with the conventional using $p_E=10$dB, $n_A=4$ and $n_E=4$.}
\label{fig:comp_optimized_miso_simo_steep_n_E_4}
\end{figure}
\begin{figure}[ht]
\begin{minipage}[b]{0.48\linewidth}
\centering
\includegraphics[width=4.5cm,height=4.5cm]{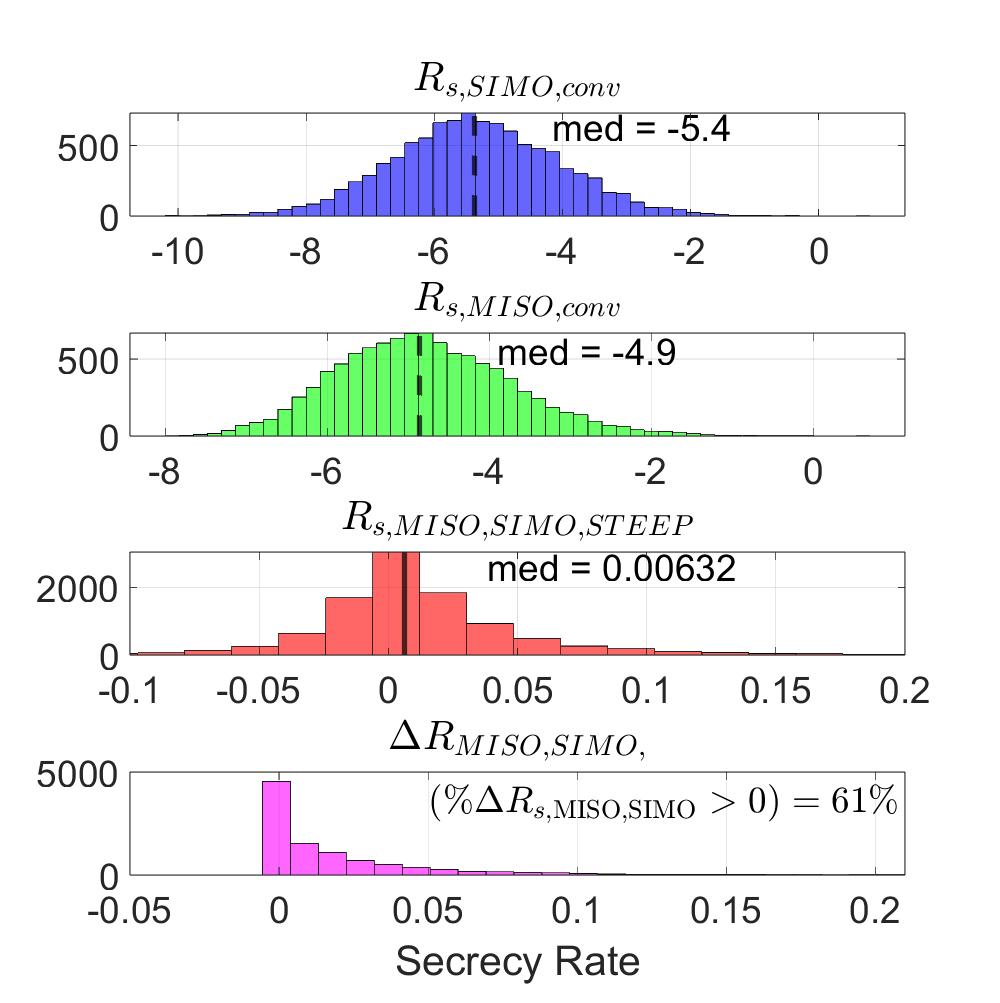}
\subcaption{$p_B=30$dB, $p_A=\bar p_{A,opt}=16.8$dB. }
\label{fig:secrecy_steep_pb_30}
\end{minipage}
\hspace{0.1cm}
\begin{minipage}[b]{0.48\linewidth}
\centering
\includegraphics[width=4.5cm,height=4.5cm]{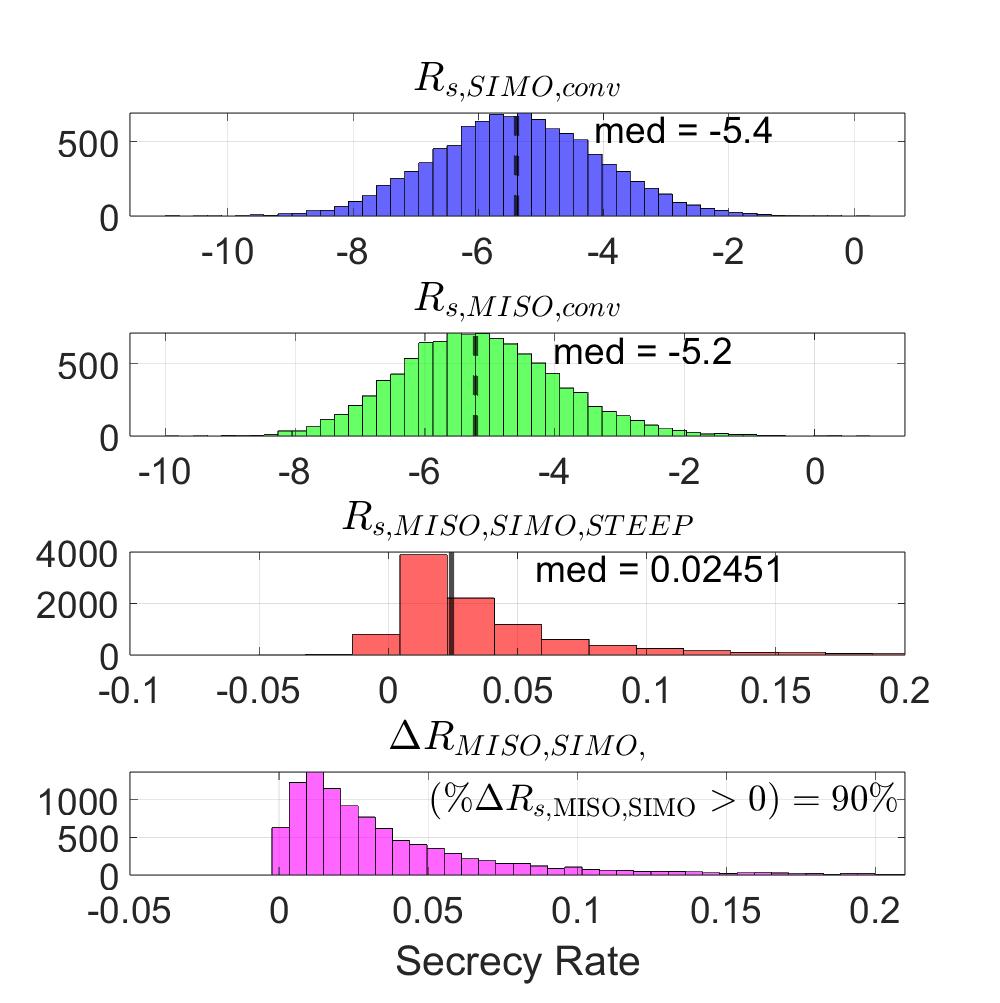}
\subcaption{$p_B=40$dB, $p_A=\bar p_{A,opt}=22.4$dB. }
\label{fig:secrecy_steep_pb_40}
\end{minipage}\\
\caption{Comparison of MISO-SIMO-STEEP with the conventional using $p_E=10$dB, $n_A=4$ and $n_E=6$.}
\label{fig:comp_optimized_miso_simo_steep_n_E_6}
\end{figure}
\begin{figure}[ht]
\begin{minipage}[b]{0.48\linewidth}
\centering
\includegraphics[width=5cm,height=4.5cm]{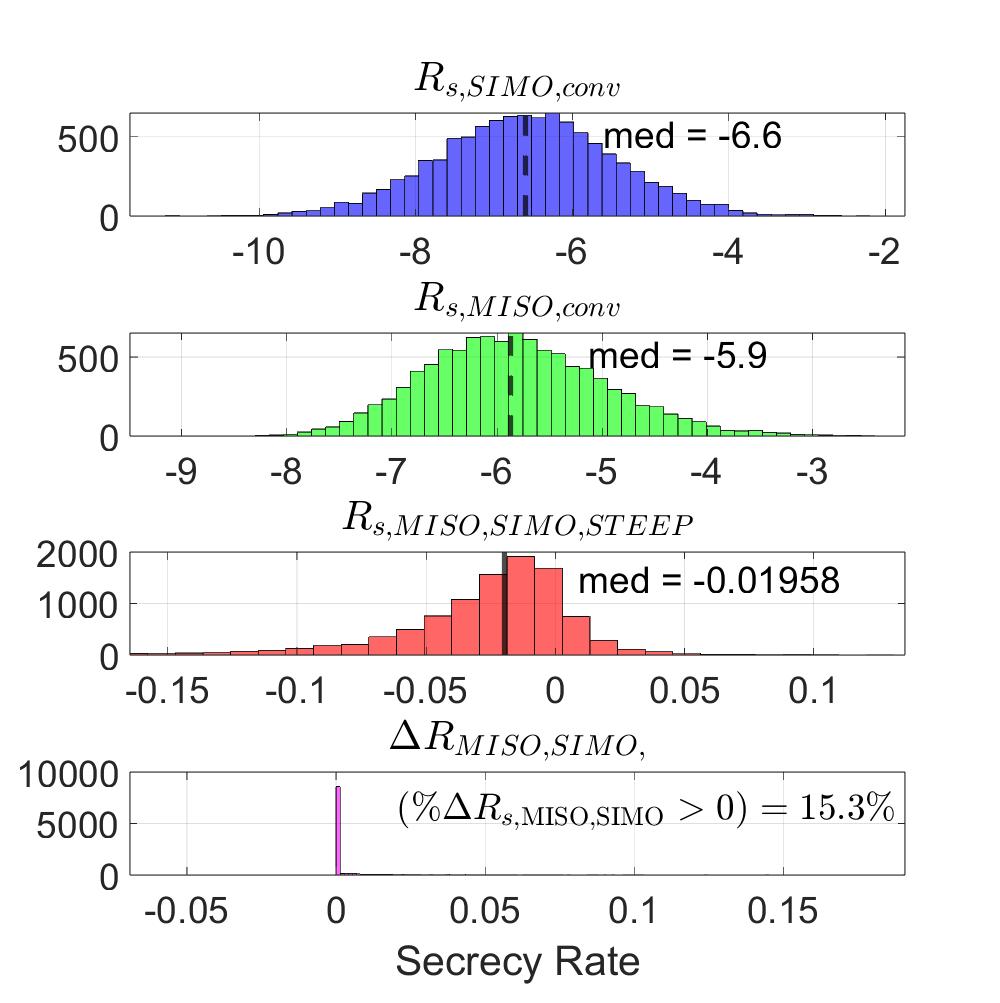}
\subcaption{$p_B=30$dB, $p_A=\bar p_{A,opt}=16.4$dB.}
\label{fig:secrecy_steep_pb_30}
\end{minipage}
\hspace{0.1cm}
\begin{minipage}[b]{0.48\linewidth}
\centering
\includegraphics[width=5cm,height=4.5cm]{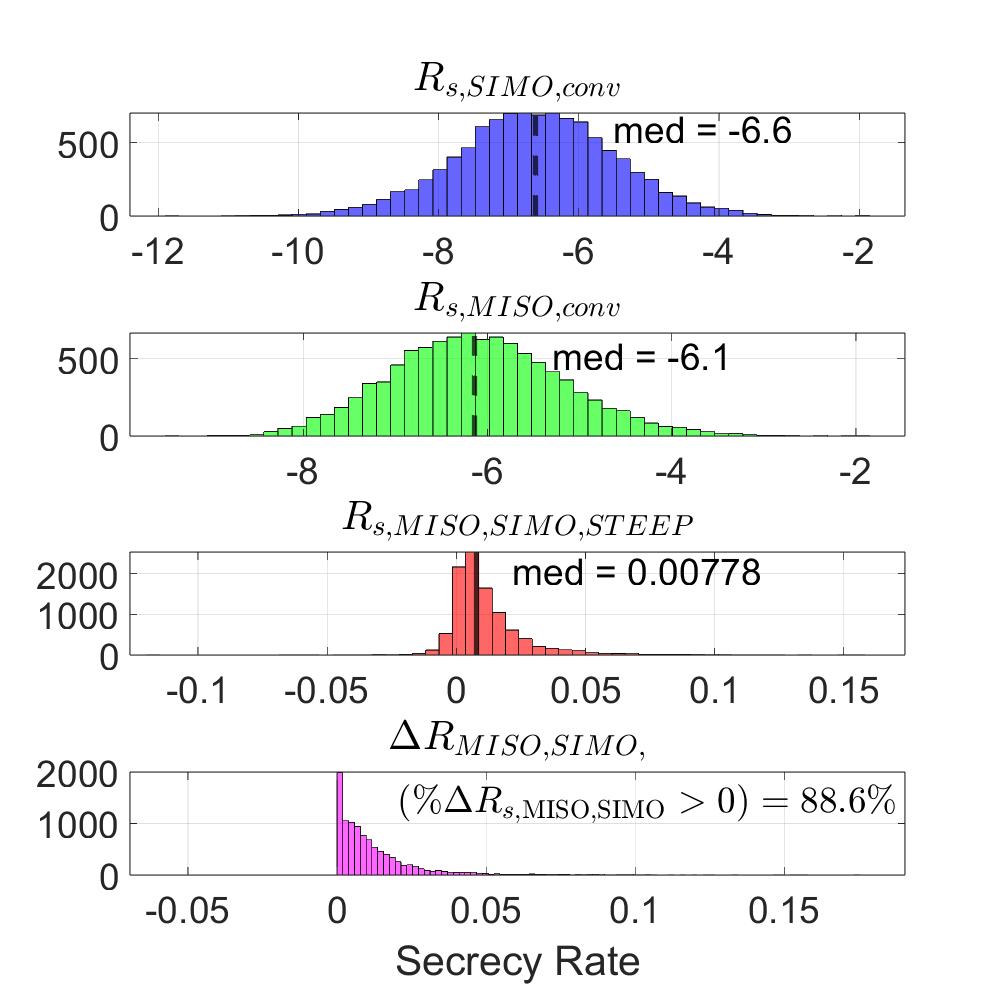}
\subcaption{$p_B=40$dB, $p_A=\bar p_{A,opt}=18.9$dB.}
\label{fig:secrecy_steep_pb_40}
\end{minipage}\\
\caption{Comparison of MISO-SIMO-STEEP with the conventional using $p_E=10$dB, $n_A=4$ and $n_E=8$.}
\label{fig:comp_optimized_miso_simo_steep_n_E_8}
\end{figure}

Shown in Figs. \ref{fig:comp_optimized_miso_simo_steep_n_E_4}, \ref{fig:comp_optimized_miso_simo_steep_n_E_6} and \ref{fig:comp_optimized_miso_simo_steep_n_E_8} are the histograms of the individual terms in \eqref{eq:Delta_M_S_STEEP} for $n_E=4$, $n_E=6$ and $n_E=8$ respectively. The key numbers from these figures are summarized in Table \ref{Table_2}.

Although most of the patterns in Figs  \ref{fig:comp_optimized_miso_simo_steep_n_E_4}-\ref{fig:comp_optimized_miso_simo_steep_n_E_8} are similar to those in Figs. \ref{fig:comp_optimized_simo_miso_steep_n_E_4}-\ref{fig:comp_optimized_simo_miso_steep_n_E_8}, we see that the secrecy rate advantage of MISO-SIMO-STEEP over the conventional SIMO and MISO schemes tends to be more than that of SIMO-MISO-STEEP. This is because subject to a dominant phase-2 power, the secrecy rate of STEEP is mostly governed by the phase-1 channels. See Proposition 2 in \cite{Hua_STEEP_2025}. Here the phase-1 channel for the users relative to Eve's phase-1 receive channel for MISO-SIMO-STEEP tends to be stronger than that for SIMO-MISO-STEEP. Also note that when Alice serves as the transmitter in phase 1, she injects artificial noise to make Eve's receive channel weaker. But when Bob serves as the transmitter in phase 1, he cannot inject any artificial noise (other than the probing signal) due to his single antenna.

It should be stressed that STEEP is meant for strongly eavesdropped channels where Eve's receive channel is stronger than the legitimate receiver's channel. If this is not the case, the conventional schemes should be used. In fact, for a given phase 2 power, if Eve's channel in phase 2 is weaker than user's channel in phase 2, the optimal phase 1 power of STEEP is typically zero, which reduces STEEP (in phase 2) to a conventional scheme.  Tables \ref{Table_1}-\ref{Table_2} also show that for a given phase 2 power (e.g., 30dB or 40dB), the median of the optimal power in phase 1 for SIMO-MISO-STEEP is much smaller than that for MISO-SIMO-STEEP. This is because it is more likely that the MISO user's channel in phase 2 (after optimal beamforming) is stronger than the MIMO Eve's channel in phase 2 than that the SIMO user's channel in phase 2 is stronger than the SIMO Eve's channel in phase 2.

 In the above examples, we assumed $p_E=10$dB. If $p_E$ increases, $p_A$ and $p_B$ need to increase accordingly in order for STEEP in either form to maintain a positive secrecy rate. As discussed earlier in section \ref{sec:complexity}, the received jamming powers from Eve (for which $p_E$ is a factor) could be measured online in real time by users before the values of $p_A$ and $p_B$ are chosen. We also chose $\gamma_A=0.5$ for all relevant schemes. The optimal $\gamma_A$ for SIMO-MISO-STEEP has been found to be one typically while the optimal $\gamma_A$ for MISO-SIMO-STEEP has been found to be typically around $1/n_A$. A good explanation of this remains open. But MISO-SIMO-STEEP with $\gamma_A=0.5$ still typically outperforms SIMO-MISO-STEEP with $\gamma_A=1$ in terms of secrecy rates.

\section{Conclusion}\label{sec:conclusion}
 A detailed investigation of STEEP for secure communications over SIMO and MISO channels between a multi-antenna Alice and a single-antenna Bob has been presented where a multi-antenna Eve capable for jamming and eavesdropping in full-duplex is assumed to be present. Subject to optimal jamming and eavesdropping by Eve against both Alice and Bob (also assuming that Eve knows all channels), the achievable secrecy rate of STEEP in either the SIMO-MISO-STEEP form or the MISO-SIMO-STEEP form has been analyzed, which is shown to be positive subject to a sufficiently large power from user in phase 2, regardless of the channel condition between every pair of nodes and finite jamming power from Eve. The connections between STEEP in either form and the conventional SIMO and MISO schemes have been highlighted. Assuming random fading channels, the secrecy rate of STEEP in either form has been compared via an extensive simulation with that of the conventional SIMO and MISO schemes. Our results have shown that, provided the phase-2 power from user dominates the jamming power from Eve, STEEP with a correspondingly proper phase-1 power (typically much smaller than the phase-2 power) outperforms the conventional schemes in achieving a positive secrecy rate especially when Eve has more antennas than Alice. This is consistent with the general theory shown in \cite{Hua_STEEP_2025} since the jamming noise from Eve can be viewed as an additional noise at user. The comparison was done in terms of the secrecy rate of STEEP in bits per round-trip channel use and the sum secrecy rate of the conventional SIMO and MISO schemes in bits per channel use subject to the same set of transmission powers from all nodes (Alice, Bob and full-duplex Eve). While STEEP beats in secrecy rate the classic wiretap channel scheme exploited in numerous prior works such as \cite{Thien2022}-\cite{Abdalla2023} subject to an aggressive and/or strong Eve, more tailored applications of STEEP for the specific settings in those prior works remain unexplored. More attention in the future should be paid to heavily eavesdropped channels with  aggressive eavesdroppers (who cannot be further weakened by other affordable means), for which this paper has provided an example study. This is because Eve's channel and its number of antennas in many situations are largely unknown and hence have to be assumed to be as aggressive as practically possible.

 For readers interested in applying STEEP to some specific network architecture, it is important to note that the channel model adopted in this paper is quite generic. For example, the channel vector $\mathbf{h}_{A,B}$ in \eqref{eq:yA1k} represents a composite channel response vector from the baseband symbols transmitted by Bob to the baseband symbols received by Alice, which include all effects of radio frequency chains, multipath propagation, passive and/or active reconfigurable intelligence surfaces \cite{Sun2024}, metasurfaces \cite{Sun2025}, and/or  the steering vectors on high altitude platforms \cite{Lin2025}. Clearly, for different network architecture (with or without any optimization to maximize user's channel gain and/or minimize Eve's channel gain), the statistical behaviour of the generic channel vectors and matrices defined in this paper would be different, and hence so would be the statistical behavior of the secrecy rate of STEEP. Many new problems of STEEP, including user's channel estimation errors and fast-fading channels, still await to be investigated.


\end{document}